\definecolor{myurlcolor}{rgb}{0,0,0.9}
\newcommand{\proj}[1]{| #1\rangle\!\langle #1 |}
\newcommand{\inner}[2]{\langle #1 , #2\rangle}
\DeclareMathOperator{\trace}{Tr}
\newcommand{\Ptr}[2]{\trace_{#1}\Pa{#2}}
\newcommand{\Tr}[1]{\Ptr{}{#1}}
\newcommand{\Pa}[1]{\left[#1\right]}
\newcommand{\norm}[1]{\left\lVert #1 \right\rVert}
\theoremstyle{plain}
\newtheorem{thm}{Theorem}
\newtheorem{lem}[thm]{Lemma}
\newtheorem{prop}[thm]{Proposition}
\newtheorem{cor}[thm]{Corollary}
\theoremstyle{definition}
\newtheorem{example}{Example}
\newcommand*{\myproofname}{Proof}
\def\ot{\otimes}
\def\real{\mathbb{R}}
\def \supp {\mathrm{supp}}
\DeclareMathAlphabet{\mathcal}{OMS}{cmsy}{m}{n}
\begin{document}

 \author{Kaifeng Bu}
\email{kfbu@fas.harvard.edu}
\affiliation{Department of Physics, Harvard University, Cambridge, Massachusetts 02138, USA}

\author{Dax Enshan Koh\orcidlink{0000-0002-8968-591X}}
 \email{dax\textunderscore koh@ihpc.a-star.edu.sg}
\affiliation{Institute of High Performance Computing, Agency for Science, Technology and Research (A*STAR), 1 Fusionopolis Way, \#16-16 Connexis, Singapore 138632, Singapore}

\author{Roy J. Garcia}
\email{roygarcia@g.harvard.edu}
\affiliation{Department of Physics, Harvard University, Cambridge, Massachusetts 02138, USA}

\author{Arthur Jaffe}
\email{arthur\textunderscore jaffe@harvard.edu}
\affiliation{Department of Physics, Harvard University, Cambridge, Massachusetts 02138, USA}

\title{Classical shadows with Pauli-invariant unitary ensembles}

\begin{abstract}
The classical shadow estimation protocol is a noise-resilient and sample-efficient quantum algorithm for learning the properties of quantum systems. Its performance depends on the choice of a unitary ensemble, which must be chosen by a user in advance. What is the weakest assumption that can be made on the chosen unitary ensemble that would still yield meaningful and interesting results? To address this question, we consider the class of  Pauli-invariant unitary ensembles, i.e.~unitary ensembles that are invariant under multiplication by a Pauli operator. This class includes many previously studied ensembles like the local and global Clifford ensembles as well as locally scrambled unitary ensembles. For this class of ensembles, we provide an explicit formula for the reconstruction map corresponding to the shadow channel and give explicit sample complexity bounds. 
In addition, we provide two applications of our results. Our first application is to locally scrambled unitary ensembles, where we give explicit formulas for the reconstruction map and sample complexity bounds that circumvent the need to solve an exponential-sized linear system. Our second application is to the classical shadow tomography of quantum channels with Pauli-invariant unitary ensembles. Our results pave the way for more efficient or robust protocols for predicting important properties of quantum states, such as their fidelity, entanglement entropy, and quantum Fisher information.

\end{abstract}

\maketitle

\section{Introduction}
Learning the properties of an unknown but physically accessible quantum system is a fundamental task in quantum information processing \cite{huang2020predicting}. A standard tool for this task is quantum tomography, a process by which one recovers a classical description of a quantum system through  performing measurements on it. Unfortunately, finding the full description of a quantum system by quantum tomography is  computationally-intensive; it  requires an exponential number of copies of the system \cite{aaronson2019shadow, haah2017sample,o2016efficient}.

Recently, Huang, Kueng, and Preskill introduced a novel method---the \textit{classical shadow} paradigm~\cite{huang2020predicting}---to circumvent the above limitation. A key insight behind classical shadows rests on the fact that in many cases one does not need to learn a complete description of a quantum system; one can learn its most useful properties from
a minimal sketch  of the quantum system, the \textit{classical shadow}.

The performance of the classical shadow protocol depends on several factors. In particular, it depends on an ensemble of unitary operators from which an operation is chosen randomly to be applied to the unknown quantum state. 
A user must choose this ensemble in advance, according to some desiderata, such as the need for the shadow channel to be invertible and for one to have an efficient algorithm for sampling a unitary from the ensemble. Several unitary ensembles have been considered by previous authors, including the local and global Clifford ensembles~\cite{huang2020predicting}, fermionic Gaussian unitaries~\cite{zhao2021fermionic}, chaotic Hamiltonian evolutions~\cite{hu2021hamiltonian}, and locally scrambled unitary ensembles~\cite{hu2021classical}.

One can ask: what is the weakest assumption on the unitary ensemble that would still yield meaningful and interesting results? Our candidate solution to this question is the assumption that the unitary ensemble is invariant under multiplication by a Pauli operator. Ensembles satisfying this assumption---namely the Pauli-invariant unitary ensembles---include a wide range of ensembles, including the Pauli group and the aforementioned local and global Clifford group and locally scrambled unitary ensembles.

Let us summarize our main contributions in this work: For Pauli-invariant unitary ensembles, we find  an explicit formula for the reconstruction map corresponding to the shadow channel. We establish a connection between the coefficients of the reconstruction map and the entanglement features of the dynamics. We  give upper bounds on the sample complexity in terms of the shadow norm for the task of expectation estimation using the classical shadow protocol.

We also give two applications of our methods. First, we apply our results to locally scrambled unitary ensembles, giving the reconstruction map explicitly, along with sample complexity bounds. Unlike the approach taken in \cite{hu2021classical}, our study circumvents the need to solve an exponential-sized system of linear equations. Second, we apply our results to the shadow process tomography of quantum channels, where we generalize the results of \cite{levy2021classical, kunjummen2021shadow} to allow for any Pauli-invariant unitary ensemble. 

\noindent\textbf{\textit{Related work}}---The classical shadow paradigm, with the sample efficiency it touts, has attracted considerable attention over the last couple of years \cite{huang2022learning}. Several applications have been proposed, including applying the classical shadow framework to estimate expectation values of molecular Hamiltonians~\cite{hadfield2020measurements, hadfield2021adaptive}, detecting or estimating the degree of entanglement in quantum systems~\cite{huang2020predicting,elban2020mixed, neven2021symmetry, rath2021quantum}, classifying quantum data~\cite{li2021vsql}, measuring out-of-time-ordered correlators~\cite{garcia2021quantum},  approximating wave function overlaps~\cite{huggins2021unbiasing}, solving quantum many-body problems~\cite{huang2020predicting, huang2021provably, notarnicola2021randomized}, estimating gate-set properties~\cite{helsen2021estimating}, and avoiding barren plateaus~\cite{sack2022avoiding}. Noise analyses of the classical shadow protocol have been performed, with noise-resilient versions developed~\cite{chen2021robust,koh2020classical,flammia2021averaged}. 

Authors have proposed variants and generalizations of the protocol, including locally-biased classical shadows~\cite{hadfield2020measurements}, derandomization techniques~\cite{huang2021efficient}, decision diagram techniques~\cite{hillmich2021decision}, grouping strategies~\cite{wu2021overlapped},
POVM-based classical shadows~\cite{acharya2021informationally}, classical shadows with locally-scrambled unitary ensembles~\cite{hu2021classical}, and classical shadows with unitary ensembles generated by a Hamiltonian~\cite{hu2021hamiltonian}. Extensions have been found to fermions~\cite{zhao2021fermionic} and to quantum channels~\cite{levy2021classical,kunjummen2021shadow}. With respect to certain figures-of-merit, the protocol has been shown to give an advantage over alternative approaches \cite{hadfield2020measurements,lukens2021classical} and has been analyzed theoretically from a Bayesian point of view \cite{lukens2021bayesian}.
Lower bounds on the performance of classical shadows have also been proved~\cite{huang2020predicting,chen2021exponential}.

\section{Background: Classical shadows}

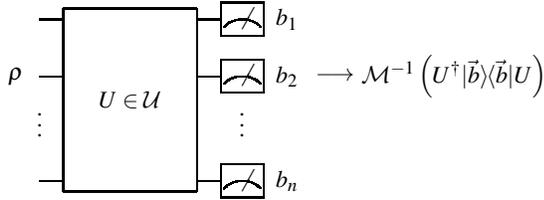
\begin{figure}
    \centering
    \begin{align*}
\Qcircuit @C=1em @R=1em {
& &    & \multigate{4}{ \quad   U\in\mathcal U  \quad }  &   \meter & b_1 \\ 
&\rho  &       &\ghost{ \quad U\in\mathcal U \quad } &  \meter & b_2 & & \longrightarrow  & &  & & &  \mathcal M^{-1}\left(U^\dag \proj{\vec{b}}U\right)
& & & & & \\
&  &  \vdots & & \vdots \\
& & &  \\
&  &     &\ghost{ \quad U\in\mathcal U \quad }  & \meter  & b_n \\    
}  
\end{align*}
    \caption{Circuit diagram representing a single round of the classical shadow protocol. The output of the protocol is the classical shadow given by $\mathcal M^{-1}\left(U^\dag \proj{\vec{b}}U\right)$.}
    \label{fig:classical_shadows}
\end{figure}

Let us begin by describing the classical shadow protocol (see also Figure 1, where we present a circuit diagram for a single round of the protocol) that was introduced in \cite{huang2020predicting}. The input of the protocol is an unknown but physically accessible quantum state $\rho$ and the output of the protocol is a set of classical shadows $\{\hat{\rho}_i\}_i$ that are unbiased estimators of $\rho$, i.e.~the expectation value of each $\hat{\rho}_i$ is precisely $\rho$. In an estimation task, these classical shadows may subsequently be used to estimate expectation values of given quantum observables in the state $\rho$.

Before the protocol is run, a user will have to pre-decide on a unitary ensemble $\mathcal{E}=\set{(U,P(U))}_{U\in \mathcal U}$, where $\mathcal U$ is a set of unitary operators and $P$ is a probability distribution on $\mathcal U$. The first step of the protocol involves choosing a random unitary $U$ from the unitary ensemble $\mathcal{E}$ according to the specified distribution $P(U)$. The unitary $U$ is then applied to $\rho$ and a computational basis measurement is performed on the resultant state to obtain an $n$-bit string $\vec b = b_1b_2\ldots b_n \in \{0,1\}^n$. The random unitary $U$ and random string $\vec b$ obtained are then combined to give the state \begin{align}
\hat{\sigma}_{U,\vec b}
=U^\dag \proj{\vec{b}}U,
\end{align}
which is stored in classical memory. Note that this produces an ensemble of states
\begin{align}
\mathcal{E}_{\rho}
=\left\{
\left(\hat{\sigma}_{U,\vec b},
P(U, \vec b)
\right)
\right\}_{U,\vec b},
\label{eq:ensembleState}
\end{align}
where $P(U,\vec b)=P(U)P(\vec b|U)$ and $P(\vec b|U)=\Tr{\hat{\sigma}_{U,\vec b}\rho}$, as given by the Born rule. The expected value of this ensemble is given by
\begin{align}
    \sigma=\mathbb{E}
_{\hat{\sigma}\in \mathcal{E}_{\rho}}
\hat{\sigma}
=\mathbb{E}_U
\sum_{\vec b}
\hat{\sigma}_{U,\vec b}
\Tr{\hat{\sigma}_{U,\vec b}\rho}
:=\mathcal{M}[\rho],
\end{align}
where $\mathcal M$---called the \textit{shadow channel}---is a completely positive and trace-preserving map.
To construct the classical shadow, the shadow channel needs to be invertible. If the conditions for invertibility are met, the inverse of the shadow channel $\mathcal M^{-1}$, called the \textit{reconstruction map}, is applied to the classically stored $\hat{\sigma}_{U,\vec b}$
to obtain the classical snapshot $\hat\rho = \mathcal{M}^{-1}[\hat{\sigma}_{U,\vec b}]$, which is called the \textit{classical shadow}.
As required, $\hat{\rho}$ is an unbiased estimator of $\rho$:
\begin{align}
\rho=\mathcal{M}^{-1}[\sigma]
=\mathop{\mathbb{E}}_{\hat{\sigma}\in\mathcal{E}_{\rho}}
\mathcal{M}^{-1}[\hat{\sigma}] = \mathop{\mathbb E}\limits_{U,b} \mathcal M^{-1}(\hat{\sigma}_{U,b})
=\mathop{\mathbb E}\limits_{U,b} \left[\hat \rho\right]
.    
\end{align}

Note that while $\mathcal M^{-1}$ is trace-preserving, it is not necessarily completely positive (this is fine since $\hat\rho$ does not need to be a quantum state). One repeats this process multiple times to construct an ensemble $\{\hat{\rho}_i\}_i$ of classical shadows. The number of times this process needs to be repeated---which translates to the number of samples of $\rho$ needed---depends on the number $N$ of different properties of $\rho$ that one wants to estimate.  Since quantum measurements are destructive, one might think naively that one would need at least $N$ samples, where at least one sample is used for each property. Surprisingly, with classical shadows, an order of $\log N$ samples is sufficient \cite{huang2020predicting}.

As illustrated above, the reconstruction map  $\mathcal{M}^{-1}$ plays a pivotal role in the classical shadow protocol. However, for arbitrary unitary ensembles, no general closed-form analytic formula is known for $\mathcal{M}^{-1}$. Instead, prior to this study, it was only for a handful of unitary ensembles (e.g.~the local and global Clifford ensembles) that analytic expressions have been derived. In this study, We address this gap by presenting an explicit formula for the reconstruction map for the wide-class of  Pauli-invariant unitary ensembles, thereby enlarging the class of ensembles for which such expressions are known.



\section{Main results}

\subsection{Noiseless classical shadows}

Let us now introduce the Pauli-invariant unitary ensembles, and derive results for the performance of classical shadow protocols that utilize such ensembles. For simplicity, we shall start by assuming that the classical shadow protocol is implemented perfectly without any noise. This assumption will subsequently be relaxed in Section \ref{sec:noisy_classical_shadow}.


\textit{\textbf{Pauli-invariant unitary ensemble}}---We denote the set of Pauli operators on $n$ qubits by $\mathcal{P}_n=\set{P_{\vec{a}}=\ot_iP_{a_i} : \vec{a}\in V^n}$, where $V:= \mathbb Z_2 \times \mathbb Z_2 = \set{(0,0),(0,1),(1,0),(1,1)}$, and
$P_{(x,z)} = i^{xz} X^x Z^z$, i.e.~$P_{(0,0)}= \mathbb{I}$, $P_{(0,1)}=Z$, $P_{(1,0)}=X$, $P_{(1,1)}=Y$. Given a unitary ensemble 
$\mathcal{E}=\set{(U,P(U))}_U$, $\mathcal{E}$ is called \textit{Pauli-invariant}  if 
the probability distribution $P(U)$ satisfies the following (right) Pauli-invariant condition, that is 
\begin{align}
P(U)=P(UP_{\vec{\sigma}}), \quad\forall P_{\vec{\sigma}}\in \mathcal{P}_n.
\end{align}
This is a weak assumption, and many unitary ensembles satisfy this condition, for example, (1) the Pauli group $\set{\pm 1, \pm i} \times \mathcal P_n$;
(2) $D$-dimensional local, random quantum circuits \cite{harrow2018approximate};
(3) local Clifford unitaries; and
(4) global Clifford unitaries. 

For Pauli-invariant unitary ensembles, we will now provide an explicit formula for the shadow channel  $\mathcal{M}$ and the corresponding reconstruction map $\mathcal{M}^{-1}$, if it exists. For clarity of presentation, the detailed proofs of all the results in this subsection are provided in Appendix \ref{sec:noiseless_classical_shadows}.

\begin{thm}\label{thm:main1}
For a Pauli-invariant unitary ensemble $\mathcal{E}$, the shadow channel can be written as
\begin{align}
\mathcal{M}[\rho]
=\frac{1}{2^{n}}\sum_{\vec{a}\in V^n}
W_{\mathcal{E}}[\vec{a}]\Tr{\rho P_{\vec{a}}}P_{\vec{a}},
\end{align}
where 
$W_{\mathcal{E}}[\vec{a}]$ is the average squared Pauli coefficient of the classical shadow, defined as 

\begin{align}
W_{\mathcal{E}}[\vec{a}]=\mathbb{E}_{\vec b}\mathbb{E}_U
W_{\hat{\sigma}_{U,\vec b}}[\vec{a}],
\end{align}
where $\mathbb{E}_{\vec b}:=\frac{1}{2^n}\sum_{\vec b}$ denotes the average with respect to the uniform distribution over $n$-bit strings. For a state $\sigma$, we have used the notation  $W_\sigma[\vec{a}]:=|\Tr{\sigma P_{\vec{a}}}|^2$
to denote the square of the $\vec a$-th Pauli coefficient of $\sigma$.

For Pauli-invariant unitary ensembles, the reconstruction map $\mathcal{M}^{-1}$ exists if and only if
$W_{\mathcal{E}}[\vec{a}]>0$ for all $\vec{a}$. If it exists, the reconstruction map 
is completely specified by its action on the Pauli basis elements as follows:
\begin{align}
\mathcal{M}^{-1}[P_{\vec{a}}]
=W_{\mathcal{E}}[\vec{a}]^{-1}P_{\vec{a}}.
\end{align}

\end{thm}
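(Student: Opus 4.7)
The plan is to exploit Pauli invariance to diagonalize $\mathcal{M}$ in the Pauli basis, after which all three claims (the explicit formula, invertibility criterion, and formula for the inverse) fall out at once.

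First, I would expand the input in the Pauli basis, $\rho = \frac{1}{2^n}\sum_{\vec{a}} \text{Tr}[\rho P_{\vec{a}}]\,P_{\vec{a}}$, so by linearity it suffices to compute $\mathcal{M}[P_{\vec{a}}]$. I would also expand the rank-one projector $\hat{\sigma}_{U,\vec{b}} = U^\dag |\vec{b}\rangle\!\langle \vec{b}|U$ in the Pauli basis, obtaining
\begin{align}
\mathcal{M}[P_{\vec{a}}] = \frac{1}{2^{n}}\sum_{\vec{c}\in V^n} P_{\vec{c}}\,\mathbb{E}_U\sum_{\vec{b}}\text{Tr}[\hat{\sigma}_{U,\vec{b}} P_{\vec{c}}]\,\text{Tr}[\hat{\sigma}_{U,\vec{b}} P_{\vec{a}}].
\end{align}

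The key step is to show the double sum vanishes unless $\vec{c}=\vec{a}$. To do this I would invoke Pauli invariance: since $P(U)=P(UP_{\vec{\sigma}})$ for every $\vec{\sigma}$, the expectation $\mathbb{E}_U$ is unchanged when $U\mapsto UP_{\vec{\sigma}}$, and hence is unchanged if we additionally average over $\vec{\sigma}\in V^n$. Using the conjugation identity $P_{\vec{\sigma}}^\dag P_{\vec{c}} P_{\vec{\sigma}} = (-1)^{\langle\vec{\sigma},\vec{c}\rangle} P_{\vec{c}}$ (where $\langle \cdot , \cdot \rangle$ is the symplectic inner product on $V^n$), each trace picks up a sign $(-1)^{\langle\vec{\sigma},\vec{c}\rangle}$ and $(-1)^{\langle\vec{\sigma},\vec{a}\rangle}$ respectively. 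The product has sign $(-1)^{\langle\vec{\sigma},\vec{c}+\vec{a}\rangle}$, and averaging over $\vec{\sigma}$ gives $\delta_{\vec{c},\vec{a}}$ by the standard character-sum (nondegeneracy of the symplectic form). Thus only the term $\vec{c}=\vec{a}$ survives, and the surviving inner expectation is exactly $\mathbb{E}_U\sum_{\vec{b}}|\text{Tr}[\hat{\sigma}_{U,\vec{b}} P_{\vec{a}}]|^2 = 2^n\, W_{\mathcal{E}}[\vec{a}]$ by the definition of $W_{\mathcal{E}}[\vec{a}]$ (absorbing the normalization $\mathbb{E}_{\vec{b}}=\frac{1}{2^n}\sum_{\vec{b}}$). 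This yields $\mathcal{M}[P_{\vec{a}}]=W_{\mathcal{E}}[\vec{a}]\,P_{\vec{a}}$, and substituting back into the Pauli expansion of $\rho$ gives the claimed formula.

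For the remaining two claims, notice that the computation above shows $\mathcal{M}$ is diagonal in the orthogonal Pauli basis $\{P_{\vec{a}}\}$ with eigenvalues $W_{\mathcal{E}}[\vec{a}]$. Since each $W_{\mathcal{E}}[\vec{a}]\geq 0$ is a sum of squared moduli, the map is invertible if and only if none of them vanish, and in that case $\mathcal{M}^{-1}[P_{\vec{a}}]=W_{\mathcal{E}}[\vec{a}]^{-1}P_{\vec{a}}$ by simply inverting the diagonal action. The main obstacle I anticipate is the bookkeeping around the Pauli-twirl argument — in particular, verifying that conjugation by $P_{\vec{\sigma}}$ only produces a sign (no extra phases surviving the absolute value), and that averaging over $\vec{\sigma}\in V^n$ reduces to $\delta_{\vec{c},\vec{a}}$ via nondegeneracy of the symplectic form. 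Everything else is algebra in the Pauli basis.
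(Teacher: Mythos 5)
Your proposal is correct and follows essentially the same route as the paper's proof in Appendix~\ref{sec:noiseless_classical_shadows}: expand in the Pauli basis, use Pauli invariance to insert an average over $P_{\vec{\sigma}}$, apply $P_{\vec{\sigma}}P_{\vec{a}}P_{\vec{\sigma}}=(-1)^{\inner{\vec{\sigma}}{\vec{a}}_s}P_{\vec{a}}$, and collapse the double sum via $\mathbb{E}_{\vec{\sigma}}(-1)^{\inner{\vec{\sigma}}{\vec{a}+\vec{c}}_s}=\delta_{\vec{a},\vec{c}}$, so that $\mathcal{M}$ is diagonal with eigenvalues $W_{\mathcal{E}}[\vec{a}]$ and inverts entrywise. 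Your diagonalization argument for the ``if and only if'' invertibility claim is in fact slightly more explicit than the paper's, which only states the inverse when all $W_{\mathcal{E}}[\vec{a}]>0$.
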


Now, let us discuss the connection between the coefficients of the reconstruction map
$W_{\mathcal{E}}$ and the 2nd entanglement feature  \cite{you2018machine,you2018entanglement}, which has been used to describe the entangling power of unitary ensembles and is defined as 
follows:
\begin{align}
E^{(2)}_{\mathcal{E}}[A]
=\mathbb{E}_{U}\mathbb{E}_{\vec b}
e^{-S^{(2)}_A(\hat{\sigma}_{U,b})},
\end{align}
where $A$ is a subset of $[n]$,  $S^{(2)}_A(\sigma)=-\log\Tr{\sigma^2_A}$, where $\sigma_A=\text{Tr}_{A^c}{[\sigma]}$ denotes the 2nd R\'enyi entanglement entropy 
of the state $\sigma$  on the subset $A$. 
For any subset $S\subset [n]$, let us define $W_{\mathcal{E}}[S]$ as the sum of the 
coefficients whose support is $S$, i.e., 
\begin{align}
W_{\mathcal{E}}[S]=\sum_{\vec{a}:\supp(\vec{a})=S}
W_{\mathcal{E}}[\vec{a}],
\end{align}
where $\supp(\vec{a})$ denotes the support of the vector $\vec{a}$. The following proposition expresses $W_{\mathcal{E}}
[S]$ in terms of the 2nd entanglement feature.
\begin{prop}\label{prop:entfeature}
The coefficients of the reconstruction map can be expressed as follows:

\begin{align}
W_{\mathcal{E}}
[S]=(-1)^{|S|}
\sum_{A\subset [S]}
(-2)^{|A|}
E^{(2)}_{\mathcal{E}}[A].
\end{align}

\end{prop}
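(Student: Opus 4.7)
The plan is to express the 2nd entanglement feature $E^{(2)}_{\mathcal{E}}[A]$ as a sum of the $W_{\mathcal{E}}[S]$ over subsets $S \subset A$, and then recover the proposition by Möbius inversion on the Boolean lattice of subsets.

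First I would expand a single snapshot $\hat{\sigma} = \hat{\sigma}_{U,\vec{b}}$ in the Pauli basis as $\hat{\sigma} = 2^{-n} \sum_{\vec{a}} \mathrm{Tr}(\hat{\sigma} P_{\vec{a}}) P_{\vec{a}}$. Tracing out $A^c$ annihilates every Pauli that is non-identity anywhere outside $A$, since $\mathrm{Tr}(P_{a_i}) = 2\delta_{a_i,(0,0)}$ on each qubit. This gives
\begin{align}
\hat{\sigma}_A = \frac{1}{2^{|A|}} \sum_{\vec{a}:\,\supp(\vec{a}) \subset A} \mathrm{Tr}(\hat{\sigma} P_{\vec{a}})\, P_{\vec{a}|_A}.
\end{align}
Squaring, taking the trace, and using the orthogonality relation $\mathrm{Tr}(P_{\vec{a}|_A} P_{\vec{a}'|_A}) = 2^{|A|}\delta_{\vec{a},\vec{a}'}$ (for Paulis supported in $A$) collapses the double sum and yields
\begin{align}
\mathrm{Tr}(\hat{\sigma}_A^2) = \frac{1}{2^{|A|}} \sum_{\vec{a}:\,\supp(\vec{a}) \subset A} W_{\hat{\sigma}}[\vec{a}].
\end{align}

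Next I would take the expectation over $U$ and $\vec{b}$. By definition, $E^{(2)}_{\mathcal{E}}[A] = \mathbb{E}_{U,\vec{b}}\,\mathrm{Tr}(\hat{\sigma}_{U,\vec{b},A}^2)$, and grouping Pauli indices by their support $S$ gives
\begin{align}
E^{(2)}_{\mathcal{E}}[A] = \frac{1}{2^{|A|}} \sum_{S \subset A} W_{\mathcal{E}}[S],
\end{align}
or equivalently $2^{|A|} E^{(2)}_{\mathcal{E}}[A] = \sum_{S \subset A} W_{\mathcal{E}}[S]$.

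Finally, I would apply standard Möbius inversion on the Boolean lattice (inclusion-exclusion): if $f(A) = \sum_{S \subset A} g(S)$, then $g(S) = \sum_{A \subset S} (-1)^{|S|-|A|} f(A)$. Setting $f(A) = 2^{|A|} E^{(2)}_{\mathcal{E}}[A]$ and $g(S) = W_{\mathcal{E}}[S]$, and using $(-1)^{|S|-|A|} 2^{|A|} = (-1)^{|S|}(-2)^{|A|}$, produces exactly the claimed identity. There is no real obstacle here — the only thing to handle carefully is the partial-trace/orthogonality computation in the first step; once that is in place, the rest is elementary inclusion–exclusion.
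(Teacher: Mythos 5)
Your proof is correct, but it runs in the opposite direction from the paper's and rests on a different key lemma. You compute the purity directly in the Pauli basis, $\Tr{\hat{\sigma}_A^2}=2^{-|A|}\sum_{\vec{a}:\,\supp(\vec{a})\subset A}W_{\hat{\sigma}}[\vec{a}]$, so that the entanglement feature becomes the cumulative sum $2^{|A|}E^{(2)}_{\mathcal{E}}[A]=\sum_{S\subset A}W_{\mathcal{E}}[S]$, and then you invert by M\"obius inversion on the subset lattice. The paper instead proves the inverted identity directly: it introduces the single-qubit erasure channel $D[\cdot]=\Tr{\cdot}\mathbb{I}/2$ and the idempotent map $\mathcal{L}=\mathrm{id}-D$, shows (Lemmas \ref{eq:dire_1} and \ref{eq:dire_2}) that $\inner{\mathcal{L}^S[\rho_S]}{\rho_S}$ equals both $2^{-|S|}W_{\rho}[S]$ and the inclusion--exclusion sum $\sum_{A\subset S}(-1)^{|A|}\norm{D^A[\rho_S]}_2^2$, and then rewrites $\norm{D^A[\hat{\sigma}_S]}_2^2=2^{-|A|}\Tr{\hat{\sigma}_{S\setminus A}^2}$ to reach the claimed formula. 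The underlying combinatorics is the same in both arguments---expanding the tensor product $\ot_{i\in S}(\mathrm{id}-D_i)$ is exactly the inclusion--exclusion that inverts your cumulative-sum relation---but your version is arguably cleaner: it avoids the superoperator machinery and the paper's unproved ``it is easy to verify'' equivalence step, and your intermediate identity $2^{|A|}E^{(2)}_{\mathcal{E}}[A]=\sum_{S\subset A}W_{\mathcal{E}}[S]$ is a transparent statement of independent interest. What the paper's route buys is the explicit erasure-channel decomposition $D^A$, which it reuses immediately afterwards to derive the coefficients $r_S$ of the reconstruction map in Proposition \ref{prop:r_s}; note that the paper invokes the same inclusion--exclusion lemma there, so your inversion step uses a tool the paper already has on hand.
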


Now, let us consider the sample complexity for the task of expectation estimation using
classical shadows with Pauli-invariant unitary ensembles. We shall make use of the result from \cite{huang2020predicting} that the sample complexity of classical shadows with the input state $\rho$ and an observable $O$ is upper-bounded 
by the following (squared) shadow norm:
\begin{align}\label{ShadowNormDefn}
 \norm{O}^2_{\mathcal{E}_{\rho}}
=\mathbb{E}_{\mathcal{E}_{\rho}}\hat{o}[\hat{\sigma}]^2,
\end{align}
where the estimator $\hat{o}$ of the observable $O$ is given by $
\hat{o}
  =\Tr{O\mathcal{M}^{-1}[\hat{\sigma}]}
$.
More precisely, the sample complexity $\mathscr S$ needed to accurately predict a collection of $N$ linear 
target functions $\set{\Tr{O_i \rho}}^N_{i=1}$  with error $\epsilon$ and failure probability $\delta$ is \cite{huang2020predicting}
\begin{align*}
    \mathscr S = O\left(
    \frac{\log(N/\delta)}{\epsilon^2}
    \max_{1\leq i\leq N}
    \left\|
    O_i - \frac 1{2^n} \Tr{O_i} \mathbb{I}
    \right\|^2_{\mathcal E_\rho}
    \right).
\end{align*}

In general, the shadow norm is hard to compute, though for special cases, closed-form expressions for the shadow norm can be derived. To this end, let us first consider the special but widely-used example wherein the observable 
$O$ is taken to be a Pauli operator $P_{\vec{a}}$.

\begin{prop}
If the observable $O$ is the Pauli operator $P_{\vec{a}}$, then the shadow norm is equal to
\begin{align}
\norm{P_{\vec{a}}}^2_{\mathcal{E}_{\rho}}
=W_{\mathcal{E}}[\vec{a}]^{-1}.
\end{align}
\end{prop}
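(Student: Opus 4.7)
The plan is to compute the expectation defining the shadow norm directly, using the Pauli-basis action of $\mathcal{M}^{-1}$ from Theorem~\ref{thm:main1}, and then exploit the Pauli-invariance of the ensemble to eliminate the $\rho$-dependence.

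First I would expand the classical snapshot in the Pauli basis, $\hat{\sigma}_{U,\vec b} = \frac{1}{2^n}\sum_{\vec c\in V^n}\Tr{\hat{\sigma}_{U,\vec b}P_{\vec c}}P_{\vec c}$. Applying linearity of $\mathcal M^{-1}$, the identity $\mathcal M^{-1}[P_{\vec c}]=W_{\mathcal E}[\vec c]^{-1}P_{\vec c}$ from Theorem~\ref{thm:main1}, and the orthogonality $\Tr{P_{\vec a}P_{\vec c}}=2^n\delta_{\vec a,\vec c}$, the estimator collapses to a single term:
\begin{align*}
\hat{o}=\Tr{P_{\vec a}\mathcal M^{-1}[\hat{\sigma}_{U,\vec b}]}
=W_{\mathcal E}[\vec a]^{-1}\Tr{\hat{\sigma}_{U,\vec b}P_{\vec a}},
\end{align*}
so $\hat{o}^2 = W_{\mathcal E}[\vec a]^{-2}\,W_{\hat{\sigma}_{U,\vec b}}[\vec a]$. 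Substituting into \eqref{ShadowNormDefn} and writing out the Born-rule weight $P(\vec b|U)=\Tr{\hat{\sigma}_{U,\vec b}\rho}$, the shadow norm reduces to
\begin{align*}
\norm{P_{\vec a}}^2_{\mathcal E_\rho}
=W_{\mathcal E}[\vec a]^{-2}\,\mathbb{E}_U\sum_{\vec b}\Tr{\hat{\sigma}_{U,\vec b}\rho}\,W_{\hat{\sigma}_{U,\vec b}}[\vec a].
\end{align*}
The proof then boils down to showing the inner expectation equals $W_{\mathcal E}[\vec a]$, independent of $\rho$.

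This is where Pauli-invariance comes in, and I expect it to be the crux of the argument. For any $P_{\vec\sigma}\in\mathcal P_n$, the substitution $U\mapsto U P_{\vec\sigma}$ gives $\hat{\sigma}_{UP_{\vec\sigma},\vec b}=P_{\vec\sigma}^\dagger\hat{\sigma}_{U,\vec b}P_{\vec\sigma}$. Since $P_{\vec\sigma}^\dagger P_{\vec a}P_{\vec\sigma}=\pm P_{\vec a}$, the sign squares away and $W_{\hat{\sigma}_{UP_{\vec\sigma},\vec b}}[\vec a]=W_{\hat{\sigma}_{U,\vec b}}[\vec a]$, whereas the Born factor transforms as $\Tr{\hat{\sigma}_{UP_{\vec\sigma},\vec b}\rho}=\Tr{\hat{\sigma}_{U,\vec b}P_{\vec\sigma}\rho P_{\vec\sigma}^\dagger}$. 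Using Pauli-invariance $\mathbb{E}_U f(U)=\mathbb{E}_U f(UP_{\vec\sigma})$, I can average the whole expression over $\vec\sigma\in V^n$ without changing its value; this implements the Pauli twirl on $\rho$, collapsing it to $\frac{1}{4^n}\sum_{\vec\sigma}P_{\vec\sigma}\rho P_{\vec\sigma}^\dagger=\mathbb{I}/2^n$. Hence
\begin{align*}
\mathbb{E}_U\sum_{\vec b}\Tr{\hat{\sigma}_{U,\vec b}\rho}\,W_{\hat{\sigma}_{U,\vec b}}[\vec a]
=\mathbb{E}_U\sum_{\vec b}\tfrac{1}{2^n}W_{\hat{\sigma}_{U,\vec b}}[\vec a]
=\mathbb{E}_U\mathbb{E}_{\vec b}W_{\hat{\sigma}_{U,\vec b}}[\vec a]
=W_{\mathcal E}[\vec a],
\end{align*}
and combining with the prefactor yields $\norm{P_{\vec a}}^2_{\mathcal E_\rho}=W_{\mathcal E}[\vec a]^{-1}$.

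I do not anticipate a serious technical obstacle: once the Pauli-expansion of $\hat\sigma_{U,\vec b}$ is in place, the computation is mechanical. The main thing to get right is the bookkeeping in the Pauli twirl, in particular verifying that the phase picked up when commuting $P_{\vec\sigma}$ past $P_{\vec a}$ does indeed square out inside $W_{\hat{\sigma}_{U,\vec b}}[\vec a]$, which relies crucially on using the \emph{squared} Pauli coefficient rather than the raw coefficient.
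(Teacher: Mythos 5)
Your proof is correct and follows essentially the same route as the paper's: both use Pauli-invariance to average over Pauli conjugations (with the sign $(-1)^{\inner{\vec{a}}{\vec{c}}_s}$ squaring away), invoke the 1-design property of the Pauli group to twirl $\rho$ into $\mathbb{I}/2^n$, and then recognize the remaining average as $W_{\mathcal{E}}[\vec{a}]$ by definition. The only difference—substituting $\mathcal{M}^{-1}[P_{\vec{a}}]=W_{\mathcal{E}}[\vec{a}]^{-1}P_{\vec{a}}$ at the outset rather than at the end—is purely cosmetic.
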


Note that the shadow norm $\norm{O}_{\mathcal{E}}$ usually depends on the input state $\rho$. 
If we interested in the expectation of the shadow norm over some unitary ensemble instead of 
some specific state, we will need a notion of an average shadow norm. To this end, let us consider the (squared) average shadow norm over the Pauli group, defined as
$
\norm{O}^2_{\mathcal{E}}
=\mathbb{E}_{V\in \mathcal{P}_n}
\norm{O}^2_{\mathcal{E}_{V\rho V^\dag}}.
$
The following proposition gives an expression for the average shadow norm.

\begin{prop}
The (squared) average shadow norm over the Pauli group
$\norm{O}^2_{\mathcal{E}}$ can be expressed as follows:
\begin{align}
\norm{O}^2_{\mathcal{E}}
=\frac{1}{4^n}
\sum_{\vec{a}\in V^n}W_{\mathcal{E}}[\vec{a}]^{-1}
W_O[\vec{a}],
\end{align}
where  $W_O[\vec{a}]=|\Tr{OP_{\vec{a}}}|^2$.
\end{prop}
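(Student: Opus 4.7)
The plan is to reduce the average shadow norm to a diagonal sum over the Pauli basis by combining two inputs: Pauli twirling, which simplifies the state-dependent factor, and the Pauli-basis action of $\mathcal{M}^{-1}$ from Theorem~\ref{thm:main1}. Starting from the definition in Eq.~\eqref{ShadowNormDefn},
\begin{align*}
\norm{O}^2_{\mathcal{E}}
= \mathbb{E}_{V\in\mathcal{P}_n}\mathbb{E}_U \sum_{\vec b} \Tr{\hat{\sigma}_{U,\vec b} V\rho V^\dag}\bigl(\Tr{O\mathcal{M}^{-1}[\hat{\sigma}_{U,\vec b}]}\bigr)^2,
\end{align*}
Pauli twirling gives $\mathbb{E}_{V\in\mathcal{P}_n}[V\rho V^\dag]=\mathbb{I}/2^n$, so the Born-rule weight collapses to $\Tr{\hat{\sigma}_{U,\vec b}}/2^n=1/2^n$ and
\begin{align*}
\norm{O}^2_{\mathcal{E}}
= \mathbb{E}_U \mathbb{E}_{\vec b} \bigl(\Tr{O\mathcal{M}^{-1}[\hat{\sigma}_{U,\vec b}]}\bigr)^2.
\end{align*}

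Next, I would expand $\hat{\sigma}_{U,\vec b}$ in the Pauli basis and apply Theorem~\ref{thm:main1} termwise to obtain
\begin{align*}
\Tr{O\mathcal{M}^{-1}[\hat{\sigma}_{U,\vec b}]}
=\frac{1}{2^n}\sum_{\vec a}\Tr{OP_{\vec a}}\,W_{\mathcal{E}}[\vec a]^{-1}\,\Tr{P_{\vec a}\hat{\sigma}_{U,\vec b}}.
\end{align*}
Squaring produces a double sum over $\vec a,\vec a'$ with deterministic coefficient $\Tr{OP_{\vec a}}\Tr{OP_{\vec a'}}W_{\mathcal{E}}[\vec a]^{-1}W_{\mathcal{E}}[\vec a']^{-1}$ and random factor $\Tr{P_{\vec a}\hat{\sigma}_{U,\vec b}}\Tr{P_{\vec a'}\hat{\sigma}_{U,\vec b}}$.

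The main obstacle is the Pauli-orthogonality identity
\begin{align*}
\mathbb{E}_U\mathbb{E}_{\vec b}\,\Tr{P_{\vec a}\hat{\sigma}_{U,\vec b}}\Tr{P_{\vec a'}\hat{\sigma}_{U,\vec b}}
=W_{\mathcal{E}}[\vec a]\,\delta_{\vec a,\vec a'}.
\end{align*}
To prove it, I would invoke the right Pauli-invariance $P(U)=P(UP_{\vec\sigma})$: the substitution $U\mapsto UP_{\vec\sigma}$ conjugates $\hat{\sigma}_{U,\vec b}$ by $P_{\vec\sigma}$, which multiplies each Pauli coefficient $\Tr{P_{\vec a}\hat{\sigma}_{U,\vec b}}$ by the commutation sign $(-1)^{\langle \vec a,\vec\sigma\rangle}$. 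Averaging the product over $\vec\sigma\in V^n$ yields a factor $\mathbb{E}_{\vec\sigma}(-1)^{\langle \vec a+\vec a',\vec\sigma\rangle}=\delta_{\vec a,\vec a'}$, so cross terms vanish, while the diagonal term reproduces $W_{\mathcal{E}}[\vec a]$ by definition. This is precisely the mechanism already used in the proof of Theorem~\ref{thm:main1}, so it should be reusable directly from the appendix.

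Substituting the orthogonality identity collapses the double sum to a single one,
\begin{align*}
\norm{O}^2_{\mathcal{E}}
=\frac{1}{4^n}\sum_{\vec a\in V^n}|\Tr{OP_{\vec a}}|^2\,W_{\mathcal{E}}[\vec a]^{-2}\,W_{\mathcal{E}}[\vec a]
=\frac{1}{4^n}\sum_{\vec a\in V^n}W_{\mathcal{E}}[\vec a]^{-1}W_O[\vec a],
\end{align*}
which is the claimed formula.
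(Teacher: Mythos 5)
Your proof is correct and follows essentially the same route as the paper's: twirl over the Pauli group (a 1-design) to replace the Born weight by $\Tr{\hat{\sigma}_{U,\vec b}\,\mathbb{I}/2^n}=1/2^n$, expand in the Pauli basis, and eliminate the cross terms via the substitution $U\mapsto UP_{\vec{\sigma}}$, which is exactly the mechanism from the proof of Theorem~\ref{thm:main1}. The only cosmetic difference is that you state and prove the orthogonality identity $\mathbb{E}_U\mathbb{E}_{\vec b}\Tr{P_{\vec{a}}\hat{\sigma}}\Tr{P_{\vec{a}'}\hat{\sigma}}=W_{\mathcal{E}}[\vec{a}]\,\delta_{\vec{a},\vec{a}'}$ explicitly, where the paper's appendix simply inserts the $\delta_{\vec{a},\vec{c}}$ without rederiving it.
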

Therefore, given a set of $N$ traceless observables $\set{O_i}^N_{i=1}$, the average shadow norm provides the following lower
bound for the sample complexity: 
\begin{align}
    \mathscr S \geq
    \frac{\log(N/\delta)}{\epsilon^2}
    \max_{1\leq i\leq N}
\inner{\vec{W}^{-1}_{\mathcal{E}}}{\vec{W}_{O_i}},
\end{align}
where the (normalized) 
inner product is defined as  $\inner{\vec{W}^{-1}_{\mathcal{E}}}{\vec{W}_{O_i}}:=\frac{1}{4^n}
\sum_{\vec{a}\in V^n}W_{\mathcal{E}}[\vec{a}]^{-1}
W_{O_i}[\vec{a}]$.

\subsection{Noisy classical shadows}
\label{sec:noisy_classical_shadow}

Our assumption thus far has been that the classical shadow protocol is implemented perfectly without being affected by noise. However, this is an unrealistic scenario, as noise is unavoidable in real-world experiments. For classical shadows to be useful even in the presence of noise, it is necessary to employ error mitigation techniques. This is especially important in the present noisy intermediate-scale quantum era \cite{preskill2018quantum,bharti2021noisy}, where quantum devices are noisy and quantum algorithms are performed without quantum error-correction \cite{arute2019quantum}. Fortunately, as previous works have shown \cite{koh2020classical, chen2021robust}, the classical shadow protocol can be made noise-resilient by making modifications to the shadow channel.

In this subsection, let us consider noisy classical shadows with a Pauli-invariant unitary ensemble. Similar to the setting in \cite{koh2020classical, chen2021robust}, we shall assume that a noise channel $\Lambda$ acts on the pre-measurement state $U\rho U^\dag$ just before the measurement is performed. Such an assumption is obeyed by gate-independent, time-stationary, and Markovian noise \cite{chen2021robust, flammia2020efficient}. In this case, the classical shadow is still defined as
$\mathcal{M}^{-1}[\hat{\sigma}_{U,\vec b}]$, although the 
probability distribution is now changed to 
$P_{\Lambda}(U,\vec b)=P(U)P_{\Lambda}(\vec b|U)$ where $P_{\Lambda}(\vec b|U)=\Tr{\proj{\vec b}\Lambda[U\rho U^\dag]}$.
Similar to the noiseless case \eqref{eq:ensembleState}, the ensemble of states is given by
\begin{align*}
\mathcal{E}_{\Lambda,\rho}
=\left\{
\left(\hat{\sigma}_{U,\vec b},
P(U,\vec b)
\right)
\right\}_{U,\vec b}.
\end{align*}
Hence, by
taking the average of the post-measurement states, we have 
$
\sigma
=\mathbb{E}_U
\sum_{\vec b}
\hat{\sigma}_{U,\vec b}
\Tr{\proj{\vec b}\Lambda[U\rho U^\dag]}
=\mathcal{M}_{\Lambda}[\rho],
$
where $\mathcal{M}_{\Lambda}$ is the noisy shadow channel, and 
$
\rho=\mathcal{M}^{-1}[\sigma]
=\mathbb{E}_{\hat{\sigma}\in \mathcal{E}_{\Lambda}}
\mathcal{M}^{-1}_{\Lambda}[\hat{\sigma}],
$
where $\mathcal{M}^{-1}_{\Lambda}$ is the noisy reconstruction map. 
We will now provide an explicit form for the noisy shadow channel
$\mathcal{M}_{\Lambda}$ and the noisy reconstruction map
$\mathcal{M}^{-1}_{\Lambda}$. For clarity of presentation, the detailed proof of all the results in this subsection are provided in Appendix \ref{sec:noisy_classical_shadows}.

\begin{thm}
Given a Pauli-invariant unitary ensemble $\mathcal{E}$ and a noise channel $\Lambda$, 
the noisy shadow channel is given by 
\begin{align}
\mathcal{M}_{\Lambda}[\rho]
=\frac{1}{2^n}\sum_{\vec{a}}W_{\mathcal{E}_{\Lambda}}[\vec{a}]
\Tr{\rho P_{\vec{a}}}P_{\vec{a}},
\end{align}
where $W_{\mathcal{E}_{\Lambda}}[\vec{a}]$ is the average Pauli coefficient of the noisy classical shadow, and 
is defined as
\begin{align}
W_{\mathcal{E}_{\Lambda}}[\vec{a}]=\mathbb{E}_b\mathbb{E}_U
\Tr{\hat{\sigma}_{U,b}P_{\vec{a}}}\Tr{U^\dag \Lambda^\dag[\proj{b}] U P_{\vec{a}}}.
\end{align} 
For the Pauli-invariant unitary ensemble, $\mathcal{M}^{-1}_{\Lambda}$ exists if and only if 
$W_{\mathcal{E}_{\Lambda}}[\vec{a}]>0$ for all $\vec{a}$. If it exists, the reconstruction map 
is  defined by
\begin{align}
\mathcal{M}^{-1}_{\Lambda}[P_{\vec{a}}]
=W_{\mathcal{E}_\Lambda}[\vec{a}]^{-1}P_{\vec{a}}.
\end{align}
\end{thm}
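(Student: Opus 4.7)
The plan is to adapt the proof of Theorem~1 to the noisy setting: show that $\mathcal{M}_\Lambda$ is diagonal in the Pauli basis with eigenvalue $W_{\mathcal{E}_\Lambda}[\vec a]$ on $P_{\vec a}$, so that both the inverse formula and the invertibility criterion follow as immediate consequences of diagonality. The only new ingredient compared to the noiseless case is the adjoint identity $\Tr{\proj{\vec b}\Lambda[U\rho U^\dag]} = \Tr{U^\dag \Lambda^\dag[\proj{\vec b}]U \rho}$, which lets us absorb the noise into a single trace factor so that the ensemble-averaging argument proceeds as in the noiseless proof.

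Concretely, I would expand $\rho = 2^{-n}\sum_{\vec c}\Tr{\rho P_{\vec c}}P_{\vec c}$ and $\hat{\sigma}_{U,\vec b} = 2^{-n}\sum_{\vec a}\Tr{\hat{\sigma}_{U,\vec b}P_{\vec a}}P_{\vec a}$, and substitute both into $\mathcal{M}_\Lambda[\rho] = \mathbb{E}_U \sum_{\vec b}\hat{\sigma}_{U,\vec b}\,\Tr{U^\dag \Lambda^\dag[\proj{\vec b}]U \rho}$ to obtain
\begin{align*}
\mathcal{M}_\Lambda[\rho] = \frac{1}{4^n}\sum_{\vec a, \vec c} C_{\vec a, \vec c}\,\Tr{\rho P_{\vec c}}\,P_{\vec a},
\end{align*}
where
\begin{align*}
C_{\vec a, \vec c} = \mathbb{E}_U \sum_{\vec b}\Tr{\hat{\sigma}_{U,\vec b}P_{\vec a}}\,\Tr{U^\dag \Lambda^\dag[\proj{\vec b}]U P_{\vec c}}.
\end{align*}
By the definition of $W_{\mathcal{E}_\Lambda}[\vec a]$, the diagonal entries are $C_{\vec a, \vec a} = 2^n W_{\mathcal{E}_\Lambda}[\vec a]$, so the theorem reduces to showing that $C_{\vec a, \vec c}$ vanishes whenever $\vec a \neq \vec c$.

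This off-diagonal vanishing is the one place where Pauli-invariance is used. For any $P_{\vec\sigma} \in \mathcal{P}_n$, the substitution $U \mapsto UP_{\vec\sigma}$ preserves the distribution by hypothesis. Combining this with the conjugation rule $P_{\vec\sigma}^\dag P_{\vec a} P_{\vec\sigma} = (-1)^{\langle \vec\sigma, \vec a\rangle} P_{\vec a}$, where $\langle\cdot,\cdot\rangle$ denotes the symplectic form on $V^n$, and the cyclicity of the trace gives $C_{\vec a, \vec c} = (-1)^{\langle \vec\sigma, \vec a + \vec c\rangle}C_{\vec a, \vec c}$ for every $\vec\sigma$. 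Non-degeneracy of the symplectic form forces $\vec a = \vec c$ whenever $C_{\vec a, \vec c} \neq 0$, yielding the desired diagonal form for $\mathcal{M}_\Lambda$; invertibility is then equivalent to every $W_{\mathcal{E}_\Lambda}[\vec a]$ being nonzero, and the formula $\mathcal{M}_\Lambda^{-1}[P_{\vec a}] = W_{\mathcal{E}_\Lambda}[\vec a]^{-1}P_{\vec a}$ is immediate. I anticipate no serious obstacle here: the noise channel appears only in one of the two trace factors and is unaffected by the $U \mapsto UP_{\vec\sigma}$ substitution, so the symplectic-sign argument from the noiseless Theorem~1 transfers essentially verbatim.
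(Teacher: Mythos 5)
Your proposal is correct and follows essentially the same route as the paper's proof: Pauli-decompose both $\hat{\sigma}_{U,\vec b}$ and the noisy trace factor via $\Tr{\proj{\vec b}\Lambda[U\rho U^\dag]}=\Tr{U^\dag\Lambda^\dag[\proj{\vec b}]U\rho}$, then use the Pauli-invariant substitution $U\mapsto UP_{\vec\sigma}$ and the conjugation sign $(-1)^{\inner{\vec\sigma}{\vec a+\vec c}_s}$ to kill the off-diagonal coefficients, with diagonality giving invertibility and the inverse formula. The only cosmetic difference is that the paper averages over all $\vec d\in V^n$ and invokes $\mathbb{E}_{\vec d}(-1)^{\inner{\vec d}{\vec a+\vec c}_s}=\delta_{\vec a,\vec c}$, whereas you fix a single well-chosen $\vec\sigma$ and appeal to non-degeneracy of the symplectic form---logically the same mechanism.
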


Now, let us consider the 
sample complexity of the noisy classical shadow protocol. Similar to the noiseless case, 
the sample complexity of the noisy classical shadow with the input state $\rho$ and observable $O$ is upper bounded by the (squared) noisy shadow norm
$
\norm{O}^2_{\mathcal{E}_{\Lambda},\rho}
=\mathbb{E}_{\hat{\sigma}\in\mathcal{E}_{\Lambda}}
\hat{o}[\hat{\sigma}]^2,
$ 
where the estimator of the observable  is taken to be $\hat{o}=\Tr{O\mathcal{M}^{-1}_{\Lambda}[\hat{\sigma}]}$.

\begin{prop}
If the observable $O$ is taken to be the
Pauli operator $P_{\vec{a}}$, the shadow norm
is equal to 
\begin{align}
\norm{P_{\vec{a}}}^2_{\mathcal{E}_{\Lambda},\rho}
=W_{\mathcal{E}_{\Lambda}}[\vec{a}]^{-2}W^u_{\mathcal{E}_{\Lambda}}[\vec{a}],
\end{align}
where $W^u_{\mathcal{E}_{\Lambda}}[\vec{a}]$ is defined as
\begin{align}
W^u_{\mathcal{E}_{\Lambda}}[\vec{a}]
=\mathbb{E}_U\mathbb{E}_{\vec b} 
\left|\Tr{P_{\vec{a}}\hat{\sigma}_{U,b}}\right|^2\Tr{\proj{b}\Lambda[\mathbb{I}]}.
\end{align}
Hence, if $\Lambda$ is unital, then $W^u_{\mathcal{E}_{\Lambda}}[\vec{a}]=W_{\mathcal{E}}[\vec{a}]$. 
\end{prop}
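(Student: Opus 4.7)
The plan is to start by simplifying the estimator $\hat o[\hat\sigma_{U,\vec b}] = \Tr{P_{\vec a}\,\mathcal M_\Lambda^{-1}[\hat\sigma_{U,\vec b}]}$ using the explicit form of the reconstruction map from the preceding theorem. First I would expand $\hat\sigma_{U,\vec b} = \tfrac{1}{2^n}\sum_{\vec c}\Tr{\hat\sigma_{U,\vec b}P_{\vec c}}P_{\vec c}$ in the Pauli basis, apply $\mathcal M_\Lambda^{-1}$ termwise using $\mathcal M_\Lambda^{-1}[P_{\vec c}] = W_{\mathcal E_\Lambda}[\vec c]^{-1}P_{\vec c}$, and then take the inner product with $P_{\vec a}$. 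Orthogonality $\Tr{P_{\vec a}P_{\vec c}} = 2^n\delta_{\vec a,\vec c}$ collapses the sum, yielding the clean identity $\hat o[\hat\sigma_{U,\vec b}] = W_{\mathcal E_\Lambda}[\vec a]^{-1}\Tr{\hat\sigma_{U,\vec b}P_{\vec a}}$.

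Next I would square this and take the expectation under the noisy ensemble distribution $P(U)P_\Lambda(\vec b|U)$, pulling the constant $W_{\mathcal E_\Lambda}[\vec a]^{-2}$ out in front. The remaining factor is
\begin{align*}
\mathbb E_U\sum_{\vec b}\Tr{\proj{\vec b}\Lambda[U\rho U^\dag]}\bigl|\Tr{\hat\sigma_{U,\vec b}P_{\vec a}}\bigr|^2.
\end{align*}
The key step, and the main obstacle, is eliminating the $\rho$-dependence so as to recover the state-independent $W^u_{\mathcal E_\Lambda}[\vec a]$. I would use Pauli-invariance to replace $U$ by $UP_{\vec\sigma}$ (which leaves the $\mathbb E_U$ untouched) and then average over $\vec\sigma\in V^n$. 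Under this replacement, $\hat\sigma_{U,\vec b}$ transforms to $P_{\vec\sigma}\hat\sigma_{U,\vec b}P_{\vec\sigma}$, so $\Tr{\hat\sigma_{U,\vec b}P_{\vec a}}$ acquires only a sign $\pm 1$ depending on whether $P_{\vec\sigma}$ commutes or anticommutes with $P_{\vec a}$; this sign disappears inside the absolute value. Meanwhile $\rho$ is conjugated by $P_{\vec\sigma}$ inside $\Lambda[\,\cdot\,]$, and the Pauli twirl $\tfrac{1}{4^n}\sum_{\vec\sigma}P_{\vec\sigma}\rho P_{\vec\sigma} = \tfrac{1}{2^n}\mathbb I$ replaces $\rho$ by the maximally mixed state, giving $\Tr{\proj{\vec b}\Lambda[U\rho U^\dag]}\mapsto \tfrac{1}{2^n}\Tr{\proj{\vec b}\Lambda[\mathbb I]}$.

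Combining the two observations, the sum collapses to $\mathbb E_U\mathbb E_{\vec b}|\Tr{\hat\sigma_{U,\vec b}P_{\vec a}}|^2\Tr{\proj{\vec b}\Lambda[\mathbb I]} = W^u_{\mathcal E_\Lambda}[\vec a]$, yielding the claimed formula $\|P_{\vec a}\|^2_{\mathcal E_\Lambda,\rho} = W_{\mathcal E_\Lambda}[\vec a]^{-2}W^u_{\mathcal E_\Lambda}[\vec a]$; notably the right-hand side is independent of $\rho$, as one expects from the twirling. For the unital case I would simply substitute $\Lambda[\mathbb I]=\mathbb I$, making $\Tr{\proj{\vec b}\Lambda[\mathbb I]}=1$, so the definition of $W^u_{\mathcal E_\Lambda}[\vec a]$ reduces exactly to the noiseless coefficient $W_{\mathcal E}[\vec a]=\mathbb E_{\vec b}\mathbb E_U|\Tr{\hat\sigma_{U,\vec b}P_{\vec a}}|^2$ defined earlier. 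The only delicate point is the bookkeeping in the twirling argument — making sure the transformation rules $\hat\sigma_{UP_{\vec\sigma},\vec b}=P_{\vec\sigma}\hat\sigma_{U,\vec b}P_{\vec\sigma}$ and the absorption of the commutation sign are handled cleanly; everything else is a direct computation.
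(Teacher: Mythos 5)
Your proposal is correct and takes essentially the same route as the paper's proof: both reduce the estimator to $W_{\mathcal{E}_\Lambda}[\vec{a}]^{-1}\Tr{\hat{\sigma}_{U,\vec b}P_{\vec{a}}}$, then exploit Pauli-invariance via $U\mapsto UP_{\vec{\sigma}}$ so that the sign $(-1)^{\inner{\vec{a}}{\vec{\sigma}}_s}$ cancels in the square while the Pauli $1$-design twirl replaces $\rho$ by $\mathbb{I}/2^n$, yielding the factor $\frac{1}{2^n}\Tr{\proj{\vec b}\Lambda[\mathbb{I}]}$ and hence $W_{\mathcal{E}_\Lambda}[\vec{a}]^{-2}W^u_{\mathcal{E}_\Lambda}[\vec{a}]$. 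The only cosmetic difference is that the paper writes the Born weight in adjoint form $\Tr{U^\dag\Lambda^\dag[\proj{\vec b}]U\rho}$ and twirls $\rho$ there directly, which is equivalent to your twirl inside $\Lambda[U\cdot U^\dag]$.
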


From the above proposition, we note that $W^u_{\mathcal{E}_{\Lambda}}[\vec{a}]$ depends on only how the noise channel $\Lambda$ acts on the identity operator. In the case where $\Lambda$ is unital, $W^u_{\mathcal{E}_{\Lambda}}[\vec{a}]$ reduces to $W_{\mathcal{E}}[\vec a]$, the analogous quantity in the noiseless case, which suggests a particular role of noise unitality in the analysis of classical shadows.


Similar to the noiseless case, let us define the (squared) average shadow norm over the Pauli group for the noisy classical shadow to be
$
\norm{O}^2_{\mathcal{E}_\Lambda}
=\mathbb{E}_{V\in \mathcal{P}_n}
\norm{O}^2_{\mathcal{E}_{\Lambda,V\rho V^\dag}}
$. The following proposition gives an explicit expression for the average shadow norm.
\begin{prop}
The average shadow norm in the noisy classical 
shadow protocol with the noise channel $\Lambda$
 can be expressed as follows
\begin{align}
\norm{O}^2_{\mathcal{E}_\Lambda}
=\frac{1}{4^n}
\sum_{\vec{a}}W_{\mathcal{E}_{\Lambda}}[\vec{a}]^{-2}W^u_{\mathcal{E}}[\vec{a}]
W_O[\vec{a}].
\end{align}

\end{prop}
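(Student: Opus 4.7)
The plan is to expand the squared estimator in the Pauli basis, execute the Pauli twirl on $\rho$ explicitly, and then use Pauli-invariance of the ensemble to diagonalize the resulting double sum.

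First, using $\hat o = \Tr{O\mathcal M^{-1}_\Lambda[\hat\sigma_{U,\vec b}]}$, the preceding theorem gives $\mathcal M^{-1}_\Lambda[P_{\vec a}] = W_{\mathcal E_\Lambda}[\vec a]^{-1}P_{\vec a}$. Expanding $\hat\sigma_{U,\vec b}$ in the Pauli basis and using $\Tr{P_{\vec a}P_{\vec c}} = 2^n\delta_{\vec a,\vec c}$, I would obtain
\begin{align*}
\hat o[\hat\sigma_{U,\vec b}] = \frac{1}{2^n}\sum_{\vec a} W_{\mathcal E_\Lambda}[\vec a]^{-1}\Tr{OP_{\vec a}}\Tr{\hat\sigma_{U,\vec b}P_{\vec a}}.
\end{align*}
Squaring and taking the expectation over $\mathcal E_{\Lambda, V\rho V^\dag}$ and then over $V\in\mathcal P_n$ reduces the whole problem to evaluating the correlator
\begin{align*}
C(\vec a,\vec c) := \mathbb{E}_V\, \mathbb{E}_U\sum_{\vec b}\Tr{\hat\sigma_{U,\vec b}P_{\vec a}}\Tr{\hat\sigma_{U,\vec b}P_{\vec c}}\Tr{\proj{\vec b}\Lambda[UV\rho V^\dag U^\dag]}.
\end{align*}

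Next, I would perform the Pauli twirl on $\rho$: since $\mathbb{E}_{V\in\mathcal P_n}V\rho V^\dag = \mathbb{I}/2^n$ for any state $\rho$, the conjugation by $U$ is harmless and the probability factor collapses to $\frac{1}{2^n}\Tr{\proj{\vec b}\Lambda[\mathbb{I}]}$. This is precisely the step that unveils the $W^u_{\mathcal E_\Lambda}$ quantity defined in the previous proposition. Absorbing the $2^{-n}$ into $\mathbb{E}_{\vec b}$, I get
\begin{align*}
C(\vec a,\vec c) = \mathbb{E}_U\,\mathbb{E}_{\vec b}\,\Tr{\hat\sigma_{U,\vec b}P_{\vec a}}\Tr{\hat\sigma_{U,\vec b}P_{\vec c}}\Tr{\proj{\vec b}\Lambda[\mathbb{I}]}.
\end{align*}

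Then I would invoke the Pauli-invariance of $\mathcal E$: substituting $U\mapsto UP_{\vec\sigma}$ sends $\hat\sigma_{U,\vec b}\mapsto P_{\vec\sigma}\hat\sigma_{U,\vec b}P_{\vec\sigma}$, which multiplies each trace by $(-1)^{\langle\vec a,\vec\sigma\rangle}$ (and $(-1)^{\langle\vec c,\vec\sigma\rangle}$) via the commutation $P_{\vec\sigma}P_{\vec a}P_{\vec\sigma} = (-1)^{\langle \vec a,\vec\sigma\rangle}P_{\vec a}$. Averaging over $\vec\sigma\in V^n$ yields $\delta_{\vec a,\vec c}$, so $C(\vec a,\vec c) = \delta_{\vec a,\vec c}\,W^u_{\mathcal E_\Lambda}[\vec a]$. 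Plugging back in gives
\begin{align*}
\norm{O}^2_{\mathcal E_\Lambda} = \frac{1}{4^n}\sum_{\vec a} W_{\mathcal E_\Lambda}[\vec a]^{-2}\,W^u_{\mathcal E_\Lambda}[\vec a]\,W_O[\vec a],
\end{align*}
matching the claimed formula (the $W^u_\mathcal{E}$ in the statement is to be read as $W^u_{\mathcal E_\Lambda}$, consistent with the previous proposition).

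The main obstacle is the state-dependence of the measurement probability $\Tr{\proj{\vec b}\Lambda[U\rho U^\dag]}$, which entangles $U$, $\vec b$, and $\rho$ in a way that obstructs any clean use of the Pauli-invariance of the ensemble. The key conceptual move is to exploit the additional averaging over $V$ built into the definition of the average shadow norm, so that the Pauli twirl on $\rho$ replaces $\rho$ by $\mathbb{I}/2^n$ and decouples the $U$ from the state entirely, leaving exactly the marginal $\Tr{\proj{\vec b}\Lambda[\mathbb{I}]}$ that defines $W^u_{\mathcal E_\Lambda}$. Once this decoupling is achieved, the diagonalization via Pauli-invariance is mechanical and mirrors the noiseless proof.
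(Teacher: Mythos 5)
Your proof is correct and follows essentially the same route as the paper's: the Pauli twirl over $V\in\mathcal{P}_n$ replaces $\rho$ by $\mathbb{I}/2^n$ so that the measurement weight collapses to $\frac{1}{2^n}\Tr{\proj{\vec b}\Lambda[\mathbb{I}]}$, and Pauli-invariance of the ensemble then diagonalizes the double Pauli sum, yielding $\frac{1}{4^n}\sum_{\vec a}W_{\mathcal{E}_\Lambda}[\vec a]^{-2}W^u_{\mathcal{E}_\Lambda}[\vec a]W_O[\vec a]$. If anything you are slightly more explicit than the paper, which suppresses the cross-term cancellation via the substitution $U\mapsto UP_{\vec\sigma}$ and writes the diagonal sum directly; your reading of $W^u_{\mathcal{E}}$ as $W^u_{\mathcal{E}_\Lambda}$ is likewise consistent with the paper's notation in the preceding proposition.
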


Let us now provide an example of noisy classical shadows and demonstrate how the performance of the protocol depends on the noise rate.

\begin{example}[Noisy classical shadows with global depolarizing noise]

Let us consider the case where the noise channel is the global depolarizing channel, defined as 
$
D_p[\cdot]=(1-p)[\cdot]+p\frac{\Tr{\cdot}\mathbb{I}}{2^n}.
$
Since $D_p$ is unital, we have $W^u_{\mathcal{E}_{D_p}}[\vec{a}]=W_{\mathcal{E}}[\vec{a}]$.
Hence, the coefficients of the 
shadow channel may be expressed as
\begin{align}
W_{\mathcal{E}_{D_p}}[\vec{a}]
=(1-p)^{1-\delta_{\vec{a},\vec{0}}}W_{\mathcal{E}}[\vec{a}].
\end{align}
Thus, the
shadow norm for a Pauli operator $P_{\vec{a}}$ obeys the following identity
\begin{align}
\norm{P_{\vec{a}}}^2_{\mathcal{E}_{D_p},\rho}
=(1-p)^{2\delta_{\vec{a},\vec{0}}-2}\norm{P_{\vec{a}}}^2_{\mathcal{E},\rho},
\end{align}
where $\delta_{\vec{a},\vec{0}}$ denotes the Kronecker delta 
function. 
From the above equation, we see that depolarizing noise increases the number of samples needed for expectation estimation, with an increase that is proportional to the noise rate. 

\end{example}

\section{Application of our results}

\subsection{Classical shadow with locally scrambled unitary ensembles}
First, let us apply our results to locally scrambled unitary ensembles. 
A unitary ensemble is said to be \textit{locally scrambled}  \cite{hu2021classical} if the probability distribution
$P(U)$ satisfies local basis invariance, that is,
\begin{align}
P(U)=P(UV),\quad \forall \mbox{ unitaries } V=V_1\ot\cdots\ot V_n.
\end{align}
It is easy to see that the Pauli-invariance assumption is weaker than the locally scrambled assumption, as
locally scrambled unitary ensembles are Pauli-invariant. Not all Pauli-invariant 
unitary ensembles are locally scrambled though---a counterexample is the Pauli group.

Classical shadows with locally scrambled unitary ensembles were previously considered in \cite{hu2021classical}, which expressed the reconstruction map as the solution of a linear system of size $O(2^n)$; an explicit formula for the reconstruction map was not provided. In our study, however, since locally scrambled unitary ensembles are a special case of the Pauli-invariant unitary ensembles, a consequence of Theorem \ref{thm:main1} is an explicit formula 
for the reconstruction map that circumvents the need to solve an exponential-sized linear system. Our next proposition, whose proof we provide in Appendix \ref{apen:local_scram}, makes this explicit. 

\begin{prop}
Given a locally scrambled unitary ensemble $\mathcal{E}$, the shadow channel is 
\begin{align*}
\mathcal{M}[\rho]
=\frac{1}{2^{n}}\sum_{S\subset [n]}
\bar{W}_{\mathcal{E}}[S]\sum_{\vec{a}:\supp(\vec{a})=S}\Tr{\rho P_{\vec{a}}}P_{\vec{a}},
\end{align*}
where $\bar{W}_{\mathcal{E}}[S]$ is defined as 
$
\bar{W}_{\mathcal{E}}[S]=\mathbb{E}_{\vec{a}:\supp(\vec{a})=S}
W_{\mathcal{E}}[\vec{a}],
$
and the reconstruction map 
is given by
\begin{align*}
\mathcal{M}^{-1}[P_{\vec{a}}]
=\bar{W}_{\mathcal{E}}[\supp(\vec{a})]^{-1}P_{\vec{a}}.
\end{align*}
\end{prop}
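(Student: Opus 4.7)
The plan is to leverage Theorem \ref{thm:main1}, which applies directly because, as noted in the text, every locally scrambled ensemble is Pauli-invariant. Theorem \ref{thm:main1} already supplies the expansion
\begin{align*}
\mathcal{M}[\rho] = \frac{1}{2^n} \sum_{\vec{a} \in V^n} W_{\mathcal{E}}[\vec{a}] \Tr{\rho P_{\vec{a}}} P_{\vec{a}}
\end{align*}
together with the Pauli-eigenbasis inversion $\mathcal{M}^{-1}[P_{\vec a}] = W_{\mathcal{E}}[\vec a]^{-1} P_{\vec a}$. The entire content of the proposition therefore reduces to a single \emph{support lemma}: for any $\vec a, \vec a' \in V^n$ with $\supp(\vec a) = \supp(\vec a')$, one has $W_{\mathcal{E}}[\vec a] = W_{\mathcal{E}}[\vec a']$. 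Once this is in hand, partitioning the sum over $V^n$ by supports and pulling out the now-common coefficient, which then equals its own average $\bar W_{\mathcal{E}}[\supp(\vec a)]$, immediately gives both claimed formulas.

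To prove the support lemma I would start from the definition
\begin{align*}
W_{\mathcal{E}}[\vec{a}] = \mathbb{E}_{\vec b}\mathbb{E}_U \left|\Tr{U^\dag \proj{\vec b} U \, P_{\vec a}}\right|^2
\end{align*}
and exploit local basis invariance: since $P(U) = P(UV)$ for every product unitary $V = V_1 \otimes \cdots \otimes V_n$, substituting $U \mapsto UV$ under the expectation leaves it unchanged, and cyclicity of the trace moves the $V$'s over to the Pauli operator, giving
\begin{align*}
W_{\mathcal{E}}[\vec{a}] = \mathbb{E}_{\vec b}\mathbb{E}_U \left|\Tr{U^\dag \proj{\vec b} U \, V P_{\vec a} V^\dag}\right|^2
\end{align*}
for every product unitary $V$. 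Because $V P_{\vec a} V^\dag = \bigotimes_i V_i P_{a_i} V_i^\dag$, the identity factors (where $a_i = (0,0)$) remain identity whatever $V_i$ one chooses, while the non-identity factors can be rotated within $\{X,Y,Z\}$ up to sign by a suitable single-qubit Clifford. Given any target $\vec a'$ with the same support as $\vec a$, I would set $V_i = \mathbb{I}$ on the identity sites and choose $V_i$ on each non-identity site so that $V_i P_{a_i} V_i^\dag = \pm P_{a'_i}$; the overall sign is annihilated by the modulus squared, so $W_{\mathcal{E}}[\vec a] = W_{\mathcal{E}}[\vec a']$.

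The main obstacle is this transitivity step, but it is standard: the single-qubit Clifford group permutes $\{X,Y,Z\}$ (up to signs), so the required local rotation always exists. With the support lemma established, the remainder is bookkeeping: partitioning $V^n$ by $S = \supp(\vec a)$ makes $W_{\mathcal{E}}[\vec a]$ constant on each piece, equal to its own average $\bar W_{\mathcal{E}}[S]$, which yields the stated form of $\mathcal{M}$; and $\mathcal{M}^{-1}[P_{\vec a}] = \bar W_{\mathcal{E}}[\supp(\vec a)]^{-1} P_{\vec a}$ then follows directly from the Pauli-eigenbasis characterization of the reconstruction map in Theorem \ref{thm:main1}.
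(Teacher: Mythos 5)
Your proof is correct, and it shares the paper's skeleton---reduce to Theorem \ref{thm:main1} via Pauli-invariance, then show that $W_{\mathcal{E}}[\vec a]$ depends only on $\supp(\vec a)$---but the mechanism at that key step is genuinely different. The paper inserts the full Haar average $\mathbb{E}_{V^n}$ over local unitaries into the expectation and uses the single-qubit twirl identity to equate $\mathbb{E}_{V^n}\left(\Tr{V^n\hat{\sigma}_{U,\vec b}(V^n)^\dag P_{\vec{a}}}\right)^2$ with the uniform average over all Pauli operators supported on $\supp(\vec a)$, which produces $\bar{W}_{\mathcal{E}}[\supp(\vec a)]$ in one stroke. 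You instead apply the invariance $P(U)=P(UV)$ at a single, discretely chosen $V$: a tensor product of single-qubit Cliffords sending $P_{a_i}\mapsto \pm P_{a_i'}$ sitewise (transitivity of the Clifford action on the nonidentity Paulis up to sign, the sign being annihilated by the modulus squared), concluding that $W_{\mathcal{E}}$ is constant on each support class and therefore trivially equal to its own average $\bar{W}_{\mathcal{E}}[S]$. Your route avoids any second-moment computation and, as a byproduct, establishes the proposition under a strictly weaker hypothesis: invariance under right multiplication by local Clifford unitaries suffices, rather than full local basis invariance. What the paper's Haar-average route buys is the direct realization of $W_{\mathcal{E}}[\vec a]$ as an average over the locally scrambled ensemble itself, which dovetails with the entanglement-feature formalism used in Propositions \ref{prop:entfeature} and \ref{prop:com_scr}; what yours buys is elementarity and minimality of assumptions. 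The final bookkeeping---regrouping the sum over $V^n$ by supports and reading off $\mathcal{M}^{-1}$ on the Pauli eigenbasis---is identical in both treatments and you execute it correctly.
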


Our expression for the reconstruction map above may be compared with that given in \cite{hu2021classical}, which expressed the reconstruction map as 
 $\mathcal{M}^{-1}[\sigma]=\sum_{S\subset [n]}r_SD^{S}[\sigma]$, where $D^S$ denotes the $|S|$-fold Kronecker product of the single-qubit erasure channel $D[\cdot]=\Tr{\cdot}\mathbb{I}/2$ with itself acting on all the qubits indexed by $S$, and the coefficients 
 $r_s$ are given as the solution of a linear system whose coefficients are the entanglement features. In the next proposition, we find an explicit formula for $r_S$ in terms of the coefficients $\bar{W}_{\mathcal{E}}[S]$. 
\begin{prop}\label{prop:r_s}
Given a reconstruction map written as $\mathcal{M}^{-1}[\sigma]=\sum_Sr_SD^{S}[\sigma]$, the coefficients 
$r_S$ can be expressed in terms of $\bar{W}_{\mathcal{E}}[A]$ as follows
\begin{align}
r_S=\sum_{A\subset S}
(-1)^{|S|-|A|}
\bar{W}_{\mathcal{E}}[A^c]^{-1},
\end{align}
where $A^c$ denotes the complement of $A$ in $[n]$.
\end{prop}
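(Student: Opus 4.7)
The plan is to compare the two expressions for $\mathcal{M}^{-1}$ by evaluating both on the Pauli basis $\{P_{\vec{a}}\}$, and then to recover $r_S$ by M\"obius inversion over the Boolean lattice of subsets of $[n]$.

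First, I would compute the action of $D^S$ on a Pauli operator $P_{\vec{a}}$. Since $D[\mathbb{I}]=\mathbb{I}$ and $D[X]=D[Y]=D[Z]=0$, tensoring gives
\begin{align*}
D^S[P_{\vec{a}}] = \begin{cases} P_{\vec{a}}, & \supp(\vec{a}) \subset S^c, \\ 0, & \text{otherwise}. \end{cases}
\end{align*}
Hence the expansion $\mathcal{M}^{-1}[\sigma] = \sum_S r_S D^S[\sigma]$ acts on Pauli basis elements by
\begin{align*}
\mathcal{M}^{-1}[P_{\vec{a}}] = \Bigl(\sum_{S \subset \supp(\vec{a})^c} r_S\Bigr) P_{\vec{a}}.
\end{align*}

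Next, I would invoke the preceding proposition, which gives $\mathcal{M}^{-1}[P_{\vec{a}}] = \bar{W}_{\mathcal{E}}[\supp(\vec{a})]^{-1} P_{\vec{a}}$. Matching coefficients and setting $T = \supp(\vec{a})^c$ (noting that every subset $T \subset [n]$ arises this way, e.g.\ from $\vec{a}$ supported on $T^c$), this yields the family of identities
\begin{align*}
\sum_{S \subset T} r_S = \bar{W}_{\mathcal{E}}[T^c]^{-1}, \quad \forall\, T \subset [n].
\end{align*}

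Finally, I would apply M\"obius inversion on the subset lattice to the relation above, which gives
\begin{align*}
r_S = \sum_{A \subset S} (-1)^{|S|-|A|} \bar{W}_{\mathcal{E}}[A^c]^{-1},
\end{align*}
as claimed. There is no real obstacle here: the only step that requires a moment of care is verifying that $D^S$ annihilates every Pauli with support intersecting $S$, so that the matrix relating $\{r_S\}$ to the Pauli eigenvalues of $\mathcal{M}^{-1}$ is upper-triangular in the subset order and therefore invertible via the standard inclusion--exclusion formula.
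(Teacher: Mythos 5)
Your proof is correct and follows essentially the same route as the paper's: evaluating $D^S$ on the Pauli basis to get the indicator condition $\supp(\vec{a})\subset S^c$, matching coefficients against $\mathcal{M}^{-1}[P_{\vec{a}}]=\bar{W}_{\mathcal{E}}[\supp(\vec{a})]^{-1}P_{\vec{a}}$ to obtain $\sum_{S\subset T}r_S=\bar{W}_{\mathcal{E}}[T^c]^{-1}$, and then applying M\"obius inversion (the paper's inclusion--exclusion lemma). Your added checks---that every $T\subset[n]$ arises as a complement of a support, and that the system is triangular in the subset order---are fine but not points of divergence.
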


The proof of Proposition \ref{prop:r_s} is presented in 
Appendix \ref{apen:local_scram}. 
Using Proposition \ref{prop:entfeature} and the fact that 
$
W_{\mathcal{E}}[S]=3^{|S|}\bar{W}_{\mathcal{E}}[S]
$, our next corollary shows how the coefficients $r_S$ can be expressed in terms of the 
entanglement feature. 

\begin{cor}
Given a reconstruction map written as $\mathcal{M}^{-1}[\sigma]=\sum_Sr_SD^{S}[\sigma]$, the coefficients
$r_S$ can be expressed as follows
\begin{align}
\nonumber r_S=(-1)^{n+|S|}\sum_{A\subset S}
3^{|A^c|}
\left[
\sum_{B\subset A^c}
(-2)^{|B|}
E^{(2)}_{\mathcal{E}}[B]\right]^{-1}.\\
\end{align}
\end{cor}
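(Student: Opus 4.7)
The plan is to derive the stated identity by a direct substitution that chains three results already established in the paper: Proposition \ref{prop:r_s}, the normalization identity $W_{\mathcal{E}}[S]=3^{|S|}\bar{W}_{\mathcal{E}}[S]$, and Proposition \ref{prop:entfeature}. Since each step is an algebraic rewrite, no new structural insight is required; the task is essentially bookkeeping of signs and cardinalities.

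First, I would invoke Proposition \ref{prop:r_s} to write
\begin{align*}
r_S = \sum_{A\subset S}(-1)^{|S|-|A|}\bar{W}_{\mathcal{E}}[A^c]^{-1}.
\end{align*}
Next, using the stated relation $W_{\mathcal{E}}[T]=3^{|T|}\bar{W}_{\mathcal{E}}[T]$ with $T=A^c$, I would replace $\bar{W}_{\mathcal{E}}[A^c]^{-1}$ by $3^{|A^c|}W_{\mathcal{E}}[A^c]^{-1}$.

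Then I would apply Proposition \ref{prop:entfeature} with $S$ replaced by $A^c$, giving
\begin{align*}
W_{\mathcal{E}}[A^c] = (-1)^{|A^c|}\sum_{B\subset A^c}(-2)^{|B|}E^{(2)}_{\mathcal{E}}[B],
\end{align*}
so that $W_{\mathcal{E}}[A^c]^{-1}=(-1)^{|A^c|}\bigl[\sum_{B\subset A^c}(-2)^{|B|}E^{(2)}_{\mathcal{E}}[B]\bigr]^{-1}$ (invertibility is assumed since $r_S$ is already defined via the reconstruction map).

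Finally, I would collect the sign prefactor. The combined sign is $(-1)^{|S|-|A|+|A^c|}$, and substituting $|A^c|=n-|A|$ yields exponent $n+|S|-2|A|$, which reduces to $(-1)^{n+|S|}$ and can be pulled out of the sum over $A\subset S$. Assembling these pieces gives the claimed formula. The only mild obstacle is the sign tracking at the end; otherwise the proof is entirely mechanical.
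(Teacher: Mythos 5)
Your proof is correct and takes exactly the route the paper intends: it chains Proposition \ref{prop:r_s}, the counting identity $W_{\mathcal{E}}[S]=3^{|S|}\bar{W}_{\mathcal{E}}[S]$, and Proposition \ref{prop:entfeature} applied to $A^c$, with the sign bookkeeping $(-1)^{|S|-|A|}(-1)^{|A^c|}=(-1)^{n+|S|-2|A|}=(-1)^{n+|S|}$ handled correctly (using that $(-1)^{|A^c|}$ is its own inverse when inverting $W_{\mathcal{E}}[A^c]$).
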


For the locally scrambled unitary ensembles, the average shadow norm is defined in \cite{hu2021classical} as
$
\norm{O}^2_{\mathcal{E}}
=\mathbb{E}_{V\in U(2)^n}
\norm{O}^2_{\mathcal{E}_{V\rho V^\dag}}
$, which provides a typical lower bound for the sample complexity of expectation estimation using classical shadows with locally scrambled unitary ensembles. 

Just as we have provided an explicit formula for the reconstruction map in terms of the entanglement feature, so too can we provide an explicit formula for the shadow norm 
using the entanglement feature. 

\begin{prop}\label{prop:com_scr}
Given a locally scrambled unitary ensemble $\mathcal{E}$, the 
average shadow norm is
\begin{align}
\norm{O}^2_{\mathcal{E}}
=\frac{1}{4^n}\sum_{S\subset [n]}\bar{W}_{\mathcal{E}}[S]^{-1}
W_O[S],
\end{align}
where $W_O[S]=\sum_{\vec{a}:\supp(\vec{a})=S}W_O[\vec{a}]$.

\end{prop}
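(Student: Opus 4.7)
The plan is to reduce the claim to the Pauli-averaged shadow norm formula from the previous proposition, exploiting two facts: the linearity of the shadow norm in the input state, and a support-only symmetry of the coefficients $W_{\mathcal{E}}[\vec{a}]$ specific to locally scrambled ensembles. First I would observe that $\norm{O}^2_{\mathcal{E}_\rho}$ is a linear functional of $\rho$, since the state enters only through the Born probability $\Tr{\hat{\sigma}_{U,\vec{b}}\rho}$. Both the Pauli-group twirl and the local Haar twirl send $\rho \mapsto \mathbb{I}/2^n$: the latter because $\mathbb{E}_{V_i} V_i A V_i^\dag = \Tr{A}\mathbb{I}/2$ on each qubit, so the local Haar twirl acts on $\rho$ as the tensor product of single-qubit depolarizing channels. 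By linearity both averaging prescriptions yield the same value $\norm{O}^2_{\mathcal{E}_{\mathbb{I}/2^n}}$, and since every locally scrambled ensemble is Pauli-invariant, the earlier proposition gives $\norm{O}^2_{\mathcal{E}} = \frac{1}{4^n}\sum_{\vec{a}} W_{\mathcal{E}}[\vec{a}]^{-1} W_O[\vec{a}]$.

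Next I would show that $W_{\mathcal{E}}[\vec{a}]$ depends on $\vec{a}$ only through its support when $\mathcal{E}$ is locally scrambled. Given $\vec{a},\vec{c}$ with common support $S$, choose single-qubit unitaries $V_i^*$ satisfying $V_i^{*\dag} P_{a_i} V_i^* = P_{c_i}$ for each $i$ (possible because any non-identity single-qubit Pauli is unitarily equivalent to any other, and one may take $V_i^* = \mathbb{I}$ on $S^c$), and let $V^* = \bigotimes_i V_i^*$. Then $\Tr{\hat{\sigma}_{UV^{*\dag},\vec{b}} P_{\vec{a}}} = \Tr{\hat{\sigma}_{U,\vec{b}} V^{*\dag} P_{\vec{a}} V^*} = \Tr{\hat{\sigma}_{U,\vec{b}} P_{\vec{c}}}$. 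Substituting $U \to U V^{*\dag}$ in the definition of $W_{\mathcal{E}}[\vec{a}]$ and invoking the locally scrambled invariance $\mathbb{E}_U f(UV^{*\dag}) = \mathbb{E}_U f(U)$ gives $W_{\mathcal{E}}[\vec{a}] = W_{\mathcal{E}}[\vec{c}]$, hence $W_{\mathcal{E}}[\vec{a}] = \bar{W}_{\mathcal{E}}[\supp(\vec{a})]$.

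Substituting this support-only dependence into the Pauli-averaged formula and grouping terms by common support yields $\norm{O}^2_{\mathcal{E}} = \frac{1}{4^n}\sum_{S\subset[n]} \bar{W}_{\mathcal{E}}[S]^{-1}\sum_{\vec{a}:\supp(\vec{a})=S} W_O[\vec{a}] = \frac{1}{4^n}\sum_{S\subset[n]} \bar{W}_{\mathcal{E}}[S]^{-1} W_O[S]$, as claimed. I do not anticipate a significant obstacle; the only point requiring care is the support-only dependence of $W_{\mathcal{E}}$, which rests on the transitivity of single-qubit unitary conjugation on the set of non-identity single-qubit Paulis together with the tensor-product form of $V^*$.
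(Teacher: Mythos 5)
Your proof is correct and follows essentially the same route as the paper's: both reduce the claim to the Pauli-invariant average-shadow-norm formula $\norm{O}^2_{\mathcal{E}}=\frac{1}{4^n}\sum_{\vec{a}}W_{\mathcal{E}}[\vec{a}]^{-1}W_O[\vec{a}]$ and then show that local-basis invariance forces $W_{\mathcal{E}}[\vec{a}]$ to depend on $\vec{a}$ only through $\supp(\vec{a})$, after which grouping terms by support is immediate. The only differences are minor implementation choices: you prove the support-only lemma by pairwise conjugation, mapping $P_{\vec{a}}$ to $P_{\vec{c}}$ with an explicit product unitary $V^*$ and invoking $P(U)=P(UV^{*\dag})$, whereas the paper inserts the local Haar twirl inside the expectation and recognizes the result as the average over same-support Paulis; and you explicitly verify that the $U(2)^n$-averaged and Pauli-group-averaged shadow norms coincide (both twirls send $\rho\mapsto\mathbb{I}/2^n$, and the norm is linear in $\rho$), a point the paper uses without comment.
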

The proof of Proposition \ref{prop:com_scr} is presented in Appendix \ref{apen:local_scram}. 
Based on Propositions \ref{prop:entfeature} and \ref{prop:com_scr}, we can express the average shadow norm 
in terms of the entanglement feature as follows:

\begin{align}
\nonumber\norm{O}^2_{\mathcal{E}}
=\frac{1}{4^n}\sum_{S\subset [n]}(-3)^{|S|}
W_O[S]\left[ \sum_{A\subset [S]}
(-2)^{|A|}
E^{(2)}_{\mathcal{E}}[A]
\right]^{-1}.\\
\end{align} 

In summary, this section saw an application of our results to classical shadows with
locally scrambled unitary ensembles, where we
obtained an explicit formula
for the reconstruction map and the average shadow norm in terms of the entanglement features, thus
circumventing the need to solve the exponential-sized system of linear equations in \cite{hu2021classical}.
These explicit formulae may be helpful for the further analysis of the role of entanglement in the classical shadow protocol, which we shall leave for future work.

\subsection{Classical shadow tomography for quantum channels with  Pauli-invariant unitary ensembles}

The classical shadow paradigm was recently extended to the shadow tomography of quantum channels \cite{levy2021classical,kunjummen2021shadow}.
In these works, the unitary ensembles considered were the local and global Clifford ensembles. In this study, we extend their results to the case of Pauli-invariant unitary ensembles. 

Next, we shall outline the procedure proposed in \cite{levy2021classical,kunjummen2021shadow} for constructing the classical shadows for quantum channels. The unitary ensembles used will be assumed to be Pauli-invariant.

(1) Prepare $\ket{\vec{b}_{i}}$ with $\vec{b}_{i}\in\set{0,1}^n$ chosen uniformly randomly.

(2) Apply a unitary $U_i$ chosen from the locally Pauli-invariant ensemble $\mathcal{E}_i$.

(3) Apply the quantum channel $\mathcal{T}$.

(4) Apply a unitary $U$ chosen form the locally Pauli-invariant ensemble $\mathcal{E}_o$, where the unitary ensemble 
$\mathcal{E}_o$ may be different from  $\mathcal{E}_i$.

(5) Measure in the Pauli Z basis to get the output $\vec{b}_o\in\set{0,1}^n$.

The post-measurement state 
is given by  $\hat{\sigma}_{i,o}=U^{T}_{i}\proj{\vec{b}_i}U^{*}_{i}\ot  U^\dag_o\proj{\vec{b}_o} U_o=\hat{\sigma}_i\ot \hat{\sigma}_o$
where $\hat{\sigma}_i=U^{T}_{i}\proj{\vec{b}_i}U^{*}_{i}$ and $ \hat{\sigma}_o=U^\dag_o\proj{\vec{b}_o} U_o$.
Hence, given $\vec{b}_i,U_i,U_o$, the probability 
of getting the outcome $\vec{b}_o$ is given by
$
P(\vec{b}_o|\vec{b}_i,U_i,U_o)
=2^n\Tr{\proj{z}\mathcal{J}(\mathcal{T})},
$
where $\mathcal{J}(\mathcal{T})=\mathbb{I}\ot \Lambda(\proj{\Psi})$ is the Choi-Jamiolkowski state of $\mathcal{T}$, where $\ket{\Psi}=\frac{1}{\sqrt{2^n}}\sum_{\vec{i}}\ket{\vec{i}}\ket{\vec{i}}$ is the Bell state. 
It is easy to verify that 
$\sum_{\vec{b}_o}P(\vec{b}_o|\vec{b}_i,U_i,U_o)
=2^n\Tr{\hat{\sigma}_i\ot \mathbb{I}\mathcal{J}(\mathcal{T})}=1$. 
Also, the probability of obtaining $\vec{b}_i,U_i,U_o$ is equal to 
$
P(\vec{b}_i,U_i,U_o)=P(\vec{b}_i)P(U_i)P(U_o)
$.
Hence,
the ensemble of states is described by 
\begin{align}\label{eq:sha_chan}
\mathcal{E}_{i,o}
=\set{(\hat{\sigma}_{i,o}, P(\vec{b}_i,U_i,U_o,\vec{b}_o))},
\end{align}
where  the probability distribution $P(\vec{b}_i,U_i,U_o,\vec{b}_o)=P(\vec{b}_i,U_i,U_o)P(\vec{b}_o|\vec{b}_i,U_i,U_o)$.
Taking the average of the classical shadow, we obtain 
$
\sigma
=\mathbb{E}_{\hat{\sigma}\in \mathcal{E}_{i,o}}\hat{\sigma}
=\mathcal{M}_{i,o}[\mathcal{J}(\mathcal{T})].
$
To implement classical shadow tomography, the inverse 
of the shadow channel $\mathcal{M}^{-1}_{i,o}$ would need to be implemented. If the inverse exists, then
$
\mathcal{J}(\mathcal{T})=
\mathcal{M}^{-1}_{i,o}[\sigma]
=\mathbb{E}_{\hat{\sigma}\in \mathcal{E}_{i,o}}\mathcal{M}^{-1}_{i,o}[\hat{\sigma}].
$
Given a quantum state $\rho$ and an observable $O$, the estimator $\hat{o}$ is defined as 
$
\hat{o}=
\Tr{\mathcal{M}^{-1}_{i,o}[\hat{\sigma}_{i,o}]\rho^T\ot O}.
$
Therefore,
$
\mathbb{E}[\hat{o}]
=\Tr{\mathcal{T}[\rho]O}.
$

\begin{prop}\label{thm:main_chan}
Given two Pauli-invariant unitary ensembles $\mathcal{E}_{i}$ and $\mathcal{E}_o$, the shadow channel $\mathcal{M}_{i,o}$ is
\begin{align}
\mathcal{M}_{i,o}[P_{\vec{a}_i}\ot P_{\vec{a}_o}]
=
W_{\mathcal{E}_i}[P_{\vec{a}_i}]
W_{\mathcal{E}_o}[P_{\vec{a}_o}]
P_{\vec{a}_i}\ot P_{\vec{a}_o},
\end{align}
where $W_{\mathcal{E}_i}[P_{\vec{a}_i}]=\mathbb{E}_{b_i}\mathbb{E}_{U_i}
|\Tr{\hat{\sigma}_{i}P_{\vec{a}_i}}|^2$ and 
 $W_{\mathcal{E}_o}[P_{\vec{a}_o}]=\mathbb{E}_{b_o}\mathbb{E}_{U_o}
|\Tr{\hat{\sigma}_{o}P_{\vec{a}_o}}|^2$.  That is, $\mathcal{M}_{i,o}=\mathcal{M}_{i}\ot \mathcal{M}_{o}$, where 
$\mathcal{M}_{i}[P_{\vec{a}_i}]=W_{\mathcal{E}_i}[P_{\vec{a}_i}]P_{\vec{a}_i}$ and 
$\mathcal{M}_{o}[P_{\vec{a}_o}]=W_{\mathcal{E}_o}[P_{\vec{a}_o}]P_{\vec{a}_o}$
Hence, for the Pauli-invariant unitary ensemble, $\mathcal{M}^{-1}_{i,o}$ exists iff 
$W_{\mathcal{E}_i}[\vec{a}_i]>0, W_{\mathcal{E}_o}[\vec{a}_o]>0$ for all $\vec{a}$, and the reconstruction map 
is  defined by
\begin{align}
\mathcal{M}^{-1}_{i,o}[P_{\vec{a}_i}\ot P_{\vec{a}_o}]
=W_{\mathcal{E}_i}[\vec{a}_i]^{-1}W_{\mathcal{E}_o}[\vec{a}_o]^{-1}P_{\vec{a}_i}\ot P_{\vec{a}_o}.
\end{align}

\end{prop}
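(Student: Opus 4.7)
The plan is to reduce Proposition~\ref{thm:main_chan} to two independent applications of Theorem~\ref{thm:main1}, one for the input tensor factor and one for the output tensor factor, exploiting the product structure of the sample state $\hat{\sigma}_{i,o} = \hat{\sigma}_i \otimes \hat{\sigma}_o$ and the independence of $(U_i, \vec{b}_i)$ from $(U_o, \vec{b}_o)$.

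First I would unfold the shadow-channel average. Using the ensemble description \eqref{eq:sha_chan} together with $P(\vec{b}_o \mid \vec{b}_i, U_i, U_o) = 2^n \Tr{\hat{\sigma}_{i,o}\, \mathcal{J}(\mathcal{T})}$, one obtains, by linearity in the input,
\begin{align*}
\mathcal{M}_{i,o}[X] = \sum_{\vec{b}_i,\vec{b}_o} \mathbb{E}_{U_i,U_o}\, \hat{\sigma}_{i,o}\, \Tr{\hat{\sigma}_{i,o}\, X},
\end{align*}
where the factor of $2^n$ cancels against the uniform prior $P(\vec{b}_i) = 1/2^n$. Specializing to $X = P_{\vec{a}_i} \otimes P_{\vec{a}_o}$ and using $\Tr{(\hat{\sigma}_i \otimes \hat{\sigma}_o)(P_{\vec{a}_i} \otimes P_{\vec{a}_o})} = \Tr{\hat{\sigma}_i P_{\vec{a}_i}}\,\Tr{\hat{\sigma}_o P_{\vec{a}_o}}$, together with independence of the two draws, the joint sum and expectation factorize as
\begin{align*}
\Bigl(\sum_{\vec{b}_i} \mathbb{E}_{U_i}\, \hat{\sigma}_i\, \Tr{\hat{\sigma}_i P_{\vec{a}_i}}\Bigr) \otimes \Bigl(\sum_{\vec{b}_o} \mathbb{E}_{U_o}\, \hat{\sigma}_o\, \Tr{\hat{\sigma}_o P_{\vec{a}_o}}\Bigr),
\end{align*}
which is precisely $\mathcal{M}_i[P_{\vec{a}_i}] \otimes \mathcal{M}_o[P_{\vec{a}_o}]$.

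Next I would apply Theorem~\ref{thm:main1} to each single-system factor. The output-side factor is already in the standard form $\hat{\sigma}_o = U_o^\dag \proj{\vec{b}_o} U_o$, so Theorem~\ref{thm:main1} yields $\mathcal{M}_o[P_{\vec{a}_o}] = W_{\mathcal{E}_o}[\vec{a}_o]\, P_{\vec{a}_o}$ at once. The input-side factor has $\hat{\sigma}_i = U_i^T \proj{\vec{b}_i} U_i^*$, which I would recast by setting $V_i = U_i^*$ (so $V_i^\dag = U_i^T$) to get the standard form $\hat{\sigma}_i = V_i^\dag \proj{\vec{b}_i} V_i$; since complex conjugation maps each Pauli to $\pm$ a Pauli, Pauli-invariance of $\mathcal{E}_i$ carries over to the induced distribution on $V_i$, and Theorem~\ref{thm:main1} applies, giving $\mathcal{M}_i[P_{\vec{a}_i}] = W_{\mathcal{E}_i}[\vec{a}_i]\, P_{\vec{a}_i}$ with $W_{\mathcal{E}_i}$ matching the definition in the proposition.

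For invertibility, the basis $\{P_{\vec{a}_i} \otimes P_{\vec{a}_o}\}$ is orthogonal and $\mathcal{M}_{i,o}$ acts diagonally on it with eigenvalues $W_{\mathcal{E}_i}[\vec{a}_i]\, W_{\mathcal{E}_o}[\vec{a}_o]$, so $\mathcal{M}_{i,o}^{-1}$ exists iff every such eigenvalue is strictly positive, and its action is obtained by reciprocation of these eigenvalues, as stated. The main obstacle I expect is the bookkeeping around the transpose/conjugate on the input side---in particular, the sign from $Y^* = -Y$ when transferring Pauli-invariance---but this causes no genuine difficulty since the shadow-channel construction is insensitive to global $\pm 1$ factors on the sampled unitaries.
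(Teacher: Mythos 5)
Your proposal is correct, but it takes a more modular route than the paper. The paper proves this proposition by redoing the Theorem~\ref{thm:main1} computation directly on the doubled system: it expands $\hat{\sigma}_{i,o}$ in the $2n$-qubit Pauli basis, inserts a joint twirl by $P_{\vec{d}_i}\ot P_{\vec{d}_o}$ using Pauli-invariance of both ensembles, averages the phases $(-1)^{\inner{\vec{d}_i}{\vec{a}_i+\vec{c}_i}_s+\inner{\vec{d}_o}{\vec{a}_o+\vec{c}_o}_s}$ into $\delta_{\vec{a}_i,\vec{c}_i}\delta_{\vec{a}_o,\vec{c}_o}$, and only extracts the product $W_{\mathcal{E}_i}[P_{\vec{a}_i}]W_{\mathcal{E}_o}[P_{\vec{a}_o}]$ in the final line. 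You instead factorize at the outset---via $\hat{\sigma}_{i,o}=\hat{\sigma}_i\ot\hat{\sigma}_o$, the product form of the trace against $P_{\vec{a}_i}\ot P_{\vec{a}_o}$, and the independence of $(U_i,\vec{b}_i)$ from $(U_o,\vec{b}_o)$---and then cite Theorem~\ref{thm:main1} twice as a black box. Beyond avoiding a recomputation, your route makes explicit a point the paper's proof passes over silently: the input-side sample is $\hat{\sigma}_i=U_i^{T}\proj{\vec{b}_i}U_i^{*}$ rather than $U_i^\dag\proj{\vec{b}_i}U_i$, so Theorem~\ref{thm:main1} applies only after one checks that the induced ensemble of $V_i=U_i^{*}$ is still effectively Pauli-invariant; your observation that $P_{\vec{d}}^{T}(\cdot)\,P_{\vec{d}}^{*}=P_{\vec{d}}(\cdot)\,P_{\vec{d}}$ because the $\pm$ signs from conjugating Paulis cancel in the sandwich is precisely the needed justification, and the paper's joint-twirl step implicitly relies on the same cancellation. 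Your invertibility argument (diagonal action on the orthogonal basis $\set{P_{\vec{a}_i}\ot P_{\vec{a}_o}}$ with manifestly nonnegative eigenvalues, so invertibility is equivalent to strict positivity of every eigenvalue) coincides with the paper's conclusion. In short, both proofs rest on the same mechanism---Pauli twirling kills the cross terms---but your factorize-first presentation buys modularity and slightly more care with the transpose/conjugate bookkeeping, while the paper's direct joint computation has the virtue of being self-contained.
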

The proof of Proposition \ref{thm:main_chan} is presented in Appendix~\ref{Appendix:ChannelProofs}.

Now, for the classical shadow tomography 
for a quantum channel, the sample complexity is upper bounded by the 
following shadow norm:
\begin{align}
\nonumber\norm{\rho^T\ot O}_{\mathcal{E}_{\mathcal{T}}}^2
=\mathbb{E}_{\mathcal{E}_{\mathcal{T}}}
\hat{o}[\hat{\sigma}_{i,o}]^2.
\end{align}
Next, let us consider the shadow norm in the case where
the observable is some Pauli operator and apply this result to the problem of estimating Pauli channels.

\begin{prop}\label{prop:sam_cha}
If the observable is taken to be a Pauli operator $P_{\vec{a}}$, then the shadow norm is equal to
\begin{align}
\norm{\rho^T\ot P_{\vec{a}}}_{\mathcal{E}_{\mathcal{T}}}
=W_{\mathcal{E}_o}[\vec{a}]^{-1}
\frac{1}{2^{2n}}\sum_{\vec{a}_i}W_{\mathcal{E}_i}[\vec{a}_i]^{-1}
W_{\rho}[\vec{a}_i].
\end{align}
\end{prop}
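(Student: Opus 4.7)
The plan is to evaluate the squared shadow norm directly from its definition $\norm{\rho^T\otimes P_{\vec{a}}}_{\mathcal{E}_\mathcal{T}}^2 = \mathbb{E}_{\mathcal{E}_\mathcal{T}}\hat o[\hat\sigma_{i,o}]^2$, exploiting the product structure $\mathcal{M}^{-1}_{i,o}=\mathcal{M}_i^{-1}\otimes\mathcal{M}_o^{-1}$ from Proposition \ref{thm:main_chan} and applying Pauli invariance of $\mathcal{E}_o$ and $\mathcal{E}_i$ separately. (The stated proposition's left-hand side appears to be missing a squaring, judging by the definition of the shadow norm in the immediately preceding paragraph.) First, I would use Proposition \ref{thm:main_chan} and $\hat\sigma_{i,o}=\hat\sigma_i\otimes\hat\sigma_o$ to expand the estimator as
\begin{align*}
\hat o = \Bigl(\frac{1}{2^n}\sum_{\vec{a}_i}W_{\mathcal{E}_i}[\vec{a}_i]^{-1}\Tr{\hat\sigma_i P_{\vec{a}_i}}\Tr{\rho^T P_{\vec{a}_i}}\Bigr)\, W_{\mathcal{E}_o}[\vec{a}]^{-1}\Tr{\hat\sigma_o P_{\vec{a}}}.
\end{align*}
After squaring and taking the joint expectation under $P(\vec{b}_i)P(U_i)P(U_o)P(\vec{b}_o|\vec{b}_i,U_i,U_o)$ with $P(\vec{b}_o|\vec{b}_i,U_i,U_o)=\Tr{\hat\sigma_o\,\mathcal{T}(U_i\proj{\vec{b}_i}U_i^\dag)}$, the computation reduces to evaluating the joint moment $\mathbb{E}\bigl[\Tr{\hat\sigma_i P_{\vec{a}_i}}\Tr{\hat\sigma_i P_{\vec{a}_i'}}|\Tr{\hat\sigma_o P_{\vec{a}}}|^2\bigr]$ and summing over $\vec{a}_i,\vec{a}_i'$.

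Because the input-side Pauli traces are independent of $(U_o,\vec{b}_o)$, I would first carry out the output-side expectation. Setting $\tau:=\mathcal{T}(U_i\proj{\vec{b}_i}U_i^\dag)$ and expanding $\tau=\frac{1}{2^n}\sum_{\vec c}\Tr{\tau P_{\vec c}}P_{\vec c}$, the inner sum becomes $\sum_{\vec c}\Tr{\tau P_{\vec c}}\,\mathbb{E}_{U_o}\mathbb{E}_{\vec b_o}\Tr{\hat\sigma_o P_{\vec c}}\Tr{\hat\sigma_o P_{\vec a}}^2$. The substitution $U_o\mapsto U_o P_{\vec\sigma}$, which $\mathcal{E}_o$'s Pauli invariance leaves untouched, sends $\hat\sigma_o\mapsto P_{\vec\sigma}\hat\sigma_o P_{\vec\sigma}$ and introduces a net sign $(-1)^{\langle\vec\sigma,\vec c\rangle}$ in the triple product, forcing only $\vec c=\vec 0$ to survive. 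By trace-preservation of $\mathcal{T}$ the sum collapses to $\Tr{\tau}\,W_{\mathcal{E}_o}[\vec a]=W_{\mathcal{E}_o}[\vec a]$, which is independent of $U_i,\vec b_i$, so the remaining input-side average decouples cleanly.

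Finally, I would evaluate the input-side moment $\mathbb{E}_{U_i,\vec b_i}\Tr{\hat\sigma_i P_{\vec a_i}}\Tr{\hat\sigma_i P_{\vec a_i'}}$ by an entirely analogous Pauli-invariance argument on $\mathcal{E}_i$. Under $U_i\mapsto U_i P_{\vec\sigma}$ the input shadow $\hat\sigma_i=U_i^T\proj{\vec b_i}U_i^*$ transforms as $\hat\sigma_i\mapsto P_{\vec\sigma}\hat\sigma_i P_{\vec\sigma}$, where I use the identity $P_{\vec\sigma}^T=P_{\vec\sigma}^*$ so that the two transpose/conjugate signs cancel. This forces the moment to equal $\delta_{\vec a_i,\vec a_i'}W_{\mathcal{E}_i}[\vec a_i]$, collapsing the double sum; using $|\Tr{\rho^T P_{\vec a_i}}|^2=|\Tr{\rho P_{\vec a_i}^T}|^2=W_\rho[\vec a_i]$ (since $P_{\vec a_i}^T=\pm P_{\vec a_i}$) then yields the claimed $W_{\mathcal{E}_o}[\vec a]^{-1}\frac{1}{4^n}\sum_{\vec a_i}W_{\mathcal{E}_i}[\vec a_i]^{-1}W_\rho[\vec a_i]$. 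The main technical care lies in verifying the transformation law $\hat\sigma_i\mapsto P_{\vec\sigma}\hat\sigma_i P_{\vec\sigma}$ on the input side, since $\hat\sigma_i$ involves a transpose/conjugate rather than the adjoint; once this is established, the rest is a direct two-sided adaptation of the noiseless proof of $\|P_{\vec a}\|_{\mathcal{E}_\rho}^2=W_{\mathcal{E}}[\vec a]^{-1}$.
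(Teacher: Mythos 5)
Your proof is correct and takes essentially the same route as the paper's: both exploit the factorization $\mathcal{M}^{-1}_{i,o}=\mathcal{M}^{-1}_i\ot\mathcal{M}^{-1}_o$, use Pauli invariance of $\mathcal{E}_o$ to decouple the output side as $W_{\mathcal{E}_o}[\vec{a}]^{-1}$ (the paper realizes this as the twirl $\mathbb{E}_{P_{\vec{c}_o}}P_{\vec{c}_o}\hat{\sigma}_o P_{\vec{c}_o}=\mathbb{I}/2^n$ inside the Choi-state trace, while your $\vec{c}=\vec{0}$ Fourier collapse of $\tau=\mathcal{T}(U_i\proj{\vec{b}_i}U_i^\dag)$ is the same mechanism in equivalent form), and then collapse the input-side double Pauli sum to the diagonal via Pauli invariance of $\mathcal{E}_i$. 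Your two side remarks are also correct: the left-hand side of the proposition should indeed carry a square (the paper's appendix proof has the same typo), and your explicit verification that $U_i\mapsto U_iP_{\vec{\sigma}}$ sends $\hat{\sigma}_i\mapsto P_{\vec{\sigma}}\hat{\sigma}_i P_{\vec{\sigma}}$ via $P_{\vec{\sigma}}^T=P_{\vec{\sigma}}^*$ supplies a step the paper leaves implicit.
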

The proof of Proposition \ref{prop:sam_cha} is presented in Appendix~\ref{Appendix:ChannelProofs}.
Now, we provide an example of a class of channels for which our results on the classical shadow tomography for quantum channels can be applied.

\begin{example}[Estimation of Pauli channels]

A quantum channel $\mathcal{T}$ is called a \textit{Pauli channel} if it can be written as 
$\mathcal{T}[\cdot]=\sum_{\vec{a}\in V^n}p_{\vec{a}}P_{\vec{a}}[\ \cdot\ ]P^\dag_{\vec{a}}$ with $\sum_{\vec{a}}p_{\vec{a}}=1$. Equivalently, Pauli channels $\mathcal{T}$ are precisely those that can be written as
$\mathcal{T}[\cdot]=1/2^n\sum_{\vec{b}}\lambda_{\vec{b}}\Tr{\ \cdot\  P_{\vec{b}}}P_{\vec{b}}$. 
It is easy to see that the probability distribution $\set{p_{\vec{a}}}$ and the coefficients $\set{\lambda_{\vec{b}}}$ are related as follows
$
\lambda_{\vec{b}}
=\frac{1}{2^n}\sum_{\vec{a}}
p_{\vec{a}}(-1)^{\inner{\vec{a}}{\vec{b}}_s}
$, where $\inner{\vec{a}}{\vec{b}}_s=\sum_i\inner{a_i}{b_i}_s$ and 
$\inner{a_i}{b_i}_s=\inner{(x_1,x_2)}{(y_1,y_2)}_s=x_1y_2-x_2y_1$ for any $a_i=(x_1,x_2)$, $b_i=(y_1,y_2)\in V$.
The task of estimating the coefficients $\set{\lambda_{\vec{b}}}_{\vec{b}}$ of the
Pauli channel has been investigated in \cite{fujiwara2003quantum,hayashi2010quantum,chiuri2011experimental,ruppert2012optimal,collins2013mixed,flammia2020efficient,harper2021fast,flammia2021pauli,chen2020quantum}. 
Here, we consider the classical shadow protocol for estimating coefficients $\lambda_{\vec{b}}$ with Pauli-invariant unitary ensembles
$\mathcal{E}_i$ and $\mathcal{E}_o$. The classical shadow is taken to be $\set{\mathcal{M}^{-1}_{i,o}[\hat{\sigma}_{i,o}]}$, and 
the reconstruction map is given in Proposition \ref{thm:main_chan}. To estimate the coefficients $\lambda_{\vec{b}}$, let us consider the observable $P_{\vec{b}}\ot P_{\vec{b}}$, and take the estimator of the observable to be 
$\hat{o}_{\vec b}=
\Tr{\mathcal{M}^{-1}_{i,o}[\hat{\sigma}_{i,o}]P_{\vec{b}}\ot P_{\vec{b}}}$, where it is easy to verify that
$\mathbb{E}[\hat{o}_{\vec b}]=\Tr{\mathcal{J}[\mathcal{T}]P_{\vec{b}}\ot P_{\vec{b}}}=\lambda_{\vec{b}}$.
Then the
sample complexity $\mathscr S$ needed  to accurately predict a collection of $4^n$ linear 
target functions $\left\{\lambda_{\vec{b}}=\Tr{\mathcal{J}[\mathcal{T}]P_{\vec{b}}\ot P_{\vec{b}}]}\right\}_{\vec{b}\in V^n}$ with error $\epsilon$ and failure probability $\delta$ is 
\begin{align*}
    \mathscr S = O\left(
    \frac{n+\log(1/\delta)}{\epsilon^2}
    \max_{\vec{b}}
W_{\mathcal{E}_o}[\vec{b}]^{-1}W_{\mathcal{E}_i}[\vec{b}]^{-1}
    \right).
\end{align*}
\end{example}

\section{Conclusion}
In this study, we investigated the classical shadow protocol with Pauli-invariant unitary ensembles. 
First, we provided an explicit formula for the reconstruction map corresponding to the shadow channel and established a connection between the coefficients of the reconstruction map and the entanglement features of the dynamics. Using the shadow norm, we gave explicit sample complexity upper bounds for the estimation task that the classical shadow protocol solves. Finally, we presented two applications of our results. First, we applied our results to locally scrambled unitary ensembles, where we presented explicit formulas for the reconstruction map and the sample complexity bounds. Second, we applied our results to the shadow process tomography of quantum channels with Pauli-invariant unitary ensembles and provided an example where we considered the task of estimating Pauli channels.
These general results pave the way for more efficient or robust protocols for predicting pertinent properties of quantum states, including their fidelity, entanglement entropy and 
quantum fisher information.

\begin{acknowledgments}
K.B. thanks Hong-Ye Hu for the discussion on classical shadows and its connection with quantum entanglement. This work was supported in part by the ARO Grant W911NF-19-1-0302 and the ARO MURI Grant W911NF-20-1-0082.

\end{acknowledgments}

\bibliography{Paushad}

\appendix
\widetext

\newpage

\section{Symplectic Fourier Transform}


Let $V=\mathbb{Z}_2\times \mathbb{Z}_2$. We shall denote the $n$-fold Cartesian product of $V$ by $V^n = V \times V \times \cdots \times V$ and write its elements as $\vec x = (x_1, x_2, \ldots, x_n)$, where each $x_i = (x_{i1}, x_{i2})$ is a two-tuple. Given two functions $f,g:V^n\to \real$, their inner product is defined as
\begin{align}
\inner{f}{g}=\mathbb{E}_{x\in V^n}[f(x)g(x)].
\end{align}
where the expectation is taken over the uniform distribution, i.e.~$\mathbb E_{x \in V^n} = \frac{1}{4^n} \sum_{x \in V^n}$.

Next, let us consider the function $f:V^n\to \real$. Then it is straightforward to show that $f$ can be written as
\begin{align}
f(x)=\sum_{\vec{u}\in V^n}\hat{f}(u)(-1)^{\inner{u}{x}_s},
\end{align}
where $\hat f:V^n \to \mathbb R$ is the symplectic Fourier transform of $f$ defined as
\begin{align}
\hat{f}(u)=\mathbb{E}_{x\in V^n}\left[f(x)(-1)^{\inner{u}{x}_s}\right],
\end{align}
where the expectation is taken over the uniform distribution, i.e.~$\mathbb E_{x \in V^n} = \frac{1}{4^n} \sum_{x \in V^n}$,
and the symplectic inner product between $\inner{\vec{x}}{\vec{y}}_s=\sum_i\inner{x_i}{y_i}_s$ and 
\begin{align}
\inner{x_i}{y_i}_s=\inner{(a_1,a_2)}{(b_1,b_2)}_s=a_1b_2-a_2b_1,
\end{align}
for any $x_i=(a_1,a_2)$, $y_i=(b_1,b_2)\in V$.

\section{Noiseless classical shadows}
\label{sec:noiseless_classical_shadows}

\begin{thm}\label{thm:apn1}[Restatement of Theorem 1]
For a Pauli-invariant unitary ensemble $\mathcal{E}$, the shadow channel can be written as
\begin{align}
\mathcal{M}[\rho]
=\frac{1}{2^{n}}\sum_{\vec{a}\in V^n}
W_{\mathcal{E}}[\vec{a}]\Tr{\rho P_{\vec{a}}}P_{\vec{a}},
\end{align}
where 
$W_{\mathcal{E}}[\vec{a}]$ is the average squared Pauli coefficient of the classical shadow, defined as 

\begin{align}
W_{\mathcal{E}}[\vec{a}]=\mathbb{E}_{\vec b}\mathbb{E}_U
W_{\hat{\sigma}_{U,\vec b}}[\vec{a}],
\end{align}
where $\mathbb{E}_{\vec b}:=\frac{1}{2^n}\sum_{\vec b}$ denotes the average with respect to the uniform distribution.
Hence, for the Pauli-invariant unitary ensemble, the reconstruction map $\mathcal{M}^{-1}$ exists iff 
$W_{\mathcal{E}}[\vec{a}]>0$ for all $\vec{a}$, and the reconstruction map 
is defined as follows
\begin{align}
\mathcal{M}^{-1}[P_{\vec{a}}]
=W_{\mathcal{E}}[\vec{a}]^{-1}P_{\vec{a}}.
\end{align}

\end{thm}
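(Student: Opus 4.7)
The plan is to show that the shadow channel $\mathcal{M}$ acts diagonally in the Pauli basis, with $P_{\vec{a}}$ being an eigenvector of eigenvalue $W_{\mathcal{E}}[\vec{a}]$. Once this is established, the formula for $\mathcal{M}[\rho]$ follows by linearity applied to the Pauli expansion $\rho = 2^{-n}\sum_{\vec{a}}\Tr{\rho P_{\vec{a}}}P_{\vec{a}}$, and the invertibility statement together with the explicit form of $\mathcal{M}^{-1}$ are immediate consequences of diagonality.

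To establish the eigenvalue relation, I would compute the Pauli matrix elements $\Tr{\mathcal{M}[P_{\vec{b}}] P_{\vec{a}}}$ for arbitrary $\vec{a},\vec{b} \in V^n$. By the definition of $\mathcal{M}$, this matrix element equals $\mathbb{E}_U \sum_{\vec{c}} \Tr{\hat{\sigma}_{U,\vec{c}} P_{\vec{a}}}\Tr{\hat{\sigma}_{U,\vec{c}} P_{\vec{b}}}$. The key step---and really the only non-trivial one---is to use Pauli invariance to kill all off-diagonal entries. The substitution $U \mapsto UP_{\vec{\sigma}}$ is permitted by $P(U) = P(UP_{\vec{\sigma}})$ and sends $\hat{\sigma}_{U,\vec{c}}$ to $P_{\vec{\sigma}}^\dag \hat{\sigma}_{U,\vec{c}} P_{\vec{\sigma}}$; combined with the commutation identity $P_{\vec{\sigma}}^\dag P_{\vec{a}} P_{\vec{\sigma}} = (-1)^{\inner{\vec{\sigma}}{\vec{a}}_s} P_{\vec{a}}$, this shows that each of the two trace factors picks up a sign, so the matrix element is multiplied by $(-1)^{\inner{\vec{\sigma}}{\vec{a}+\vec{b}}_s}$. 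Since the matrix element must equal this sign-twisted version of itself for every $\vec{\sigma}$, and the symplectic form on $V^n$ is non-degenerate over $\mathbb{Z}_2$, the matrix element must vanish whenever $\vec{a} \neq \vec{b}$.

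For the surviving diagonal term $\vec{a}=\vec{b}$, the matrix element reduces to $\mathbb{E}_U \sum_{\vec{c}} |\Tr{\hat{\sigma}_{U,\vec{c}} P_{\vec{b}}}|^2$, which by the definition of $W_{\mathcal{E}}[\vec{b}]$ equals $2^n W_{\mathcal{E}}[\vec{b}]$. Together with the Pauli orthogonality relation $\Tr{P_{\vec{a}} P_{\vec{b}}} = 2^n \delta_{\vec{a},\vec{b}}$, this yields $\mathcal{M}[P_{\vec{b}}] = W_{\mathcal{E}}[\vec{b}] P_{\vec{b}}$, and linearity then delivers the claimed formula. Since the coefficients $W_{\mathcal{E}}[\vec{a}]$ are manifestly non-negative (being averages of squared real traces), diagonality implies that $\mathcal{M}^{-1}$ exists precisely when every $W_{\mathcal{E}}[\vec{a}] > 0$, in which case it must act on $P_{\vec{a}}$ with the reciprocal eigenvalue, giving $\mathcal{M}^{-1}[P_{\vec{a}}] = W_{\mathcal{E}}[\vec{a}]^{-1} P_{\vec{a}}$. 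The main obstacle is really just the off-diagonal vanishing argument; everything else is bookkeeping with Pauli expansions and the definition of $W_{\mathcal{E}}[\vec{a}]$.
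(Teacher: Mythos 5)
Your proposal is correct and follows essentially the same route as the paper's proof: both hinge on the Pauli-invariance substitution $U \mapsto UP_{\vec{\sigma}}$ together with the identity $P_{\vec{\sigma}}P_{\vec{a}}P_{\vec{\sigma}} = (-1)^{\inner{\vec{\sigma}}{\vec{a}}_s}P_{\vec{a}}$ to kill the off-diagonal Pauli matrix elements of $\mathcal{M}$, after which the formula, invertibility criterion, and reconstruction map follow from diagonality. The only cosmetic difference is that the paper averages the sign over all $\vec{\sigma}$ and invokes $\mathbb{E}_{\vec{d}}(-1)^{\inner{\vec{d}}{\vec{a}+\vec{c}}_s} = \delta_{\vec{a},\vec{c}}$, while you fix a single witness $\vec{\sigma}$ with $\inner{\vec{\sigma}}{\vec{a}+\vec{b}}_s = 1$ via non-degeneracy of the symplectic form---two equivalent ways of running the same argument.
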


\begin{proof}
Based on the definition of $\mathcal{M}[\rho]$, we have
\begin{align}
\nonumber\mathcal{M}[\rho]
&=\mathbb{E}_U
\sum_{\vec{b}\in\set{0,1}^n}
\hat{\sigma}_{U,\vec b}
\Tr{\hat{\sigma}_{U,\vec b}\rho}\\
\nonumber&=\sum_{\vec b}\mathbb{E}_U
\hat{\sigma}_{U,\vec b}
\Tr{\hat{\sigma}_{U,\vec b}\rho}\\
\label{eq:1.1}&=\frac{1}{2^{2n}}\sum_{\vec{b}\in\set{0,1}^n}\mathbb{E}_U\sum_{\vec{a},\vec{c}\in V^n}
\Tr{\hat{\sigma}_{U,\vec b}P_{\vec{a}}}P_{\vec{a}}
\Tr{\hat{\sigma}_{U,\vec b}P_{\vec{c}}}\Tr{P_{\vec{c}}\rho}\\
\label{eq:1.2}&=\frac{1}{2^{2n}}\sum_{\vec{b}\in\set{0,1}^n}\sum_{\vec{a},\vec{c}\in V^n}
\mathbb{E}_U\mathbb{E}_{\vec{d}\in V^n}
\Tr{P_{\vec{d}}\hat{\sigma}_{U,\vec b}P_{\vec{d}}P_{\vec{a}}}P_{\vec{a}}
\Tr{P_{\vec{d}}\hat{\sigma}_{U,\vec b}P_{\vec{d}}P_{\vec{c}}}\Tr{P_{\vec{c}}\rho}\\
\label{eq:1.3}&=\frac{1}{2^{2n}}\sum_{\vec{b}\in\set{0,1}^n}\sum_{\vec{a},\vec{c}\in V^n}
\mathbb{E}_U\mathbb{E}_{\vec{d}\in V^n}
\Tr{\hat{\sigma}_{U,b}P_{\vec{a}}}P_{\vec{a}}
\Tr{\hat{\sigma}_{U,b}P_{\vec{c}}}\Tr{P_{\vec{c}}\rho}(-1)^{\inner{\vec{d}}{\vec{a}+\vec{c}}_s}\\
\label{eq:1.4}&=\frac{1}{2^{2n}}\sum_{\vec{b}\in\set{0,1}^n}\sum_{\vec{a},\vec{c}\in V^n}
\mathbb{E}_U\delta_{\vec{a},\vec{c}}
\Tr{\hat{\sigma}_{U,\vec b}P_{\vec{a}}}P_{\vec{a}}
\Tr{\hat{\sigma}_{U,\vec b}P_{\vec{c}}}\Tr{P_{\vec{c}}\rho}\\
\nonumber&=\frac{1}{2^n}\sum_{\vec{a}\in V^n}\left[\frac{1}{2^n}\sum_{\vec{b}\in\set{0,1}^n}
\mathbb{E}_U
\Tr{\hat{\sigma}_{U,\vec b}P_{\vec{a}}}^2
\right]
\Tr{P_{\vec{a}}\rho}P_{\vec{a}}\\
&=\frac{1}{2^n}\sum_{\vec{a}\in V^n}
W_{\mathcal{E}}[\vec{a}]
\Tr{P_{\vec{a}}\rho}P_{\vec{a}},
\end{align}
where the equation \eqref{eq:1.1} comes from the Pauli decomposition of the classical 
shadow $\hat{\sigma}_{U,\vec{b}}=\frac{1}{2^n}\sum_{\vec{a}}\Tr{\hat{\sigma}_{U,\vec{b}} P_{\vec{a}}}P_{\vec{a}}$,
the equation \eqref{eq:1.2} comes from the fact that unitary ensemble is 
locally Pauli-invariant, the equation \eqref{eq:1.3}  comes from the fact $P_{\vec{c}}P_{\vec{a}}P_{\vec{c}}=(-1)^{\inner{\vec{a}}{\vec{c}}_s}P_{\vec{a}}$,
 the equation \eqref{eq:1.3}  comes from the fact $\mathbb{E}_{\vec{c}}(-1)^{\inner{\vec{a}}{\vec{c}}_s}=\delta_{\vec{a},\vec{0}}$.
If $W_{\mathcal{E}}(\vec{a})>0$ for all $\vec{a}$, then the reconstruction map 
can be defined as follows
\begin{align}
\mathcal{M}^{-1}[P_{\vec{a}}]
=W_{\mathcal{E}}[\vec{a}]^{-1}P_{\vec{a}}.
\end{align}
\end{proof}

\begin{prop}[Restatement of Proposition 2]
The coefficients of the reconstruction map can be expressed as follows
 \begin{align}
W_{\mathcal{E}}
[S]=(-1)^{|S|}
\sum_{A\subset [S]}
(-2)^{|A|}
E^{(2)}_{\mathcal{E}}[A].
\end{align}
\end{prop}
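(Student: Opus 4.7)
The plan is to connect the quantity $W_{\mathcal{E}}[S]$ to the entanglement feature $E^{(2)}_{\mathcal{E}}[A]$ through the Pauli expansion of the partial trace, and then invert the resulting sum-over-subsets relation via Möbius inversion.

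First I would write out the 2nd Rényi purity of the reduced state $\hat{\sigma}_A = \Tr_{A^c}[\hat{\sigma}_{U,\vec{b}}]$ in the Pauli basis. Expanding $\hat{\sigma}_{U,\vec{b}} = \frac{1}{2^n}\sum_{\vec{a}} \Tr{\hat{\sigma}_{U,\vec{b}} P_{\vec{a}}}\,P_{\vec{a}}$ and tracing out $A^c$, only terms with $\supp(\vec{a})\subset A$ survive (because $\Tr{P_{a_i}}=0$ for $a_i\neq 0$). A direct computation then gives
\begin{align*}
\Tr{\hat{\sigma}_A^{\,2}} = \frac{1}{2^{|A|}}\sum_{\vec{a}:\,\supp(\vec{a})\subset A} \bigl|\Tr{\hat{\sigma}_{U,\vec{b}} P_{\vec{a}}}\bigr|^2.
\end{align*}
Since $e^{-S^{(2)}_A(\hat{\sigma}_{U,\vec{b}})} = \Tr{\hat{\sigma}_A^{\,2}}$, averaging over $U$ and $\vec{b}$ yields
\begin{align*}
E^{(2)}_{\mathcal{E}}[A] = \frac{1}{2^{|A|}} \sum_{B\subset A} W_{\mathcal{E}}[B],
\end{align*}
where I have grouped the sum over $\vec{a}$ with $\supp(\vec{a})\subset A$ by its support $B$, using the definition $W_{\mathcal{E}}[B] = \sum_{\vec{a}:\supp(\vec{a})=B} W_{\mathcal{E}}[\vec{a}]$.

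Next, I would apply Möbius inversion on the Boolean lattice of subsets of $[n]$. The relation $2^{|A|} E^{(2)}_{\mathcal{E}}[A] = \sum_{B\subset A} W_{\mathcal{E}}[B]$ inverts to
\begin{align*}
W_{\mathcal{E}}[S] = \sum_{A\subset S} (-1)^{|S|-|A|}\, 2^{|A|}\, E^{(2)}_{\mathcal{E}}[A].
\end{align*}
Pulling out the overall sign via $(-1)^{|S|-|A|} = (-1)^{|S|}(-1)^{|A|}$ and combining with $2^{|A|}$ gives the claimed identity
\begin{align*}
W_{\mathcal{E}}[S] = (-1)^{|S|}\sum_{A\subset S} (-2)^{|A|}\, E^{(2)}_{\mathcal{E}}[A].
\end{align*}

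The only step that requires real care is the first one — cleanly deriving the Pauli-basis expression for $\Tr{\hat{\sigma}_A^{\,2}}$ and confirming that the support constraint is exactly $\supp(\vec{a})\subset A$, not $\supp(\vec{a})=A$. Once that identity and the resulting cumulative-sum relation with $E^{(2)}_{\mathcal{E}}[A]$ are in hand, the rest is a textbook application of subset Möbius inversion and a sign-rewriting, both of which are routine.
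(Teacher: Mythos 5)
Your proof is correct, and it takes a recognizably different route from the paper's, although both ultimately rest on the same inclusion--exclusion over the Boolean lattice of subsets. You argue ``backwards'': you first prove the scalar purity identity $\Tr{\hat{\sigma}_A^{2}}=2^{-|A|}\sum_{\vec{a}:\,\supp(\vec{a})\subset A}\abs{\Tr{\hat{\sigma}_{U,\vec b}P_{\vec{a}}}}^{2}$ (and your support condition $\supp(\vec{a})\subset A$, rather than $=A$, is indeed the correct one), average it to obtain the cumulative relation $2^{|A|}E^{(2)}_{\mathcal{E}}[A]=\sum_{B\subset A}W_{\mathcal{E}}[B]$, and then invert by M\"obius inversion --- a tool the paper itself records as a lemma, though it deploys it only later for Proposition 9. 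The paper instead argues ``forwards'' at the operator level: it introduces the erasure channel $D[\cdot]=\Tr{\cdot}\mathbb{I}/2$ and the idempotent map $\mathcal{L}=\mathrm{id}-D$, and its two auxiliary lemmas (Lemmas \ref{eq:dire_1} and \ref{eq:dire_2}) identify $\inner{\mathcal{L}^S[\rho_S]}{\rho_S}$ both with the exact-support sum $2^{-|S|}W_{\rho}[S]$ and, by expanding $\mathcal{L}^S=\ot_{i\in S}(\mathrm{id}_i-D_i)$, with $\sum_{A\subset S}(-1)^{|A|}\norm{D^{A}[\rho_S]}_2^2$; matching $\norm{D^{A}[\hat{\sigma}_S]}_2^2=2^{-|A|}\Tr{\hat{\sigma}_{S\setminus A}^{2}}$ against the entanglement feature then yields the claim after a change of summation variable. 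In substance, your M\"obius inversion is exactly the paper's binomial expansion of $\mathcal{L}^S$ in disguise, and your cumulative purity relation is the subset-summed form of Lemma \ref{eq:dire_1}. What each buys: your derivation is more self-contained and transparent --- no auxiliary channels, an explicitly triangular (hence manifestly invertible) linear relation between $\{W_{\mathcal{E}}[S]\}_S$ and $\{E^{(2)}_{\mathcal{E}}[A]\}_A$, and no steps left as ``easy to verify'' --- while the paper's operator packaging dovetails with the erasure-channel representation $\mathcal{M}^{-1}[\sigma]=\sum_S r_S D^{S}[\sigma]$ that it needs anyway for locally scrambled ensembles, where the same maps $D^{A}$ reappear. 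One small point in your favor: you silently read the paper's $A\subset[S]$ as $A\subset S$, which is the intended meaning.
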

\begin{proof}
It is easy to verify that 
 \begin{align}
W_{\mathcal{E}}
[S]=(-1)^{|S|}
\sum_{A\subset [S]}
(-2)^{|A|}
E^{(2)}_{\mathcal{E}}[A].
=(-1)^{|S|}
\sum_{A\subset [S]}
(-2)^{|A|}
\mathbb{E}_{\hat{\sigma}}
\Tr{\hat{\sigma}^2_{A}}
\end{align}
is equivalent to the following equation,
\begin{align}
W_{\mathcal{E}}
[S]
=2^{|S|}
\sum_{A\subset S}
(-1)^{|A|}
\mathbb{E}_{\hat{\sigma}}
\norm{D^A(\hat{\sigma}_S)}^2_2
=2^{|S|}
\sum_{A\subset S}
(-1)^{|A|}
\mathbb{E}_{\hat{\sigma}}
\norm{\hat{\sigma}_{S-A}\ot \frac{\mathbb{I}_A}{2^{|A|}}}^2_2,
\end{align}
where the expectation $\mathbb{E}_{\hat{\sigma}}$
denotes the expectation $\mathbb{E}_{U,b}$, and 
$\hat{\sigma}_S$ denotes the reduced state of $\hat{\sigma}$ on the subset 
$A$.

For any $n$-qubit state $\rho$,
the Pauli coefficient 
$W_{\rho}[\vec{a}]=|\Tr{\rho P_{\vec{a}}}|^2$, and 
\begin{align}
W_{\rho}[S]
=\sum_{\vec{a},\supp(\vec{a})=S}|\Tr{\rho P_{\vec{a}}}|^2
=\sum_{\vec{a},\supp(\vec{a})=S}|\Tr{\rho_S P_{\vec{a}}}|^2,
\end{align} 
where $\rho_S$ is the reduced state on the subset S.

Let us take the 1-qubit erasure channel $D[\cdot]=\Tr{\cdot}\mathbb{I}/2$. 
It is easy to see that $D$ satisfies the following conditions:
(1) $D$ is Hermitian, i.e.~$D^\dag=D$; (2) $D$ is idempotent, i.e.~$D^2=D$. For any subset $A\subset [n]$, let us denote $D^A=\ot_{i\in A} D_i$, where
$D_i$ denotes the erasure channel acting on the $i$-th qubit.
Now, let us define the 1-qubit linear map $\mathcal{L}[\cdot]$ as follows 
\begin{align}
\mathcal{L}[\rho]
=\rho-D[\rho].
\end{align}
It is easy to see that $\mathcal{L}$ satisfies the following conditions
(1) $\mathcal{L}^\dag=\mathcal{L}$; (2) $\mathcal{L}^2=\mathcal{L}$. 
For any subset $S\subset [n]$, let us denote $\mathcal{L}^S=\ot_{i\in S} \mathcal{L}_i$, where
$\mathcal{L}_i$ denotes the linear map $\mathcal{L}$ acting on the $i$-th qubit.
Here, we need the following two lemmas---Lemma \ref{eq:dire_1} and \ref{eq:dire_2}---which can be verified directly. 

\begin{lem}\label{eq:dire_1}
The inner product $\inner{\mathcal{L}^S[\rho_S]}{\rho_S}$ can be written as 
\begin{align}
\inner{\mathcal{L}^S[\rho_S]}{\rho_S}
=\frac{1}{2^{|S|}}
\sum_{\vec{a}:\supp(\vec{a})=S}
|\Tr{\rho_SP_{\vec{a}}}|^2,
\end{align}
where the inner product between two operators is defined as 
$\inner{A}{B}=\Tr{A^\dag B}$.
\end{lem}

\begin{lem}\label{eq:dire_2}
The inner product $\inner{\mathcal{L}^S[\rho_S]}{\rho_S}$ can be rewritten as 
\begin{align}
\inner{\mathcal{L}^S[\rho_S]}{\rho_S}
=\sum_{A\subset S}
(-1)^{|A|}\inner{D^A[\rho_S]}{\rho_S}
=\sum_{A\subset S}
(-1)^{|A|}\norm{D^A[\rho_S]}^2_2.
\end{align}

\end{lem}
Therefore, based on Lemmas \ref{eq:dire_1} and \ref{eq:dire_2}, we have 
\begin{align}
W_{\rho}[S]=2^{|S|}
\sum_{A\subset S}
(-1)^{|A|}
\norm{D^A(\rho_S)}^2_2.
\end{align}

Hence, 
\begin{align}
W_{\mathcal{E}}[S]
=\mathbb{E}_U\mathbb{E}_b
W_{\hat{\sigma}_{U,b}}[S]
=2^{|S|}\sum_{A\subset S}
(-1)^{|A|}\mathbb{E}_U\mathbb{E}_b
\norm{D^A(\hat{\sigma_S})}^2_2.
\end{align}


\end{proof}

\begin{prop}[Restatement of Proposition 3]
If the observable $O$ is taken to be 
Pauli operator $P_{\vec{a}}$, then the shadow norm is equal to 
\begin{align}
\norm{P_{\vec{a}}}^2_{\mathcal{E}_{\rho}}
=W_{\mathcal{E}}[\vec{a}]^{-1}.
\end{align}
\end{prop}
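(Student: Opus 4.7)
The plan is to plug the explicit form of the reconstruction map from Theorem~\ref{thm:apn1} into the definition of the estimator, square it, and then exploit Pauli-invariance (via a redundant averaging over a random Pauli right-multiple of $U$) to collapse the dependence on the input state $\rho$.

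First I would compute the estimator itself. Expanding $\hat\sigma$ in the Pauli basis and using the fact that $\mathcal M^{-1}[P_{\vec b}] = W_{\mathcal E}[\vec b]^{-1}P_{\vec b}$, together with the orthogonality relation $\Tr{P_{\vec a}P_{\vec b}} = 2^n\delta_{\vec a,\vec b}$, one immediately gets the simple expression
\begin{align}
\hat o[\hat\sigma] \;=\; \Tr{P_{\vec a}\,\mathcal M^{-1}[\hat\sigma]} \;=\; W_{\mathcal E}[\vec a]^{-1}\Tr{\hat\sigma P_{\vec a}}.
\end{align}
Squaring and inserting this into the definition \eqref{ShadowNormDefn} of the shadow norm yields
\begin{align}
\norm{P_{\vec a}}^2_{\mathcal E_\rho} \;=\; W_{\mathcal E}[\vec a]^{-2}\,\mathbb E_U\sum_{\vec b}\Tr{\hat\sigma_{U,\vec b}\rho}\,\bigl|\Tr{\hat\sigma_{U,\vec b}P_{\vec a}}\bigr|^2,
\end{align}
so everything reduces to evaluating the expectation on the right.

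Next I would use Pauli-invariance: since $P(U)=P(UP_{\vec d})$ for every $\vec d$, the expectation over $U$ is unchanged if I first average over a uniform random $\vec d\in V^n$. Under the substitution $U\mapsto UP_{\vec d}$, the classical shadow transforms as $\hat\sigma_{U,\vec b}\mapsto P_{\vec d}\hat\sigma_{U,\vec b}P_{\vec d}$. The factor $|\Tr{\hat\sigma_{U,\vec b}P_{\vec a}}|^2$ is invariant because the sign $(-1)^{\inner{\vec d}{\vec a}_s}$ gets squared. In the $\Tr{\hat\sigma_{U,\vec b}\rho}$ factor I would decompose $\rho = 2^{-n}\sum_{\vec c}\Tr{\rho P_{\vec c}}P_{\vec c}$, so that the conjugation produces a sign $(-1)^{\inner{\vec d}{\vec c}_s}$ in front of each Pauli component.

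The key collapse then comes from $\mathbb E_{\vec d}(-1)^{\inner{\vec d}{\vec c}_s}=\delta_{\vec c,\vec 0}$, which kills every Pauli component of $\rho$ except the identity and replaces $\Tr{\hat\sigma_{U,\vec b}\rho}$ by $2^{-n}\Tr{\rho}\Tr{\hat\sigma_{U,\vec b}}=2^{-n}$. The remaining sum is precisely
\begin{align}
W_{\mathcal E}[\vec a]^{-2}\cdot\mathbb E_U\mathbb E_{\vec b}\,\bigl|\Tr{\hat\sigma_{U,\vec b}P_{\vec a}}\bigr|^2 \;=\; W_{\mathcal E}[\vec a]^{-2}\,W_{\mathcal E}[\vec a] \;=\; W_{\mathcal E}[\vec a]^{-1},
\end{align}
which is the claimed identity. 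I do not anticipate any serious obstacle here; the only step that requires minor bookkeeping is matching the normalizations $\mathbb E_{\vec b}=2^{-n}\sum_{\vec b}$ against the explicit sums, and checking that the identity-component contribution gives exactly the factor needed to turn $\sum_{\vec b}$ into $\mathbb E_{\vec b}$ inside the definition of $W_{\mathcal E}[\vec a]$.
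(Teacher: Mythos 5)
Your proposal is correct and follows essentially the same route as the paper's proof: both exploit Pauli-invariance by inserting a redundant average over a random Pauli right-multiplication of $U$, observe that $\left|\Tr{\hat\sigma_{U,\vec b}P_{\vec a}}\right|^2$ is invariant because the sign $(-1)^{\inner{\vec d}{\vec a}_s}$ squares away, and collapse $\Tr{\hat\sigma_{U,\vec b}\,P_{\vec d}\rho P_{\vec d}}$ to $2^{-n}$ before recognizing the surviving expectation as $W_{\mathcal E}[\vec a]$. The only cosmetic differences are that you simplify the estimator via $\mathcal M^{-1}[P_{\vec a}]=W_{\mathcal E}[\vec a]^{-1}P_{\vec a}$ at the outset rather than at the end, and you verify $\mathbb E_{\vec d}\,P_{\vec d}\rho P_{\vec d}=\mathbb I/2^n$ directly from the character sum $\mathbb E_{\vec d}(-1)^{\inner{\vec d}{\vec c}_s}=\delta_{\vec c,\vec 0}$ where the paper simply invokes the fact that the Pauli group is a unitary $1$-design.
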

\begin{proof}
First,
\begin{align}
\nonumber\norm{P_{\vec{a}}}^2_{\mathcal{E}_{\rho}}
&=\mathbb{E}_{\mathcal{E}_{\rho}}\Tr{\mathcal{M}^{-1}[P_{\vec{a}}]\hat{\sigma}}^2\\
\nonumber&=\mathbb{E}_U\sum_{\vec {b}\in\set{0,1}^n}
\Tr{\mathcal{M}^{-1}[P_{\vec{a}}]\hat{\sigma}}^2
\Tr{\hat{\sigma}_{U,b}\rho}\\
\label{eq:6.1}&=\mathbb{E}_U\sum_{\vec{b}\in\set{0,1}^n}\mathbb{E}_{\vec{c}\in V^n}
\Tr{P_{\vec{c}}\mathcal{M}^{-1}[P_{\vec{a}}]P_{\vec{c}}\hat{\sigma}_{U,b}}^2
\Tr{\hat{\sigma}_{U,b}P_{\vec{c}}\rho P_{\vec{c}}}\\
\label{eq:6.2}&=\mathbb{E}_U\sum_{\vec{b}\in\set{0,1}^n}\mathbb{E}_{\vec{c}\in V^n}
[(-1)^{\inner{\vec{a}}{\vec{c}}_s}\Tr{\mathcal{M}^{-1}[P_{\vec{a}}]\hat{\sigma}_{U,b}}]^2
\Tr{\hat{\sigma}_{U,b}P_{\vec{c}}\rho P_{\vec{c}}}\\
\nonumber&=\mathbb{E}_U\sum_{\vec{b}\in \set{0,1}^n}
\left[\Tr{\mathcal{M}^{-1}[P_{\vec{a}}]\hat{\sigma}_{U,b}}\right]^2
\mathbb{E}_{\vec{c}\in V^n}\Tr{\hat{\sigma}_{U,b}P_{\vec{c}}\rho P_{\vec{c}}}\\
\label{eq:6.3}&=\mathbb{E}_U\sum_{\vec{b}\in\set{0,1}^n}
\left[\Tr{\mathcal{M}^{-1}[P_{\vec{a}}]\hat{\sigma}_{U,b}}\right]^2
\Tr{\hat{\sigma}_{U,b}\frac{\mathbb{I}}{2^n}}\\
\nonumber&=\frac{1}{2^n}\mathbb{E}_U\sum_{\vec{b}\in\set{0,1}^n}
\left[\Tr{\mathcal{M}^{-1}[P_{\vec{a}}]\hat{\sigma}_{U,b}}\right]^2\\
\label{eq:6.4}&=\frac{1}{2^n}\mathbb{E}_U\sum_{\vec{b}\in\set{0,1}^n} W_{\mathcal{E}}[\vec{a}]^{-2}
\left|\Tr{P_{\vec{a}}\hat{\sigma}_{U,\vec{b}}}\right|^2\\
\label{eq:6.5}&=W_{\mathcal{E}}[\vec{a}]^{-1},
\end{align}
where the equation \eqref{eq:6.1} comes from the fact that the unitary ensemble is 
 Pauli-invariant, and the equation \eqref{eq:6.2} comes from the fact that $P_{\vec{c}}P_{\vec{a}}P_{\vec{c}}=(-1)^{\inner{\vec{a}}{\vec{c}}_s}P_{\vec{a}}$,
 the equation \eqref{eq:6.3} comes from the fact that Pauli group is 1-design\footnote{Note that while the Pauli group is a 1-design, it fails to be 2-design \cite{roy2009unitary,webb2016clifford}.}, 
the equation \eqref{eq:6.4}  comes the definition of reconstruction map 
$\mathcal{M}^{-1}[P_{\vec{a}}]=W_{\mathcal{E}}[\vec{a}]^{-1}P_{\vec{a}}$, and 
the equation \eqref{eq:6.5}  comes the definition of 
$W_{\mathcal{E}}(\vec{a})$.
\end{proof}

\begin{prop}\label{prop:apen2}[Restatement of Proposition 4]
The average  shadow norm over the Pauli group
$\norm{O}^2_{\mathcal{E}}$ can be expressed as
\begin{align}
\norm{O}^2_{\mathcal{E}}
=\frac{1}{4^n}
\sum_{\vec{a}\in V^n}W_{\mathcal{E}}[\vec{a}]^{-1}
W_O[\vec{a}],
\end{align}
where  $W_O[\vec{a}]=|\Tr{OP_{\vec{a}}}|^2$.
\end{prop}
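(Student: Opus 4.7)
My plan is to unfold the definition of $\norm{O}^2_{\mathcal{E}}$, use the fact that the Pauli group is a $1$-design to collapse the average over $V\in\mathcal{P}_n$, then Pauli-decompose $O$ and invoke the explicit form of $\mathcal{M}^{-1}$ from Theorem \ref{thm:apn1}, and finally exploit Pauli-invariance of $\mathcal{E}$ to kill the off-diagonal terms in the resulting double sum over Pauli labels.

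First, I would write
\begin{align*}
\norm{O}^2_{\mathcal{E}}
=\mathbb{E}_{V\in \mathcal{P}_n}\mathbb{E}_{U}\sum_{\vec{b}}
\Tr{O\,\mathcal{M}^{-1}[\hat{\sigma}_{U,\vec{b}}]}^2\,
\Tr{\hat{\sigma}_{U,\vec{b}}\,V\rho V^\dag},
\end{align*}
and observe that since $\mathcal{P}_n$ is a $1$-design, $\mathbb{E}_{V\in\mathcal{P}_n} V\rho V^\dag = \mathbb{I}/2^n$, so the last trace becomes the $U,\vec{b}$-independent constant $1/2^n$. This yields $\norm{O}^2_{\mathcal{E}}=\frac{1}{2^n}\mathbb{E}_U\sum_{\vec b}\Tr{O\,\mathcal{M}^{-1}[\hat{\sigma}_{U,\vec b}]}^2$, which is precisely the point at which $\rho$ disappears. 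I would then expand $O=\frac{1}{2^n}\sum_{\vec a}\Tr{O P_{\vec a}}P_{\vec a}$ and use self-adjointness of $\mathcal{M}^{-1}$ together with $\mathcal{M}^{-1}[P_{\vec a}]=W_{\mathcal{E}}[\vec a]^{-1}P_{\vec a}$ to obtain
\begin{align*}
\Tr{O\,\mathcal{M}^{-1}[\hat{\sigma}_{U,\vec b}]}
=\frac{1}{2^n}\sum_{\vec a}\Tr{OP_{\vec a}}\,W_{\mathcal{E}}[\vec a]^{-1}\,\Tr{P_{\vec a}\hat{\sigma}_{U,\vec b}}.
\end{align*}

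Squaring this produces a double sum over $\vec a,\vec a'$, and the key step is to evaluate the residual factor $\mathbb{E}_U\sum_{\vec b}\Tr{P_{\vec a}\hat{\sigma}_{U,\vec b}}\Tr{P_{\vec a'}\hat{\sigma}_{U,\vec b}}$. Here I would replicate the Pauli-twirl trick already used in the proof of Theorem \ref{thm:apn1}: introduce an average $\mathbb{E}_{\vec d\in V^n}$, use Pauli-invariance of $\mathcal{E}$ to replace $\hat{\sigma}_{U,\vec b}$ by $P_{\vec d}\hat{\sigma}_{U,\vec b}P_{\vec d}$ inside the expectation over $U$, and pull out the phases via $P_{\vec d}P_{\vec a}P_{\vec d}=(-1)^{\inner{\vec a}{\vec d}_s}P_{\vec a}$. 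The resulting factor $\mathbb{E}_{\vec d}(-1)^{\inner{\vec a+\vec a'}{\vec d}_s}=\delta_{\vec a,\vec a'}$ enforces $\vec a=\vec a'$, and the remaining object is exactly $2^n\,W_{\mathcal{E}}[\vec a]$ after accounting for the normalization $\mathbb{E}_{\vec b}=\frac{1}{2^n}\sum_{\vec b}$.

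Substituting back collapses the double sum to a single sum and gives
\begin{align*}
\norm{O}^2_{\mathcal{E}}
=\frac{1}{2^n}\cdot\frac{1}{4^n}\sum_{\vec a}W_O[\vec a]\,W_{\mathcal{E}}[\vec a]^{-2}\cdot 2^n W_{\mathcal{E}}[\vec a]
=\frac{1}{4^n}\sum_{\vec a}W_{\mathcal{E}}[\vec a]^{-1}W_O[\vec a],
\end{align*}
as claimed. I do not anticipate a substantive obstacle: the computation is essentially the Pauli-twirl orthogonality argument from Theorem \ref{thm:apn1} applied twice (once to eliminate $\rho$ via the $1$-design average over $V$, and once to diagonalize the sum over $\vec a,\vec a'$). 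The only delicate point is bookkeeping — keeping track of whether $\mathbb{E}_{\vec b}$ is $\frac{1}{2^n}\sum_{\vec b}$ or $\sum_{\vec b}$, and similarly for $\mathbb{E}_{\vec d}$ versus $\sum_{\vec d}$ — so that the powers of $2^n$ and $4^n$ work out to the claimed $\frac{1}{4^n}$ prefactor.
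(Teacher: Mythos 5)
Your proposal is correct and takes essentially the same route as the paper's proof: the $1$-design average over $\mathcal{P}_n$ replaces $\rho$ by $\mathbb{I}/2^n$, the Pauli decomposition of $\Tr{O\,\mathcal{M}^{-1}[\hat{\sigma}_{U,\vec b}]}$ yields a double sum over Pauli labels, and the twirl orthogonality collapses it to $\frac{1}{4^n}\sum_{\vec a}W_{\mathcal{E}}[\vec a]^{-1}W_O[\vec a]$ exactly as in the paper. If anything, your explicit $\mathbb{E}_{\vec d}$-twirl justification of the diagonalizing $\delta_{\vec a,\vec a'}$ is more complete than the paper's proof, which inserts that Kronecker delta without comment, silently reusing the argument from Theorem~\ref{thm:apn1}.
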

\begin{proof}
Based on the definition of average shadow norm, we have
\begin{align*}
\norm{O}^2_{\mathcal{E}}
&=\mathbb{E}_{V\in \mathcal{P}_n}
\norm{O}^2_{\mathcal{E}_{V\rho V^\dag}}\\
&=\mathbb{E}_{V\in \mathcal{P}_n}
\norm{O}^2_{\mathcal{E}_{\mathbb{I}/2^n}}\\
&=\mathbb{E}_U
\sum_{\vec b}
\Tr{\mathcal{M}^{-1}[O]\hat{\sigma}}^2
\Tr{\hat{\sigma}_{U,\vec b}\frac{\mathbb{I}}{2^n}}\\
&=\mathbb{E}_U
\sum_{\vec b}
\Tr{\mathcal{M}^{-1}[O]\hat{\sigma}}^2
\Tr{\hat{\sigma}_{U,\vec b}\frac{\mathbb{I}}{2^n}}\\
&=\mathbb{E}_U
\mathbb{E}_{\vec b}
\Tr{\mathcal{M}^{-1}[O]\hat{\sigma}}^2\\
&=\mathbb{E}_U
\mathbb{E}_{\vec b}\frac{1}{4^n}
\sum_{\vec{a},\vec{c}}
\Tr{\mathcal{M}^{-1}[O]P_{\vec{a}}}\Tr{P_{\vec{a}}\hat{\sigma}}
\Tr{\mathcal{M}^{-1}[O]P_{\vec{c}}}\Tr{P_{\vec{c}}\hat{\sigma}}\\
&=\mathbb{E}_U
\mathbb{E}_{\vec b}\frac{1}{4^n}
\sum_{\vec{a},\vec{c}}
\Tr{\mathcal{M}^{-1}[O]P_{\vec{a}}}\Tr{P_{\vec{a}}\hat{\sigma}}
\Tr{\mathcal{M}^{-1}[O]P_{\vec{c}}}\Tr{P_{\vec{c}}\hat{\sigma}}
\delta_{\vec{a},\vec{c}}\\
&=\mathbb{E}_U
\mathbb{E}_{\vec b}\frac{1}{4^n}
\sum_{\vec{a}}
|\Tr{\mathcal{M}^{-1}[O]P_{\vec{a}}}|^2|\Tr{P_{\vec{a}}\hat{\sigma}}|^2\\
&=\frac{1}{4^n}
\sum_{\vec{a}}
|\Tr{\mathcal{M}^{-1}[O]P_{\vec{a}}}|^2 W_{\mathcal{E}}[\vec{a}]\\
&=\frac{1}{4^n}
\sum_{\vec{a}}W_{\mathcal{E}}[\vec{a}]^{-1}
|\Tr{OP_{\vec{a}}}|^2\\
&=\frac{1}{4^n}
\sum_{\vec{a}}W_{\mathcal{E}}[\vec{a}]^{-1}
W_{O}[\vec{a}].
\end{align*}
\end{proof}

\section{Noisy classical shadows}
\label{sec:noisy_classical_shadows}

\begin{thm}[Restatement of Theorem 5]
Given a Pauli-invariant unitary ensemble $\mathcal{E}$ and a noise channel $\Lambda$, 
then shadow channel is given by 
\begin{align}
\mathcal{M}_{\Lambda}[\rho]
=\frac{1}{2^n}\sum_{\vec{a}}W_{\mathcal{E}_{\Lambda}}[\vec{a}]
\Tr{\rho P_{\vec{a}}}P_{\vec{a}},
\end{align}
where $W_{\mathcal{E}_{\Lambda}}[\vec{a}]$ is the average Pauli coefficient of the noisy classical shadow, and 
is defined as follows
\begin{align}
W_{\mathcal{E}_{\Lambda}}[\vec{a}]=\mathbb{E}_b\mathbb{E}_U
\Tr{\hat{\sigma}_{U,b}P_{\vec{a}}}\Tr{U^\dag \Lambda^\dag[\proj{b}] U P_{\vec{a}}}.
\end{align} 
Hence, for the Pauli-invariant unitary ensemble, $\mathcal{M}^{-1}_{\Lambda}$ exists iff 
$W_{\mathcal{E}_{\Lambda}}[\vec{a}]>0$ for all $\vec{a}$, and the reconstruction map 
is  defined as follows
\begin{align}
\mathcal{M}^{-1}_{\Lambda}[P_{\vec{a}}]
=W_{\mathcal{E}_\Lambda}[\vec{a}]^{-1}P_{\vec{a}}.
\end{align}
\end{thm}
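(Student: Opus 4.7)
My approach will mirror the noiseless argument of Theorem~\ref{thm:apn1} almost line by line, with the only structural change being that the Born-rule weight $\Tr{\hat{\sigma}_{U,\vec b}\rho}$ is replaced by its noisy counterpart $\Tr{\proj{\vec b}\Lambda[U\rho U^\dag]}$. The first move is to push this weight into Heisenberg form, rewriting it as $\Tr{U^\dag\Lambda^\dag[\proj{\vec b}]U\,\rho}$, so that $\rho$ is paired against a $U,\vec b$--dependent operator in direct parallel to the noiseless case (where that operator was simply $\hat{\sigma}_{U,\vec b}$).

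Next I would expand both operator factors in the Pauli basis,
\begin{align*}
\hat{\sigma}_{U,\vec b}
&=\frac{1}{2^n}\sum_{\vec a}\Tr{\hat{\sigma}_{U,\vec b}P_{\vec a}}P_{\vec a}, \\
U^\dag\Lambda^\dag[\proj{\vec b}]U
&=\frac{1}{2^n}\sum_{\vec{a}'}\Tr{U^\dag\Lambda^\dag[\proj{\vec b}]U\,P_{\vec{a}'}}P_{\vec{a}'},
\end{align*}
so that $\mathcal{M}_\Lambda[\rho]$ becomes a double Pauli sum weighted by $\Tr{P_{\vec{a}'}\rho}$. I would then invoke Pauli-invariance through the right-multiplication substitution $U\mapsto UP_{\vec c}$, which conjugates both operator factors by $P_{\vec c}$ and therefore produces the phase $(-1)^{\inner{\vec a+\vec{a}'}{\vec c}_s}$. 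Averaging over $\vec c\in V^n$ and applying the character orthogonality $\mathbb{E}_{\vec c}(-1)^{\inner{\vec a+\vec{a}'}{\vec c}_s}=\delta_{\vec a,\vec{a}'}$ collapses the double sum to a single diagonal sum, and recognising the resulting bracket as $W_{\mathcal{E}_\Lambda}[\vec a]$ yields the stated form of $\mathcal{M}_\Lambda[\rho]$.

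For the reconstruction map, I would test the formula on a single Pauli input and use $\Tr{P_{\vec a}P_{\vec{a}'}}=2^n\delta_{\vec a,\vec{a}'}$ to conclude that $\mathcal{M}_\Lambda$ acts diagonally in the Pauli basis with eigenvalues $W_{\mathcal{E}_\Lambda}[\vec a]$. Since the Paulis span the operator space, $\mathcal{M}_\Lambda$ is invertible precisely when every such eigenvalue is nonzero, and in that case the inverse must send $P_{\vec a}\mapsto W_{\mathcal{E}_\Lambda}[\vec a]^{-1}P_{\vec a}$. The only step that requires any extra care is checking that the Pauli-symmetrisation survives insertion of the noise channel, and this is immediate: the inserted $P_{\vec c}$ is absorbed entirely into the sampled unitary $U$, while $\Lambda$ sits \emph{between} the conjugations by $U$ and $U^\dag$ and is therefore untouched by the substitution. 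Everything else is a routine repetition of the noiseless proof.
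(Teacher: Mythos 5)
Your proposal is correct and follows essentially the same route as the paper's own proof: rewrite the noisy Born weight in Heisenberg form as $\Tr{U^\dag\Lambda^\dag[\proj{\vec b}]U\rho}$, expand both factors in the Pauli basis, use right Pauli-invariance $U\mapsto UP_{\vec d}$ together with the character orthogonality $\mathbb{E}_{\vec d}(-1)^{\inner{\vec d}{\vec a+\vec c}_s}=\delta_{\vec a,\vec c}$ to diagonalize, and read off the diagonal action to get invertibility and the reconstruction map. Your explicit remark that $\Lambda$ sits between the conjugations and is untouched by the substitution is exactly the (implicit) justification the paper relies on.
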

\begin{proof}
First,
\begin{align*}
\mathcal{M}[\rho]
&=\mathbb{E}_U
\sum_{\vec{b}\in\set{0,1}^n}
\hat{\sigma}_{U,\vec b}
\Tr{\proj{\vec b}\Lambda[U\rho U^\dag]}\\
&=\frac{1}{2^{2n}}
\sum_{\vec{b}\in\set{0,1}^n}\mathbb{E}_U
\sum_{\vec{a},\vec{c}\in V^n}
\Tr{\hat{\sigma}_{U,\vec b}P_{\vec{a}}}
P_{\vec{a}}
\Tr{U^\dag \Lambda^\dag\left[\proj{\vec b}\right] U P_{\vec{c}}]}
\Tr{\rho P_{\vec{c}}}\\
&=\frac{1}{2^{2n}}
\sum_{\vec{b}\in\set{0,1}^n}\mathbb{E}_U
\sum_{\vec{a},\vec{c}\in V^n}\mathbb{E}_{\vec{d}}
\Tr{\hat{\sigma}_{U,\vec b}P_{\vec{a}}}
P_{\vec{a}}
\Tr{U^\dag \Lambda^\dag\left[\proj{\vec b}\right] U P_{\vec{c}}]}
\Tr{\rho P_{\vec{c}}}(-1)^{\inner{\vec{d}}{\vec{a}+\vec{c}}_s}\\
&=\frac{1}{2^{2n}}
\sum_{\vec{b}\in\set{0,1}^n}\mathbb{E}_U
\sum_{\vec{a},\vec{c}\in V^n}
\Tr{\hat{\sigma}_{U,b}P_{\vec{a}}}
P_{\vec{a}}
\Tr{U^\dag \Lambda^\dag\left[\proj{\vec b}\right] U P_{\vec{c}}}
\Tr{\rho P_{\vec{c}}}\delta_{\vec{a},\vec{c}}\\
&=\frac{1}{2^{2n}}
\sum_{\vec{b}\in\set{0,1}^n}\mathbb{E}_U
\sum_{\vec{a}\in V^n}
\Tr{\hat{\sigma}_{U,\vec b}P_{\vec{a}}}
P_{\vec{a}}
\Tr{U^\dag \Lambda^\dag\left[\proj{\vec b}\right] U P_{\vec{a}}]}
\Tr{\rho P_{\vec{a}}}\\
&=\frac{1}{2^n}\sum_{\vec{a}\in V^n}W_{\mathcal{E}_{\Lambda}}[\vec{a}]
\Tr{\rho P_{\vec{a}}}P_{\vec{a}},
\end{align*}
where $W_{\mathcal{E}_{\Lambda}}[\vec{a}]$ is defined as 
\begin{align}
W_{\mathcal{E}_{\Lambda}}[\vec{a}]=\mathbb{E}_{\vec b}\mathbb{E}_U
\Tr{\hat{\sigma}_{U,b}P_{\vec{a}}}\Tr{U^\dag \Lambda^\dag[\proj{b}] U P_{\vec{a}}}.
\end{align}

\end{proof}

\begin{prop}[Restatement of Proposition 6]
If the observable $O$ is taken to be 
Pauli operator $P_{\vec{a}}$, then the shadow norm
is equal to 
\begin{align}
\norm{P_{\vec{a}}}^2_{\mathcal{E}_{\Lambda},\rho}
=W_{\mathcal{E}_{\Lambda}}[\vec{a}]^{-2}W^u_{\mathcal{E}_{\Lambda}}[\vec{a}],
\end{align}
where $W^u_{\mathcal{E}_{\Lambda}}[\vec{a}]$ is defined as
\begin{align}
W^u_{\mathcal{E}_{\Lambda}}[\vec{a}]
=\mathbb{E}_U\mathbb{E}_{\vec b} 
|\Tr{P_{\vec{a}}\hat{\sigma}_{U,b}}|^2\Tr{\proj{b}\Lambda[\mathbb{I}]}.
\end{align}
Hence, if $\Lambda$ is unital, then $W^u_{\mathcal{E}_{\Lambda}}[\vec{a}]=W_{\mathcal{E}}[\vec{a}]$. 
\end{prop}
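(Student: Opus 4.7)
The plan is to mimic the proof of the noiseless counterpart (Proposition 3 above), substituting the noisy measurement probability $\Tr{\proj{\vec b}\Lambda[U\rho U^\dag]}$ for the Born-rule probability $\Tr{\hat\sigma_{U,\vec b}\rho}$ and the noisy reconstruction map $\mathcal{M}^{-1}_\Lambda[P_{\vec a}] = W_{\mathcal{E}_\Lambda}[\vec a]^{-1}P_{\vec a}$ for its noiseless counterpart. Unfolding the definition of the shadow norm and using linearity one first arrives at
\begin{align*}
\norm{P_{\vec a}}^2_{\mathcal{E}_\Lambda,\rho}
= W_{\mathcal{E}_\Lambda}[\vec a]^{-2}\,\mathbb{E}_U\sum_{\vec b}\Tr{\proj{\vec b}\Lambda[U\rho U^\dag]}\,\abs{\Tr{P_{\vec a}\hat\sigma_{U,\vec b}}}^2,
\end{align*}
so the only remaining task is to show that the sum on the right equals $W^u_{\mathcal{E}_\Lambda}[\vec a]$.

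The main step is a Pauli-twirling argument on the unitary ensemble, parallel to the manipulation leading to \eqref{eq:6.3}. Because the ensemble is Pauli-invariant, $P(U) = P(UP_{\vec c})$ for every $P_{\vec c}\in\mathcal{P}_n$, so I may replace $U$ by $UP_{\vec c}$ under $\mathbb{E}_U$ and then average $\vec c$ uniformly over $V^n$ without changing anything. Under this replacement $\hat\sigma_{UP_{\vec c},\vec b} = P_{\vec c}\hat\sigma_{U,\vec b}P_{\vec c}$, and the identity $P_{\vec c}P_{\vec a}P_{\vec c} = (-1)^{\inner{\vec a}{\vec c}_s}P_{\vec a}$ shows that $\abs{\Tr{P_{\vec a}\hat\sigma_{UP_{\vec c},\vec b}}}^2 = \abs{\Tr{P_{\vec a}\hat\sigma_{U,\vec b}}}^2$ (the sign is squared away). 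The second factor transforms as $\Tr{\proj{\vec b}\Lambda[UP_{\vec c}\rho P_{\vec c}U^\dag]}$, and averaging $\vec c$ produces the 1-design identity $\mathbb{E}_{\vec c}P_{\vec c}\rho P_{\vec c} = \mathbb{I}/2^n$, so the input state $\rho$ is erased and replaced by $\mathbb{I}/2^n$.

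After this twirl the sum becomes $\frac{1}{2^n}\mathbb{E}_U\sum_{\vec b}\Tr{\proj{\vec b}\Lambda[\mathbb{I}]}\abs{\Tr{P_{\vec a}\hat\sigma_{U,\vec b}}}^2 = \mathbb{E}_U\mathbb{E}_{\vec b}\Tr{\proj{\vec b}\Lambda[\mathbb{I}]}\abs{\Tr{P_{\vec a}\hat\sigma_{U,\vec b}}}^2$, which is exactly $W^u_{\mathcal{E}_\Lambda}[\vec a]$ by definition; combined with the prefactor $W_{\mathcal{E}_\Lambda}[\vec a]^{-2}$ this yields the claimed equality. For the unitality corollary, $\Lambda[\mathbb{I}] = \mathbb{I}$ implies $\Tr{\proj{\vec b}\Lambda[\mathbb{I}]} = 1$, so the definition of $W^u_{\mathcal{E}_\Lambda}[\vec a]$ collapses to $\mathbb{E}_U\mathbb{E}_{\vec b}\abs{\Tr{P_{\vec a}\hat\sigma_{U,\vec b}}}^2 = W_{\mathcal{E}}[\vec a]$.

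There is no real obstacle here; the only delicate point is the sign bookkeeping for $P_{\vec c}P_{\vec a}P_{\vec c}$, which fortunately enters squared and so cancels, after which the Pauli 1-design identity does all the work of eliminating the $\rho$-dependence. The proof is essentially a parallel copy of Proposition 3's derivation with the noiseless probability $\Tr{\hat\sigma_{U,\vec b}\rho}$ replaced by $\Tr{\proj{\vec b}\Lambda[U\rho U^\dag]}$ and the resulting residual factor $\Tr{\proj{\vec b}\Lambda[\mathbb{I}]}$ absorbed into the definition of $W^u_{\mathcal{E}_\Lambda}[\vec a]$.
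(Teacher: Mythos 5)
Your proof is correct and follows essentially the same route as the paper's: a Pauli twirl $U \mapsto UP_{\vec c}$ justified by Pauli-invariance, the sign from $P_{\vec c}P_{\vec a}P_{\vec c}=(-1)^{\inner{\vec a}{\vec c}_s}P_{\vec a}$ cancelling upon squaring, and the 1-design identity $\mathbb{E}_{\vec c}\,P_{\vec c}\rho P_{\vec c}=\mathbb{I}/2^n$ erasing the $\rho$-dependence to leave the residual factor $\Tr{\proj{\vec b}\Lambda[\mathbb{I}]}$. The only cosmetic differences are that you substitute $\mathcal{M}^{-1}_{\Lambda}[P_{\vec a}]=W_{\mathcal{E}_\Lambda}[\vec a]^{-1}P_{\vec a}$ at the outset rather than at the end, and you keep the probability in the form $\Tr{\proj{\vec b}\Lambda[U\rho U^\dag]}$ instead of the equivalent adjoint form $\Tr{U^\dag\Lambda^\dag[\proj{\vec b}]U\rho}$ used in the paper.
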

\begin{proof}
First,
\begin{align}
\nonumber\norm{P_{\vec{a}}}^2_{\mathcal{E}_{\Lambda},\rho}
&=\mathbb{E}_{\mathcal{E}_{\Lambda}}\Tr{\mathcal{M}^{-1}_{\Lambda}[P_{\vec{a}}]\hat{\sigma}}^2\\
\nonumber&=\mathbb{E}_U\sum_{\vec{b}\in\set{0,1}^n}
\Tr{\mathcal{M}^{-1}_{\Lambda}[P_{\vec{a}}]\hat{\sigma}}^2
\Tr{U^\dag\Lambda^\dag\left[\proj{\vec b}\right]U\rho}\\
\label{eq:6.1x}&=\mathbb{E}_U\sum_{\vec{b}\in\set{0,1}^n}\mathbb{E}_{\vec{c}\in V^n}
\Tr{P_{\vec{c}}\mathcal{M}^{-1}_{\Lambda}[P_{\vec{a}}]P_{\vec{c}}\hat{\sigma}_{U,\vec b}}^2
\Tr{U^\dag\Lambda^\dag\left[\proj{\vec b}\right]UP_{\vec{c}}\rho P_{\vec{c}}}\\
\label{eq:6.2x}&=\mathbb{E}_U\sum_{\vec{b}\in\set{0,1}^n}\mathbb{E}_{\vec{c}\in V^n}
[(-1)^{\inner{\vec{a}}{\vec{c}}}\Tr{\mathcal{M}^{-1}_{\Lambda}[P_{\vec{a}}]\hat{\sigma}_{U,\vec b}}^2
\Tr{U^\dag\Lambda^\dag[\proj{\vec b}]UP_{\vec{c}}\rho P_{\vec{c}}}\\
\nonumber&=\mathbb{E}_U\sum_{\vec{b}\in\set{0,1}^n}
\left[\Tr{\mathcal{M}^{-1}_{\Lambda}[P_{\vec{a}}]\hat{\sigma}_{U,b}}\right]^2
\mathbb{E}_{\vec{c}\in V^n}\Tr{U^\dag\Lambda^\dag[\proj{b}]UP_{\vec{c}}\rho P_{\vec{c}}}\\
\label{eq:6.3x}&=\mathbb{E}_U\sum_{\vec{b}\in\set{0,1}^n}
\left[\Tr{\mathcal{M}^{-1}_{\Lambda}[P_{\vec{a}}]\hat{\sigma}_{U,\vec b}}\right]^2
\Tr{U^\dag\Lambda^\dag[\proj{\vec b}]U\frac{\mathbb{I}}{2^n}}\\
\nonumber&=\frac{1}{2^n}\mathbb{E}_U\sum_{\vec{b}\in\set{0,1}^n}
\left[\Tr{\mathcal{M}^{-1}_{\Lambda}[P_{\vec{a}}]\hat{\sigma}_{U,b}}\right]^2\Tr{\proj{\vec b}\Lambda[\mathbb{I}]}\\
\label{eq:6.4x}&=\frac{1}{2^n}\mathbb{E}_U\sum_{\vec{b}\in\set{0,1}^n} W_{\mathcal{E}}[\vec{a}]^{-2}
|\Tr{P_{\vec{a}}\hat{\sigma}_{U,\vec b}}|^2\Tr{\proj{\vec b}\Lambda[\mathbb{I}]}\\
\label{eq:6.5x}&=W_{\mathcal{E}_{\Lambda}}[\vec{a}]^{-2}W^u_{\mathcal{E}_{\Lambda}}[\vec{a}],
\end{align}
where 
\begin{align}
W^u_{\mathcal{E}_{\Lambda}}[\vec{a}]
=\frac{1}{2^n}\mathbb{E}_U\sum_{\vec b} 
|\Tr{P_{\vec{a}}\hat{\sigma}_{U,b}}|^2\Tr{\proj{b}\Lambda[\mathbb{I}]}.
\end{align}
\end{proof}

\begin{prop}[Restatement of Proposition 7]
The average shadow norm in the noisy classical 
shadow protocol with the noise channel $\Lambda$
 can be expressed as follows
\begin{align}
\norm{O}^2_{\mathcal{E}_\Lambda}
=\frac{1}{4^n}
\sum_{\vec{a}}W_{\mathcal{E}_{\Lambda}}[\vec{a}]^{-2}W^u_{\mathcal{E}}[\vec{a}]
W_O[\vec{a}].
\end{align}

\end{prop}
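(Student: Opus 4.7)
The plan is to mirror the strategy used for the noiseless average shadow norm (Proposition 4 / Proposition \ref{prop:apen2}), but threading the noise channel $\Lambda$ through the computation. First I would expand the definition
\[
\norm{O}^2_{\mathcal E_\Lambda}
=\mathbb E_{V\in\mathcal P_n}\norm{O}^2_{\mathcal E_{\Lambda,V\rho V^\dag}}
=\mathbb E_U\sum_{\vec b}\Tr{\mathcal M^{-1}_\Lambda[O]\hat\sigma_{U,\vec b}}^2\,\mathbb E_{V\in\mathcal P_n}\Tr{\proj{\vec b}\Lambda[UV\rho V^\dag U^\dag]},
\]
and use the fact that the Pauli group is a unitary $1$-design, so $\mathbb E_V V\rho V^\dag=\mathbb I/2^n$. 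This collapses the inner expectation to $\Tr{\proj{\vec b}\Lambda[\mathbb I]}/2^n$, which is exactly the ingredient needed to produce $W^u_{\mathcal E_\Lambda}$.

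Next I would insert the Pauli decomposition $O=\frac{1}{2^n}\sum_{\vec a}\Tr{OP_{\vec a}}P_{\vec a}$ and apply Theorem 5 termwise to get
\[
\mathcal M^{-1}_\Lambda[O]=\frac{1}{2^n}\sum_{\vec a}W_{\mathcal E_\Lambda}[\vec a]^{-1}\Tr{OP_{\vec a}}P_{\vec a}.
\]
Squaring this inside $\Tr{\mathcal M^{-1}_\Lambda[O]\hat\sigma_{U,\vec b}}^2$ produces a double Pauli sum over $(\vec a,\vec c)$ that I kill with the same Pauli-twirl trick used in the proofs of Theorem \ref{thm:apn1} and Proposition \ref{prop:apen2}: insert $\mathbb E_{\vec d\in V^n}$ of a Pauli conjugation on $\hat\sigma_{U,\vec b}$, exploit the Pauli-invariance of $P(U)$ to move the average past the unitary, and use $P_{\vec d}P_{\vec a}P_{\vec d}=(-1)^{\inner{\vec d}{\vec a}_s}P_{\vec a}$ together with $\mathbb E_{\vec d}(-1)^{\inner{\vec d}{\vec a+\vec c}_s}=\delta_{\vec a,\vec c}$ to obtain only diagonal terms.

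The diagonalization leaves
\[
\norm{O}^2_{\mathcal E_\Lambda}
=\frac{1}{4^n}\sum_{\vec a}W_{\mathcal E_\Lambda}[\vec a]^{-2}W_O[\vec a]\cdot\frac{1}{2^n}\mathbb E_U\sum_{\vec b}|\Tr{P_{\vec a}\hat\sigma_{U,\vec b}}|^2\Tr{\proj{\vec b}\Lambda[\mathbb I]},
\]
and the final factor is precisely $W^u_{\mathcal E_\Lambda}[\vec a]$ by its definition in Proposition 6, giving the claimed formula. The main obstacle, such as it is, lies in verifying that the Pauli-twirl argument remains valid in the noisy setting: one must check that the conjugation by $P_{\vec d}$ that is introduced around $\hat\sigma_{U,\vec b}$ can indeed be absorbed into the $U$-average (using right Pauli-invariance of $\mathcal E$) without disturbing the noise term $\Tr{\proj{\vec b}\Lambda[\mathbb I]}$ — which works because the $V$-average has already replaced $\rho$ with $\mathbb I/2^n$, so the Pauli-twirled quantity inside $\Lambda$ is still the identity. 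Everything else is a bookkeeping exercise in the symplectic Fourier calculus already developed in the appendix.
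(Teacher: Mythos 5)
Your proposal is correct and takes essentially the same route as the paper's own proof: average over the Pauli group (a unitary $1$-design) to replace $\rho$ with $\mathbb{I}/2^n$ and pull out the factor $\Tr{\proj{\vec b}\Lambda[\mathbb{I}]}/2^n$, diagonalize the double Pauli sum via the Pauli-twirl enabled by right Pauli-invariance, and identify the remaining factor with $W^u_{\mathcal{E}_\Lambda}[\vec a]$ and $|\Tr{\mathcal{M}^{-1}_\Lambda[O]P_{\vec a}}|^2$ with $W_{\mathcal{E}_\Lambda}[\vec a]^{-2}W_O[\vec a]$. Your explicit verification that the twirl commutes with the noise factor (which depends only on $\vec b$, not on $U$, once $\rho$ has been replaced by $\mathbb{I}/2^n$) is the one step the paper leaves implicit, and your reasoning there is sound.
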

\begin{proof}
First,
\begin{align*}
\norm{O}^2_{\mathcal{E}}
&=\mathbb{E}_{V\in\mathcal{P}_n}
\norm{O}^2_{\mathcal{E}_{V\rho V^\dag}}\\
&=
\norm{O}^2_{\mathcal{E}_{\mathbb{I}/2^n}}\\
&=\mathbb{E}_U
\sum_{\vec b}
\Tr{\mathcal{M}^{-1}_{\Lambda}[O]\hat{\sigma}}^2
\Tr{U^\dag\Lambda^\dag[\proj{b}]U\frac{\mathbb{I}}{2^n}}\\
&=\mathbb{E}_U
\sum_{\vec b}
\Tr{\mathcal{M}^{-1}_{\Lambda}[O]\hat{\sigma}}^2
\Tr{U^\dag\Lambda^\dag[\proj{b}]U\frac{\mathbb{I}}{2^n}}\\
&=\mathbb{E}_U
\mathbb{E}_{\vec b}
\Tr{\mathcal{M}^{-1}_{\Lambda}[O]\hat{\sigma}}^2\Tr{\proj{b}\Lambda[\mathbb{I}]}\\
&=\mathbb{E}_U
\mathbb{E}_{\vec b}\frac{1}{4^n}
\sum_{\vec{a}}
|\Tr{\mathcal{M}^{-1}_{\Lambda}[O]P_{\vec{a}}}|^2|\Tr{P_{\vec{a}}\hat{\sigma}}|^2\Tr{\proj{b}\Lambda[\mathbb{I}]}\\
&=\frac{1}{4^n}
\sum_{\vec{a}}
|\Tr{\mathcal{M}^{-1}_{\Lambda}[O]P_{\vec{a}}}|^2 W^u_{\mathcal{E}}[\vec{a}]\\
&=\frac{1}{4^n}
\sum_{\vec{a}}W_{\mathcal{E}_{\Lambda}}[\vec{a}]^{-2}W^u_{\mathcal{E}}[\vec{a}]
|\Tr{OP_{\vec{a}}}|^2.
\end{align*}
\end{proof}

\section{locally scrambled unitary ensembles}\label{apen:local_scram}

\begin{prop}[Restatement of Proposition 8]
Given a locally scrambled unitary ensemble $\mathcal{E}$, then the shadow channel is 
\begin{align*}
\mathcal{M}[\rho]
=\frac{1}{2^{n}}\sum_{S\subset [n]}
\bar{W}_{\mathcal{E}}[S]\sum_{\vec{a}:\supp(\vec{a})=S}\Tr{\rho P_{\vec{a}}}P_{\vec{a}},
\end{align*}
where $\bar{W}_{\mathcal{E}}(S)$ is defined as 
$
\bar{W}_{\mathcal{E}}[S]=\mathbb{E}_{\vec{a}:\supp(\vec{a})=S}
W_{\mathcal{E}}[\vec{a}].
$
And the reconstruction map 
is defined as follows
\begin{align*}
\mathcal{M}^{-1}[P_{\vec{a}}]
=\bar{W}_{\mathcal{E}}[\supp(\vec{a})]^{-1}P_{\vec{a}}.
\end{align*}
\end{prop}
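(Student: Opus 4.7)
The plan is to leverage Theorem~\ref{thm:apn1} (which already gives the Pauli-basis diagonal form for any Pauli-invariant ensemble) and then show that the extra local basis invariance collapses the Pauli coefficient $W_{\mathcal{E}}[\vec{a}]$ to a function of the support $\supp(\vec{a})$ only. Since locally scrambled ensembles are Pauli-invariant (taking $V = P_{\vec{\sigma}}$, a tensor product of single-qubit Paulis), Theorem~\ref{thm:apn1} applies verbatim and we have
\begin{align*}
\mathcal{M}[\rho] = \frac{1}{2^n}\sum_{\vec{a}\in V^n} W_{\mathcal{E}}[\vec{a}]\,\Tr{\rho P_{\vec{a}}}\,P_{\vec{a}}.
\end{align*}
So the entire content of the proposition reduces to proving $W_{\mathcal{E}}[\vec{a}] = W_{\mathcal{E}}[\vec{a}']$ whenever $\supp(\vec{a}) = \supp(\vec{a}')$.

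To establish this support-only dependence, I would argue as follows. Fix two Pauli strings $\vec{a}, \vec{a}'$ with common support $S$. Site by site, any two non-identity single-qubit Paulis are related by conjugation by a single-qubit Clifford (e.g. $HZH^\dag = X$, $SXS^\dag = Y$), and the identity is of course unaffected by any local unitary. Hence there exists a product unitary $V = V_1\otimes\cdots\otimes V_n$ with $V P_{\vec{a}} V^\dag = \eta\, P_{\vec{a}'}$ for some sign $\eta = \pm 1$. Starting from
\begin{align*}
W_{\mathcal{E}}[\vec{a}]
= \mathbb{E}_{\vec b}\,\mathbb{E}_U \bigl|\Tr{U^\dag \proj{\vec b} U\, P_{\vec{a}}}\bigr|^2,
\end{align*}
the local basis invariance $P(U) = P(UV)$ allows me to relabel $U \mapsto UV$ inside $\mathbb{E}_U$, which moves $V$ through to conjugate $P_{\vec{a}}$:
\begin{align*}
W_{\mathcal{E}}[\vec{a}]
= \mathbb{E}_{\vec b}\,\mathbb{E}_U \bigl|\Tr{U^\dag \proj{\vec b} U\, V P_{\vec{a}} V^\dag}\bigr|^2
= \mathbb{E}_{\vec b}\,\mathbb{E}_U \bigl|\Tr{U^\dag \proj{\vec b} U\, P_{\vec{a}'}}\bigr|^2
= W_{\mathcal{E}}[\vec{a}'],
\end{align*}
where the sign $\eta$ is swallowed by the absolute value squared. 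This is the crucial step, and the only one that actually uses the locally scrambled hypothesis beyond Pauli invariance.

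Once the support-only dependence is in hand, the common value must equal the average $\bar{W}_{\mathcal{E}}[S] := \mathbb{E}_{\vec{a}:\supp(\vec{a})=S} W_{\mathcal{E}}[\vec{a}]$, and substituting into Theorem~\ref{thm:apn1} yields the stated formula for $\mathcal{M}[\rho]$. The reconstruction map follows by the same argument as in Theorem~\ref{thm:apn1}: under the invertibility assumption $\bar{W}_{\mathcal{E}}[\supp(\vec{a})] > 0$ for all $\vec{a}$, we define $\mathcal{M}^{-1}[P_{\vec{a}}] = \bar{W}_{\mathcal{E}}[\supp(\vec{a})]^{-1} P_{\vec{a}}$ and verify $\mathcal{M}^{-1}\circ \mathcal{M} = \mathrm{id}$ on the Pauli basis. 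The only non-routine part of the proof is the relabeling trick above; everything else is bookkeeping on top of Theorem~\ref{thm:apn1}.
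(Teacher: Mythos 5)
Your proof is correct, and it arrives at the same reduction as the paper (Theorem~\ref{thm:apn1} applies since locally scrambled ensembles are Pauli-invariant; all that remains is showing $W_{\mathcal{E}}[\vec{a}]$ depends only on $\supp(\vec{a})$), but the key step is executed by a genuinely different mechanism. The paper inserts an expectation $\mathbb{E}_{V}$ over Haar-random product unitaries into $W_{\mathcal{E}}[\vec{a}]$ (justified by the same $U\mapsto UV$ relabeling you use) and then invokes the single-qubit second-moment fact that the Haar twirl of a non-identity Pauli averages it uniformly over $\{X,Y,Z\}$, so that $P_{\vec{a}}$ is replaced in one stroke by the uniform average over all Pauli strings with support $\supp(\vec{a})$, yielding $W_{\mathcal{E}}[\vec{a}]=\bar{W}_{\mathcal{E}}[\supp(\vec{a})]$ directly. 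You instead use only discrete elements of the invariance group: a product of single-qubit Cliffords carrying $P_{\vec{a}}$ to $\pm P_{\vec{a}'}$ site by site, giving the pairwise equality $W_{\mathcal{E}}[\vec{a}]=W_{\mathcal{E}}[\vec{a}']$ for equal supports (your sign $\eta$ is indeed killed by the absolute square), after which identifying the common value with the average $\bar{W}_{\mathcal{E}}[S]$ is immediate. Your route is more elementary---no Haar second-moment computation is needed---and it proves something slightly stronger: the conclusion holds for any ensemble invariant merely under right multiplication by local Cliffords (indeed, by any set realizing all permutations of $\{X,Y,Z\}$ up to sign on each site), a strictly weaker hypothesis than local scrambling. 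What the paper's continuous twirl buys in exchange is a one-line averaging identity that meshes with the $2$-design and entanglement-feature formalism used elsewhere in the paper (e.g.\ Proposition~\ref{prop:entfeature} and Proposition~\ref{prop:com_scr}). Your closing steps---substitution into Theorem~\ref{thm:apn1} grouped by support, and inversion under the condition $\bar{W}_{\mathcal{E}}[\supp(\vec{a})]>0$---match the paper and are fine.
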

\begin{proof}
Since the locally scrambled unitary ensemble is Pauli-invariant, then according to the above theorem, we have 
\begin{align}
\mathcal{M}[\rho]
=\frac{1}{2^{n}}\sum_{\vec{a}}
W_{\mathcal{E}}[\vec{a}]\Tr{\rho P_{\vec{a}}}P_{\vec{a}},
\end{align}
and 
\begin{align*}
W_{\mathcal{E}}[\vec{a}]&=\frac{1}{2^n}\sum_{\vec b}\mathbb{E}_U
(\Tr{\hat{\sigma}_{U,b}P_{\vec{a}}})^2\\
&=\frac{1}{2^n}\sum_{\vec b}\mathbb{E}_U\mathbb{E}_{V^n}
\left(\Tr{V^n\hat{\sigma}_{U,\vec b}(V^n)^\dag P_{\vec{a}}}\right)^2\\
&=\frac{1}{2^n}\sum_{\vec b}\mathbb{E}_U\mathbb{E}_{\supp(\vec{a})}
\left(\Tr{\hat{\sigma}_{U,\vec b}\mathbb{I}_{\supp(\vec{a})^c}\ot P_{\supp(\vec{a})}}\right)^2,
\end{align*}
where the expectation $\mathbb{E}_{\supp(\vec{a})}$ is taken over all the Pauli operators with support $\supp(\vec{a})$
 without identity component. 
 Hence 
 \begin{align}
 W_{\mathcal{E}}[\vec{a}]=
\mathbb{E}_{\supp(\vec{a})} W_{\mathcal{E}}[\vec{a}] =W_{\mathcal{E}}[\supp(\vec{a})].
 \end{align}

\end{proof}

\begin{prop}[Restatement of Proposition 9]
Given the reconstruction map $\mathcal{M}^{-1}[\sigma]=\sum_Sr_SD^{S}[\sigma]$, then 
$r_S$ can be expressed by 
$\bar{W}_{\mathcal{E}}[A]$ as follows
\begin{align}
r_S=\sum_{A\subset S}
(-1)^{|S|-|A|}
\bar{W}_{\mathcal{E}}[A^c]^{-1},
\end{align}
where $A^c$ denotes the complement of $A$ in $[n]$.
\end{prop}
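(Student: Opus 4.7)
The plan is to equate two expressions for $\mathcal{M}^{-1}[P_{\vec{a}}]$ and then extract $r_S$ by Möbius inversion on the Boolean lattice of subsets of $[n]$. First, I would compute how the erasure product $D^S$ acts on a Pauli operator. Because the single-qubit erasure satisfies $D[P_c] = \delta_{c,0}\,\mathbb{I}$ and acts as the identity on qubits outside $S$, a routine tensor-product calculation gives
\begin{align*}
D^S[P_{\vec{a}}] = \mathbf{1}\!\left[\supp(\vec{a}) \cap S = \emptyset\right]\, P_{\vec{a}}.
\end{align*}
Substituting into the ansatz $\mathcal{M}^{-1}[\sigma]=\sum_S r_S D^S[\sigma]$ therefore yields
\begin{align*}
\mathcal{M}^{-1}[P_{\vec{a}}] = \Bigl(\sum_{S \subset \supp(\vec{a})^c} r_S\Bigr) P_{\vec{a}}.
\end{align*}

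Next, I would invoke Proposition~8, which says $\mathcal{M}^{-1}[P_{\vec{a}}]=\bar{W}_{\mathcal{E}}[\supp(\vec{a})]^{-1}P_{\vec{a}}$. Comparing with the line above and letting $T=\supp(\vec{a})$ run over all subsets of $[n]$ (by choosing $\vec{a}$ appropriately), I would obtain the system
\begin{align*}
\sum_{S \subset T^c} r_S \;=\; \bar{W}_{\mathcal{E}}[T]^{-1} \qquad (T \subset [n]).
\end{align*}
Making the change of variable $A = T^c$ and defining $f(A) := \bar{W}_{\mathcal{E}}[A^c]^{-1}$, this becomes the standard zeta-transform relation $f(A)=\sum_{S\subset A} r_S$.

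The remaining step is Möbius inversion on the Boolean lattice, which gives
\begin{align*}
r_S \;=\; \sum_{A \subset S} (-1)^{|S|-|A|}\, f(A) \;=\; \sum_{A \subset S} (-1)^{|S|-|A|}\, \bar{W}_{\mathcal{E}}[A^c]^{-1},
\end{align*}
precisely the claimed formula. I do not expect any serious obstacle: the only nontrivial ingredients are the explicit action of $D^S$ on Paulis (which one checks qubit by qubit) and Möbius inversion on subsets. One mild care-point worth double-checking is that the system of equations indexed by $T$ uniquely determines the coefficients $\{r_S\}$; this follows because the zeta transform is invertible on the Boolean lattice, so the inversion formula above is unambiguous.
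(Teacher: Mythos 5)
Your proposal is correct and follows essentially the same route as the paper's own proof: computing $D^S[P_{\vec{a}}] = \mathbf{1}\left[\supp(\vec{a})\subset S^c\right]P_{\vec{a}}$, matching against the diagonal action $\mathcal{M}^{-1}[P_{\vec{a}}]=\bar{W}_{\mathcal{E}}[\supp(\vec{a})]^{-1}P_{\vec{a}}$ from Proposition~8, and then inverting the resulting subset-sum system via inclusion--exclusion (M\"obius inversion on the Boolean lattice). Your added remark that the zeta transform is invertible, so the coefficients $\{r_S\}$ are uniquely determined, is a small but welcome point the paper leaves implicit.
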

\begin{proof}
Since $\mathcal{M}^{-1}[\sigma]=\sum_Sr_SD^{S}[\sigma]$, where
$D(\cdot)=\Tr{\cdot}\mathbb{I}/2$ and $D^S=\ot_{i\in S}D_i$. 
Hence 
\begin{align}
\mathcal{M}^{-1}[P_{\vec{a}}]
=\sum_Sr_SD^{S}[P_{\vec{a}}]
=P_{\vec{a}}\sum_Sr_S I[\supp(\vec{a})\subset S^c],
\end{align}
where $I[\cdot]$ is the indicator function.
Therefore, we have
\begin{align}
\sum_{S}r_SI[A\subset S^c]
=W_{\mathcal{E}}[A]^{-1},
\end{align}
which is equivalent to 
\begin{align}
\sum_{S}r_SI[S\subset A]
=W_{\mathcal{E}}[A^c]^{-1}.
\end{align}

Hence, we have 
\begin{align}
r_S=\sum_{A\subset S}
(-1)^{|S|-|A|}
W_{\mathcal{E}}[A^c]^{-1},
\end{align}
which follows from the inclusion-exclusion principle.
\end{proof}

\begin{lem}[Inclusion-exclusion principle]
If $g(A)=\sum_{S\subset A}f(S)$, then we have 
\begin{align}
f(A)=\sum_{S\subset A}
(-1)^{|A|-|S|}
g(S).
\end{align}
\end{lem}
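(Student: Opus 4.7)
The plan is to establish this inversion identity by direct substitution of the hypothesis into the right-hand side, followed by a change in the order of summation. Starting from $\sum_{S\subset A}(-1)^{|A|-|S|}g(S)$, I would substitute $g(S)=\sum_{T\subset S}f(T)$ and then interchange the two sums so that the outer index runs over $T\subset A$, producing an expression of the form
\begin{align*}
\sum_{T\subset A} f(T)\,c(T,A),\qquad
c(T,A):=\sum_{S:\,T\subset S\subset A}(-1)^{|A|-|S|}.
\end{align*}

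The crux of the argument is to show that the coefficient $c(T,A)$ equals the Kronecker delta $\delta_{T,A}$. For this step, I would re-index the inner sum by writing $S=T\cup R$ with $R\subset A\setminus T$, which reduces the computation to a single-variable binomial sum of the form $(-1)^{|A|-|T|}\sum_{j=0}^{k}\binom{k}{j}(-1)^{j}$, where $k=|A\setminus T|$. The elementary identity $(1-1)^{k}=\delta_{k,0}$ then shows that $c(T,A)$ vanishes unless $k=0$, that is, unless $T=A$, in which case $c(A,A)=1$.

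Consequently only the term $T=A$ survives in the outer sum, and it contributes $f(A)$, which is precisely the desired identity. There is no genuine obstacle in this proof; the entire argument reduces to a standard alternating-sum cancellation on the Boolean lattice, and the only care needed is sign-tracking when the binomial sum is written out. An alternative route would be to invoke M\"obius inversion on the Boolean lattice as a black box, but the self-contained binomial argument above is shorter and more transparent in this setting.
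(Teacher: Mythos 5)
Your proof is correct. The paper itself states this lemma \emph{without} proof, treating it as the standard M\"obius inversion on the Boolean lattice, so there is no paper argument to diverge from; your substitute--interchange--collapse argument is precisely the canonical justification, and your sign bookkeeping is sound: writing $S = T \cup R$ with $R \subset A\setminus T$ gives $c(T,A) = (-1)^{|A\setminus T|}\sum_{j=0}^{k}\binom{k}{j}(-1)^{j} = (-1)^{k}\,\delta_{k,0} = \delta_{T,A}$ with $k = |A\setminus T|$, exactly as you claim.
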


\begin{prop}[Restatement of Proposition 11]
Given a locally scrambled unitary ensemble, the 
average shadow norm is
\begin{align}
\norm{O}^2_{\mathcal{E}}
=\frac{1}{4^n}\sum_{S\subset [n]}\bar{W}_{\mathcal{E}}[S]^{-1}
W_O[S],
\end{align}
where $W_O[S]=\sum_{\vec{a}:\supp(\vec{a})=S}W_O[\vec{a}]$.

\end{prop}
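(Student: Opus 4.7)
The plan is to reduce this to a direct application of Proposition~\ref{prop:apen2} (the Pauli-group average shadow norm formula) combined with the fact, established in Proposition~8, that for locally scrambled ensembles the weights $W_{\mathcal{E}}[\vec{a}]$ depend only on $\supp(\vec{a})$.

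The first step is to reconcile the two different averaging definitions: the average shadow norm in the statement is $\mathbb{E}_{V\in U(2)^n}\norm{O}^2_{\mathcal{E}_{V\rho V^\dag}}$, using the local Haar measure rather than the Pauli group. I would first unfold the definition of the shadow norm, writing
\[
\norm{O}^2_{\mathcal{E}_{V\rho V^\dag}} = \mathbb{E}_U\sum_{\vec b} \Tr{\hat{\sigma}_{U,\vec b} V\rho V^\dag}\,\Tr{\mathcal{M}^{-1}[O]\hat{\sigma}_{U,\vec b}}^2,
\]
and observe that the only dependence on the state is the linear Born-rule factor $\Tr{\hat{\sigma}_{U,\vec b} V\rho V^\dag}$. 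Hence the $V$-average can be pushed inside, and since $U(2)^{\otimes n}$ is a local $1$-design we obtain $\mathbb{E}_V[V\rho V^\dag] = \mathbb{I}/2^n$. The same reduction is used in the proof of Proposition~\ref{prop:apen2} via the Pauli $1$-design property, so the downstream computation there applies verbatim and yields
\[
\norm{O}^2_{\mathcal{E}} = \frac{1}{4^n}\sum_{\vec{a}\in V^n} W_{\mathcal{E}}[\vec{a}]^{-1} W_O[\vec{a}].
\]

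The second step is to regroup the sum by Pauli support. By Proposition~8, for a locally scrambled ensemble the coefficient $W_{\mathcal{E}}[\vec{a}]$ depends only on $\supp(\vec{a})$ and equals $\bar{W}_{\mathcal{E}}[\supp(\vec{a})]$. Therefore $W_{\mathcal{E}}[\vec{a}]^{-1}$ is constant on each equivalence class $\{\vec{a}\in V^n : \supp(\vec{a})=S\}$, and partitioning $V^n$ by support together with the definition $W_O[S]=\sum_{\vec{a}:\supp(\vec{a})=S} W_O[\vec{a}]$ gives the claimed identity.

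I do not anticipate any substantive obstacle. The only subtlety is verifying that $\norm{O}^2_{\mathcal{E}_{V\rho V^\dag}}$ is linear in the state argument, which is what permits exchanging the $U(2)^n$ average with the expectation over the classical-shadow ensemble; this follows immediately from the Born-rule form of the probability weights in $\mathcal{E}_{\rho}$. After that the argument is purely combinatorial regrouping.
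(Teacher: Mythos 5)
Your proposal is correct and takes essentially the same route as the paper: reduce to the general Pauli-invariant formula $\norm{O}^2_{\mathcal{E}}=\frac{1}{4^n}\sum_{\vec{a}}W_{\mathcal{E}}[\vec{a}]^{-1}W_O[\vec{a}]$ of Proposition 4, then use the locally scrambled property to conclude $W_{\mathcal{E}}[\vec{a}]=\bar{W}_{\mathcal{E}}[\supp(\vec{a})]$ and regroup the sum by support. If anything, you are more careful than the paper's appendix on one point it glosses over, namely that the $U(2)^n$ Haar average in the locally scrambled definition and the Pauli-group average in Proposition 4 coincide because the squared shadow norm is linear in the state and both ensembles are $1$-designs sending $\rho\mapsto\mathbb{I}/2^n$.
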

\begin{proof}
Since locally scrambled unitary ensembles are Pauli-invariant, we have
\begin{align}
\norm{O}^2_{\mathcal{E}}
=\frac{1}{4^n}
\sum_{\vec{a}}W_{\mathcal{E}}[\vec{a}]^{-1}
W_{O}[\vec{a}],
\end{align}
and 
\begin{align*}
W_{\mathcal{E}}(\vec{a})&=\frac{1}{2^n}\sum_{b}\mathbb{E}_U
\left(\Tr{\hat{\sigma}_{U,\vec b}P_{\vec{a}}}\right)^2\\
&=\frac{1}{2^n}\sum_{\vec b}\mathbb{E}_U\mathbb{E}_{V^n}
\left(\Tr{V^n\hat{\sigma}_{U,\vec b}(V^n)^\dag P_{\vec{a}}}\right)^2\\
&=\frac{1}{2^n}\sum_{\vec b}\mathbb{E}_U\mathbb{E}_{\supp(\vec{a})}
\left(\Tr{\hat{\sigma}_{U,\vec b}\mathbb{I}_{\supp(\vec{a})^c}\ot P_{\supp(\vec{a})}}\right)^2,
\end{align*} 
where the expectation $\mathbb{E}_{\supp(\vec{a})}$ is taken over all the Pauli operation on support $\supp(\vec{a})$
 without identity component. 
 Hence 
 \begin{align}
 W_{\mathcal{E}}[\vec{a}]=
 \mathbb{E}_{\supp(\vec{a})} W_{\mathcal{E}}[\vec{a}] =W_{\mathcal{E}}[\supp(\vec{a})].
 \end{align}

\end{proof}

\section{Classical shadow channel tomography with Pauli-invariant unitary ensembles}\label{Appendix:ChannelProofs}

\begin{prop}[Restatement of Proposition 12]
Given two Pauli-invariant unitary ensembles $\mathcal{E}_{i}$ and $\mathcal{E}_o$, the shadow channel $\mathcal{M}_{i,o}$ is
\begin{align}
\mathcal{M}_{i,o}[P_{\vec{a}_i}\ot P_{\vec{a}_o}]
=
W_{\mathcal{E}_i}[P_{\vec{a}_i}]
W_{\mathcal{E}_o}[P_{\vec{a}_o}]
P_{\vec{a}_i}\ot P_{\vec{a}_o},
\end{align}
where $W_{\mathcal{E}_i}[P_{\vec{a}_i}]=\mathbb{E}_{b_i}\mathbb{E}_{U_i}
|\Tr{\hat{\sigma}_{i}P_{\vec{a}_i}}|^2$ and 
 $W_{\mathcal{E}_o}[P_{\vec{a}_o}]=\mathbb{E}_{b_o}\mathbb{E}_{U_o}
|\Tr{\hat{\sigma}_{o}P_{\vec{a}_o}}|^2$.  That is, $\mathcal{M}_{i,o}=\mathcal{M}_{i}\ot \mathcal{M}_{o}$, where 
$\mathcal{M}_{i}[P_{\vec{a}_i}]=W_{\mathcal{E}_i}[P_{\vec{a}_i}]P_{\vec{a}_i}$ and 
$\mathcal{M}_{o}[P_{\vec{a}_o}]=W_{\mathcal{E}_o}[P_{\vec{a}_o}]P_{\vec{a}_o}$
Hence, for the Pauli-invariant unitary ensemble, $\mathcal{M}^{-1}_{i,o}$ exists iff 
$W_{\mathcal{E}_i}[\vec{a}_i]>0, W_{\mathcal{E}_o}[\vec{a}_o]>0$ for all $\vec{a}$, and the reconstruction map 
is  defined as follows
\begin{align}
\mathcal{M}^{-1}_{i,o}[P_{\vec{a}_i}\ot P_{\vec{a}_o}]
=W_{\mathcal{E}_i}[\vec{a}_i]^{-1}W_{\mathcal{E}_o}[\vec{a}_o]^{-1}P_{\vec{a}_i}\ot P_{\vec{a}_o}.
\end{align}

\end{prop}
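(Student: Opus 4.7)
The plan is to reduce this to Theorem~\ref{thm:apn1} by exploiting a tensor-product factorization of the shadow channel. Starting from the definition $\mathcal{M}_{i,o}[\mathcal{J}(\mathcal{T})]=\mathbb{E}_{\hat{\sigma}\in\mathcal{E}_{i,o}}\hat{\sigma}$, substituting the expressions for $P(\vec{b}_i)=1/2^n$ and $P(\vec{b}_o|\vec{b}_i,U_i,U_o)=2^n\Tr{(\hat{\sigma}_i\ot\hat{\sigma}_o)\mathcal{J}(\mathcal{T})}$, and extending the map to arbitrary operators by linearity, I would first write
\begin{align*}
\mathcal{M}_{i,o}[\rho]
=\mathbb{E}_{U_i}\mathbb{E}_{U_o}\sum_{\vec{b}_i,\vec{b}_o}
\hat{\sigma}_i\ot \hat{\sigma}_o\,\Tr{(\hat{\sigma}_i\ot \hat{\sigma}_o)\rho}.
\end{align*}
Evaluating this on a product operator $\tau_i\ot \tau_o$ makes $\Tr{(\hat{\sigma}_i\ot\hat{\sigma}_o)(\tau_i\ot\tau_o)}=\Tr{\hat{\sigma}_i\tau_i}\Tr{\hat{\sigma}_o\tau_o}$ split off cleanly, so $\mathcal{M}_{i,o}=\mathcal{M}_i\ot \mathcal{M}_o$, where the two factors are the ``single-sided'' shadow channels built respectively from the input shadows $\hat{\sigma}_i=U_i^T\proj{\vec{b}_i}U_i^*$ and the output shadows $\hat{\sigma}_o=U_o^\dag \proj{\vec{b}_o}U_o$.

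Next I would handle the two factors separately. The output channel $\mathcal{M}_o$ has exactly the form treated in Theorem~\ref{thm:apn1}, so it applies verbatim and gives $\mathcal{M}_o[P_{\vec{a}_o}]=W_{\mathcal{E}_o}[\vec{a}_o]\,P_{\vec{a}_o}$. For $\mathcal{M}_i$, the main wrinkle is the appearance of $U_i^T$ and $U_i^*$ instead of $U_i^\dag$ and $U_i$, and I would show that the argument of Theorem~\ref{thm:apn1} still goes through. The key observation is that for any Pauli $P_{\vec{d}}$ one has $P_{\vec{d}}^T=P_{\vec{d}}^*=\epsilon_{\vec{d}} P_{\vec{d}}$ with $\epsilon_{\vec{d}}\in\{\pm 1\}$, so replacing $U_i$ by $U_i P_{\vec{d}}$ (a substitution that preserves the distribution by Pauli-invariance of $\mathcal{E}_i$) sends $\hat{\sigma}_i\mapsto \epsilon_{\vec{d}}^{\,2}\,P_{\vec{d}}\hat{\sigma}_i P_{\vec{d}}=P_{\vec{d}}\hat{\sigma}_i P_{\vec{d}}$. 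Thus the Pauli-twirling step used in the proof of Theorem~\ref{thm:apn1} can still be inserted, and the same calculation yields
\begin{align*}
\mathcal{M}_i[\tau_i]
=\frac{1}{2^n}\sum_{\vec{a}_i}W_{\mathcal{E}_i}[\vec{a}_i]\Tr{\tau_i P_{\vec{a}_i}}P_{\vec{a}_i},
\end{align*}
and in particular $\mathcal{M}_i[P_{\vec{a}_i}]=W_{\mathcal{E}_i}[\vec{a}_i]\,P_{\vec{a}_i}$.

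Combining the two factors immediately gives $\mathcal{M}_{i,o}[P_{\vec{a}_i}\ot P_{\vec{a}_o}]=W_{\mathcal{E}_i}[\vec{a}_i]W_{\mathcal{E}_o}[\vec{a}_o]\,P_{\vec{a}_i}\ot P_{\vec{a}_o}$. Since $\mathcal{M}_{i,o}$ is diagonal on the Pauli tensor-product basis of the $2n$-qubit operator space, it is invertible if and only if all of its Pauli eigenvalues $W_{\mathcal{E}_i}[\vec{a}_i]W_{\mathcal{E}_o}[\vec{a}_o]$ are nonzero, which is equivalent to $W_{\mathcal{E}_i}[\vec{a}_i]>0$ and $W_{\mathcal{E}_o}[\vec{a}_o]>0$ for all $\vec{a}_i,\vec{a}_o$, and in that case the inverse has the claimed form.

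The main obstacle is the careful bookkeeping of the transpose/conjugate on the input side; once one observes that the Pauli-invariance of $\mathcal{E}_i$ still implements a two-sided Pauli conjugation of $\hat{\sigma}_i$ (with the sign ambiguity $\epsilon_{\vec{d}}$ squaring to one), the result follows as a routine tensor-product extension of Theorem~\ref{thm:apn1}.
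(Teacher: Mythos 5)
Your proof is correct, and it reaches the paper's statement by a more modular route than the paper's own argument, though the core mechanism (Pauli twirling plus the orthogonality relation $\mathbb{E}_{\vec{d}}(-1)^{\inner{\vec{d}}{\vec{a}+\vec{c}}_s}=\delta_{\vec{a},\vec{c}}$) is the same. The paper proves the proposition by redoing the whole Theorem~\ref{thm:apn1} computation directly on the doubled $2n$-qubit space: it expands $\hat{\sigma}_{i,o}$ in the Pauli basis, inserts a joint twirl $\mathbb{E}_{\vec{d}_i,\vec{d}_o}$ over both registers simultaneously, collapses the double sum, and only in the last two lines factorizes the coefficient into $W_{\mathcal{E}_i}[P_{\vec{a}_i}]W_{\mathcal{E}_o}[P_{\vec{a}_o}]$. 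You instead establish $\mathcal{M}_{i,o}=\mathcal{M}_i\ot\mathcal{M}_o$ at the outset --- legitimate, since $(U_i,\vec{b}_i)$ and $(U_o,\vec{b}_o)$ are sampled independently and product operators span the operator space --- then apply Theorem~\ref{thm:apn1} verbatim to the output factor and rerun its argument only for the input factor. Your explicit treatment of the input-side transpose/conjugate, namely $P_{\vec{d}}^T=P_{\vec{d}}^*=\epsilon_{\vec{d}}P_{\vec{d}}$ with $\epsilon_{\vec{d}}^2=1$, so that the substitution $U_i\mapsto U_iP_{\vec{d}}$ still induces $\hat{\sigma}_i\mapsto P_{\vec{d}}\hat{\sigma}_iP_{\vec{d}}$, is a genuine improvement in rigor: the paper inserts the twirl on $\hat{\sigma}_{i,o}$ without ever explaining why right Pauli-invariance of $\mathcal{E}_i$ implements a two-sided Pauli conjugation of $U_i^T\proj{\vec{b}_i}U_i^*$, which is exactly the point your sign bookkeeping settles. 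What the paper's monolithic computation buys is self-containment in a single chain of equalities; what yours buys is brevity, reuse of an already-proved theorem, and isolation of the only new ingredient in one clearly flagged step. Your invertibility conclusion --- a map diagonal on the tensor Pauli basis with nonnegative eigenvalues $W_{\mathcal{E}_i}[\vec{a}_i]W_{\mathcal{E}_o}[\vec{a}_o]$ is invertible iff each factor is strictly positive --- matches the paper's.
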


\begin{proof}
The proof is similar to the proof of Theorem \ref{thm:apn1}, but we provide it here for the completeness. 
\begin{align*}
&\mathcal{M}_{i,o}[P_{\vec{a}_i}\ot P_{\vec{a}_o}]\nonumber\\
&=\mathbb{E}_{b_i}\mathbb{E}_{U_i}
\mathbb{E}_{U_o}2^n
\sum_{b_o}\hat{\sigma}_{i,o}\Tr{\hat{\sigma}_{i,o}P_{\vec{a}_i}\ot P_{\vec{a}_o}}\\
&=\frac{1}{2^{2n}}\mathbb{E}_{U_i}
\mathbb{E}_{U_o}
\sum_{b_i,b_o}\sum_{\vec{c}_i,\vec{c}_o}
\Tr{\hat{\sigma}_{i,o}P_{\vec{c}_i}\ot P_{\vec{c}_o}}P_{\vec{c}_i}\ot P_{\vec{c}_o}
\Tr{\hat{\sigma}_{i,o}P_{\vec{a}_i}\ot P_{\vec{a}_o}}\\
&=\frac{1}{2^{2n}}\mathbb{E}_{U_i}
\mathbb{E}_{U_o}
\sum_{b_i,b_o}\sum_{\vec{c}_i,\vec{c}_o}
\mathbb{E}_{\vec{d}_i,\vec{d}_o}
\Tr{P_{\vec{d}_i}\ot P_{\vec{d}_o}\hat{\sigma}_{i,o} P_{\vec{d}_i}\ot P_{\vec{d}_o}P_{\vec{c}_i}\ot P_{\vec{c}_o}}P_{\vec{c}_i}\ot P_{\vec{c}_o}
\Tr{P_{\vec{d}_i}\ot P_{\vec{d}_o}\hat{\sigma}_{i,o}P_{\vec{d}_i}\ot P_{\vec{d}_o}P_{\vec{a}_i}\ot P_{\vec{a}_o}}\\
&=\frac{1}{2^{2n}}\mathbb{E}_{b_i}\mathbb{E}_{U_i}
\mathbb{E}_{U_o}
\sum_{b_i,b_o}\sum_{\vec{c}_i,\vec{c}_o}
\mathbb{E}_{\vec{d}_i,\vec{d}_o}
\Tr{\hat{\sigma}_{i,o}P_{\vec{c}_i}\ot P_{\vec{c}_o}}P_{\vec{c}_i}\ot P_{\vec{c}_o}
\Tr{\hat{\sigma}_{i,o}P_{\vec{a}_i}\ot P_{\vec{a}_o}}(-1)^{\inner{\vec{d}_i}{\vec{a}_i+\vec{c}_i}+\inner{\vec{d}_o}{\vec{a}_o+\vec{c}_o}}\\
&=\frac{1}{2^{2n}}\mathbb{E}_{b_i}\mathbb{E}_{U_i}
\mathbb{E}_{U_o}
\sum_{b_i,b_o}\sum_{\vec{c}_i,\vec{c}_o}
\Tr{\hat{\sigma}_{i,o}P_{\vec{c}_i}\ot P_{\vec{c}_o}}P_{\vec{c}_i}\ot P_{\vec{c}_o}
\Tr{\hat{\sigma}_{i,o}P_{\vec{a}_i}\ot P_{\vec{a}_o}}\delta_{\vec{a}_i,\vec{c}_i}\delta_{\vec{a}_o,\vec{c}_o}\\
&=\frac{1}{2^{2n}}\mathbb{E}_{b_i}\mathbb{E}_{U_i}
\mathbb{E}_{U_o}\sum_{b_i,b_o}
|\Tr{\hat{\sigma}_{i,o}P_{\vec{a}_i}\ot P_{\vec{a}_o}}|^2P_{\vec{a}_i}\ot P_{\vec{a}_o}\\
&=\left[\mathbb{E}_{b_i}\mathbb{E}_{U_i}
|\Tr{\hat{\sigma}_{i}P_{\vec{a}_i}}|^2\right]
\left[\mathbb{E}_{U_o}\mathbb{E}_{b_o}|\Tr{\hat{\sigma}_{i}P_{\vec{a}_o}}|^2\right]
P_{\vec{a}_i}\ot P_{\vec{a}_o}\\
&=
W_{\mathcal{E}_i}[P_{\vec{a}_i}]
W_{\mathcal{E}_o}[P_{\vec{a}_o}]
P_{\vec{a}_i}\ot P_{\vec{a}_o}.
\end{align*}
\end{proof}

\begin{prop}[Restatement of Proposition 13]
If the observable is taken to be a Pauli operator $P_{\vec{a}}$, then the shadow norm is equal to
\begin{align}
\norm{\rho^T\ot P_{\vec{a}}}_{\mathcal{E}_{\mathcal{T}}}
=W_{\mathcal{E}_o}[\vec{a}]^{-1}
\frac{1}{2^{2n}}\sum_{\vec{a}_i}W_{\mathcal{E}_i}[\vec{a}_i]^{-1}
W_{\rho}[\vec{a}_i].
\end{align}
\end{prop}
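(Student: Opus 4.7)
The plan is to expand the squared shadow norm as a joint expectation over $(\vec b_i, U_i, U_o, \vec b_o)$ and then collapse it by applying Pauli-invariance twirls on the output ensemble first, and then on the input ensemble. Concretely, I will begin by writing
\begin{align*}
\norm{\rho^T\otimes P_{\vec a}}^2_{\mathcal E_{\mathcal T}}
=\mathbb{E}_{\vec b_i}\mathbb{E}_{U_i}\mathbb{E}_{U_o}\sum_{\vec b_o}
2^n\Tr{\hat\sigma_{i,o}\mathcal{J}(\mathcal T)}\,\hat o^2,
\end{align*}
and using the tensor factorization $\mathcal M^{-1}_{i,o}=\mathcal M^{-1}_i\otimes \mathcal M^{-1}_o$ from Proposition \ref{thm:main_chan}, rewrite
$\hat o = \alpha(\hat\sigma_i)\beta(\hat\sigma_o)$, where $\alpha(\hat\sigma_i)=\Tr{\mathcal M_i^{-1}[\hat\sigma_i]\rho^T}$ and $\beta(\hat\sigma_o)=\Tr{\mathcal M_o^{-1}[\hat\sigma_o]P_{\vec a}}$. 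By Pauli-invariance of $\mathcal M_o^{-1}$, we have $\beta(\hat\sigma_o)^2=W_{\mathcal E_o}[\vec a]^{-2}\,W_{\hat\sigma_o}[\vec a]$.

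Next I would exploit the Pauli-invariance of $\mathcal E_o$: replacing $U_o\mapsto U_o P_{\vec c_o}$ leaves $\alpha^2$ and $\beta^2$ invariant (up to a sign that gets squared away, exactly as in the proof of Proposition 3), while it conjugates $\mathcal{J}(\mathcal T)$ by $\mathbb{I}\otimes P_{\vec c_o}$. Averaging over $\vec c_o\in V^n$ performs a Pauli twirl on the output register of the Choi state, which, together with the trace-preservation property $\Tr_o[\mathcal{J}(\mathcal T)]=\mathbb{I}/2^n$, produces the completely mixed state $\mathbb{I}/4^n$. This collapses $\Tr{\hat\sigma_{i,o}\mathcal{J}(\mathcal T)}$ to $1/4^n$, decoupling the input and output registers and eliminating all dependence on $\mathcal T$. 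After this reduction,
\begin{align*}
\norm{\rho^T\otimes P_{\vec a}}^2_{\mathcal E_{\mathcal T}}
=W_{\mathcal E_o}[\vec a]^{-1}\,\mathbb{E}_{\vec b_i}\mathbb{E}_{U_i}\alpha(\hat\sigma_i)^2,
\end{align*}
where the $\beta^2$ expectation has been evaluated using $\mathbb{E}_{U_o,\vec b_o}W_{\hat\sigma_o}[\vec a]=W_{\mathcal E_o}[\vec a]$ from Theorem \ref{thm:main1}.

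It remains to expand $\rho^T$ in the Pauli basis inside $\alpha^2$, yielding a double sum $\sum_{\vec a_i,\vec a_i'}$ of terms involving $\Tr{\hat\sigma_i P_{\vec a_i}}\Tr{\hat\sigma_i P_{\vec a_i'}}$. A second Pauli twirl, now over the ensemble $\mathcal E_i$ via $U_i\mapsto U_i P_{\vec c_i}$, sends this product to $(-1)^{\inner{\vec a_i+\vec a_i'}{\vec c_i}_s}\Tr{\hat\sigma_i P_{\vec a_i}}\Tr{\hat\sigma_i P_{\vec a_i'}}$, and averaging kills every cross term via $\mathbb{E}_{\vec c_i}(-1)^{\inner{\vec a_i+\vec a_i'}{\vec c_i}_s}=\delta_{\vec a_i,\vec a_i'}$. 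The diagonal terms give $W_{\mathcal E_i}[\vec a_i]$, and combined with the factors $W_{\mathcal E_i}[\vec a_i]^{-2}$ from $\mathcal M_i^{-1}$ this leaves one factor $W_{\mathcal E_i}[\vec a_i]^{-1}$. Using $|\Tr{\rho^T P_{\vec a_i}}|^2 = W_\rho[\vec a_i]$ (since $P^T=\pm P$), we arrive at the claimed expression.

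The main obstacle I anticipate is bookkeeping the joint distribution: $\vec b_o$ is conditioned on $(\vec b_i, U_i, U_o)$, so the expectation does not factorize a priori. The crucial step is realizing that the output-side Pauli twirl decouples the two registers by collapsing $\mathcal{J}(\mathcal T)$ to the completely mixed state, thereby removing the entangling role of $\mathcal T$ in the second moment. Once this is done, the remaining input-side calculation runs essentially in parallel to the single-register proofs of Propositions 3 and 4, and the rest is routine algebra.
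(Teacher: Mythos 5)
Your proposal is correct and follows essentially the same route as the paper's own proof: factorize $\mathcal{M}^{-1}_{i,o}=\mathcal{M}^{-1}_{i}\ot\mathcal{M}^{-1}_{o}$, apply an output-side Pauli twirl (the Pauli group as a 1-design) together with trace preservation $\Ptr{o}{\mathcal{J}(\mathcal{T})}=\mathbb{I}/2^n$ to collapse the Choi-state weight to $2^{-2n}$ and decouple the registers, evaluate the output factor to $W_{\mathcal{E}_o}[\vec{a}]^{-1}$, and then use an input-side twirl to kill the cross terms in the Pauli expansion of $\rho^T$, leaving $\frac{1}{2^{2n}}\sum_{\vec{a}_i}W_{\mathcal{E}_i}[\vec{a}_i]^{-1}W_{\rho}[\vec{a}_i]$. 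The only cosmetic differences are that you make the probability bookkeeping (the $2^n$ factor in $P(\vec{b}_o|\vec{b}_i,U_i,U_o)$) and the input-side cross-term cancellation explicit, both of which the paper leaves implicit.
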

\begin{proof}
The proof is similar to the proof of Proposition \ref{prop:apen2}, but we present it here for completeness.
\begin{align*}
\norm{\rho\ot P_{\vec{a}}}_{\mathcal{E}_{\mathcal{T}}}
&=\mathbb{E}_{\mathcal{E}}
\Tr{\mathcal{M}^{-1}_{i,o}[\rho^T\ot P_{\vec{a}}] \hat{\sigma}}^2\\
&=
\mathbb{E}_{U_i}\mathbb{E}_{U_o}
\sum_{b_i,b_o}
\Tr{\mathcal{M}^{-1}_{i,o}[\rho^T\ot P_{\vec{a}}] \hat{\sigma}}^2
\Tr{\hat{\sigma}\mathcal{J}[\mathcal{T}]}
\\
&=\mathbb{E}_{U_i}\mathbb{E}_{U_o}
\sum_{b_i,b_o}
|\Tr{\mathcal{M}^{-1}_{i}[\rho^T]\hat{\sigma}_i}|^2|\Tr{\mathcal{M}^{-1}_o[P_{\vec{a}}] \hat{\sigma}_o}|^2
\Tr{\hat{\sigma}_i\ot \hat{\sigma}_o\mathcal{J}[\mathcal{T}]}\\
&=\mathbb{E}_{U_i}\mathbb{E}_{U_o}
\sum_{b_i,b_o}\mathbb{E}_{P_{\vec{c}_o}}
|\Tr{\mathcal{M}^{-1}_{i}[\rho^T]\hat{\sigma}_i}|^2|\Tr{\mathcal{M}^{-1}_o[P_{\vec{a}}]P_{\vec{c}_o} \hat{\sigma}_oP_{\vec{c}_o}}|^2
\Tr{\hat{\sigma}_i\ot P_{\vec{c}_o} \hat{\sigma}_oP_{\vec{c}_o}\mathcal{J}[\mathcal{T}]}\\
&=\mathbb{E}_{U_i}\mathbb{E}_{U_o}
\sum_{b_i,b_o}
|\Tr{\mathcal{M}^{-1}_{i}[\rho^T]\hat{\sigma}_i}|^2|\Tr{\mathcal{M}^{-1}_o[P_{\vec{a}}] \hat{\sigma}_o}|^2
\mathbb{E}_{P_{\vec{c}_o}}\Tr{\hat{\sigma}_i\ot P_{\vec{c}_o} \hat{\sigma}_oP_{\vec{c}_o}\mathcal{J}[\mathcal{T}]}\\
&=\mathbb{E}_{U_i}\mathbb{E}_{U_o}
\sum_{b_i,b_o}
|\Tr{\mathcal{M}^{-1}_{i}[\rho^T]\hat{\sigma}_i}|^2|\Tr{\mathcal{M}^{-1}_o[P_{\vec{a}}] \hat{\sigma}_o}|^2
\Tr{\hat{\sigma}_i\ot \frac{\mathbb{I}}{2^n}\mathcal{J}[\mathcal{T}]}\\
&=\frac{1}{2^{2n}}\mathbb{E}_{U_i}\mathbb{E}_{U_o}
\sum_{b_i,b_o}
|\Tr{\mathcal{M}^{-1}_{i}[\rho^T]\hat{\sigma}_i}|^2|\Tr{\mathcal{M}^{-1}_o[P_{\vec{a}}] \hat{\sigma}_o}|^2\\
&=W_{\mathcal{E}_o}[\vec{a}]^{-1}
\frac{1}{2^n}\mathbb{E}_{U_i}
\sum_{b_i}
|\Tr{\mathcal{M}^{-1}_{i}[\rho^T]\hat{\sigma}_i}|^2\\
&=W_{\mathcal{E}_o}[\vec{a}]^{-1}
\frac{1}{2^{3n}}\mathbb{E}_{U_i}
\sum_{b_i}\sum_{\vec{a}_i,\vec{c}_i}
\Tr{\mathcal{M}^{-1}_{i}[\rho^T]P_{\vec{a}_i}}\Tr{P_{\vec{a}_i}\hat{\sigma}_i}
\Tr{\mathcal{M}^{-1}_{i}[\rho^T]P_{\vec{c}_i}}\Tr{P_{\vec{c}_i}\hat{\sigma}_i}
\\
&=W_{\mathcal{E}_o}[\vec{a}]^{-1}
\frac{1}{2^{3n}}\mathbb{E}_{U_i}
\sum_{b_i}\sum_{\vec{a}_i}
|\Tr{\mathcal{M}^{-1}_{i}[\rho^T]P_{\vec{a}_i}}|^2|\Tr{P_{\vec{a}_i}\hat{\sigma}_i}|^2
\\
&=W_{\mathcal{E}_o}[\vec{a}]^{-1}
\frac{1}{2^{3n}}\mathbb{E}_{U_i}
\sum_{b_i}\sum_{\vec{a}_i}W_{\mathcal{E}_i}[\vec{a}_i]^{-2}
|\Tr{\rho^TP_{\vec{a}_i}}|^2|\Tr{P_{\vec{a}_i}\hat{\sigma}_i}|^2
\\
&=W_{\mathcal{E}_o}[\vec{a}]^{-1}
\frac{1}{2^{2n}}\sum_{\vec{a}_i}W_{\mathcal{E}_i}[\vec{a}_i]^{-1}
W_{\rho}[\vec{a}].
\end{align*}

\end{proof}

\end{document}